\definecolor{light}{gray}{.65}
\let\pdfoutput=\undefined\fi
\chardef\@x10\chardef\@xv60
\def\tcitime{
\def\@time{%
  \@minute\time\@hour\@minute\divide\@hour\@xv
  \ifnum\@hour<\@x 0\fi\the\@hour:%
  \multiply\@hour\@xv\advance\@minute-\@hour
  \ifnum\@minute<\@x 0\fi\the\@minute
  }}%
\def\x@hyperref#1#2#3{%
   \catcode`\~ = 12
   \catcode`\$ = 12
   \catcode`\_ = 12
   \catcode`\# = 12
   \catcode`\& = 12
   \catcode`\% = 12
   \y@hyperref{#1}{#2}{#3}%
}
\def\y@hyperref#1#2#3#4{%
   #2\ref{#4}#3
   \catcode`\~ = 13
   \catcode`\$ = 3
   \catcode`\_ = 8
   \catcode`\# = 6
   \catcode`\& = 4
   \catcode`\% = 14
}
\def\QCTOpt[#1]#2{%
  \def\QCTOptB{#1}
  \def\QCTOptA{#2}
}
\def\QCTNOpt#1{%
  \def\QCTOptA{#1}
  \let\QCTOptB\empty
}
\def\Qct{%
  \@ifnextchar[{%
    \QCTOpt}{\QCTNOpt}
}
\def\QCBOpt[#1]#2{%
  \def\QCBOptB{#1}%
  \def\QCBOptA{#2}%
}
\def\QCBNOpt#1{%
  \def\QCBOptA{#1}%
  \let\QCBOptB\empty
}
\def\Qcb{%
  \@ifnextchar[{%
    \QCBOpt}{\QCBNOpt}%
}
\def\PrepCapArgs{%
  \ifx\QCBOptA\empty
    \ifx\QCTOptA\empty
      {}%
    \else
      \ifx\QCTOptB\empty
        {\QCTOptA}%
      \else
        [\QCTOptB]{\QCTOptA}%
      \fi
    \fi
  \else
    \ifx\QCBOptA\empty
      {}%
    \else
      \ifx\QCBOptB\empty
        {\QCBOptA}%
      \else
        [\QCBOptB]{\QCBOptA}%
      \fi
    \fi
  \fi
}
\def\GRAPHICSPS#1{%
 \ifcase\GRAPHICSTYPE
   \special{ps: #1}%
 \or
   \special{language "PS", include "#1"}%
 \fi
}%
\def\graffile#1#2#3#4{%
    \bgroup
	   \@inlabelfalse
       \leavevmode
       \@ifundefined{bbl@deactivate}{\def~{\string~}}{\activesoff}%
        \raise -#4 \BOXTHEFRAME{%
           \hbox to #2{\raise #3\hbox to #2{\null #1\hfil}}}%
    \egroup
}%
\def\draftbox#1#2#3#4{%
 \leavevmode\raise -#4 \hbox{%
  \frame{\rlap{\protect\tiny #1}\hbox to #2%
   {\vrule height#3 width\z@ depth\z@\hfil}%
  }%
 }%
}%
\let\nographics=\@msidraft
\newif\ifwasdraft
\def\GRAPHIC#1#2#3#4#5{%
   \ifnum\@msidraft=\@ne\draftbox{#2}{#3}{#4}{#5}%
   \else\graffile{#1}{#3}{#4}{#5}%
   \fi
}
\def\addtoLaTeXparams#1{%
    \edef\LaTeXparams{\LaTeXparams #1}}%
\newif\ifBoxFrame \BoxFramefalse
\newif\ifOverFrame \OverFramefalse
\newif\ifUnderFrame \UnderFramefalse
\def\BOXTHEFRAME#1{%
   \hbox{%
      \ifBoxFrame
         \frame{#1}%
      \else
         {#1}%
      \fi
   }%
}
\def\doFRAMEparams#1{\BoxFramefalse\OverFramefalse\UnderFramefalse\readFRAMEparams#1\end}%
\def\readFRAMEparams#1{%
 \ifx#1\end%
  \let\next=\relax
  \else
  \ifx#1i\dispkind=\z@\fi
  \ifx#1d\dispkind=\@ne\fi
  \ifx#1f\dispkind=\tw@\fi
  \ifx#1t\addtoLaTeXparams{t}\fi
  \ifx#1b\addtoLaTeXparams{b}\fi
  \ifx#1p\addtoLaTeXparams{p}\fi
  \ifx#1h\addtoLaTeXparams{h}\fi
  \ifx#1X\BoxFrametrue\fi
  \ifx#1O\OverFrametrue\fi
  \ifx#1U\UnderFrametrue\fi
  \ifx#1w
    \ifnum\@msidraft=1\wasdrafttrue\else\wasdraftfalse\fi
    \@msidraft=\@ne
  \fi
  \let\next=\readFRAMEparams
  \fi
 \next
 }%
\def\IFRAME#1#2#3#4#5#6{%
      \bgroup
      \let\QCTOptA\empty
      \let\QCTOptB\empty
      \let\QCBOptA\empty
      \let\QCBOptB\empty
      #6%
      \parindent=0pt
      \leftskip=0pt
      \rightskip=0pt
      \setbox0=\hbox{\QCBOptA}%
      \@tempdima=#1\relax
      \ifOverFrame
          \typeout{This is not implemented yet}%
          \show\HELP
      \else
         \ifdim\wd0>\@tempdima
            \advance\@tempdima by \@tempdima
            \ifdim\wd0 >\@tempdima
               \setbox1 =\vbox{%
                  \unskip\hbox to \@tempdima{\hfill\GRAPHIC{#5}{#4}{#1}{#2}{#3}\hfill}%
                  \unskip\hbox to \@tempdima{\parbox[b]{\@tempdima}{\QCBOptA}}%
               }%
               \wd1=\@tempdima
            \else
               \textwidth=\wd0
               \setbox1 =\vbox{%
                 \noindent\hbox to \wd0{\hfill\GRAPHIC{#5}{#4}{#1}{#2}{#3}\hfill}\\%
                 \noindent\hbox{\QCBOptA}%
               }%
               \wd1=\wd0
            \fi
         \else
            \ifdim\wd0>0pt
              \hsize=\@tempdima
              \setbox1=\vbox{%
                \unskip\GRAPHIC{#5}{#4}{#1}{#2}{0pt}%
                \break
                \unskip\hbox to \@tempdima{\hfill \QCBOptA\hfill}%
              }%
              \wd1=\@tempdima
           \else
              \hsize=\@tempdima
              \setbox1=\vbox{%
                \unskip\GRAPHIC{#5}{#4}{#1}{#2}{0pt}%
              }%
              \wd1=\@tempdima
           \fi
         \fi
         \@tempdimb=\ht1
         \advance\@tempdimb by -#2
         \advance\@tempdimb by #3
         \leavevmode
         \raise -\@tempdimb \hbox{\box1}%
      \fi
      \egroup%
}%
\def\DFRAME#1#2#3#4#5{%
  \vspace\topsep
  \hfil\break
  \bgroup
     \leftskip\@flushglue
	 \rightskip\@flushglue
	 \parindent\z@
	 \parfillskip\z@skip
     \let\QCTOptA\empty
     \let\QCTOptB\empty
     \let\QCBOptA\empty
     \let\QCBOptB\empty
	 \vbox\bgroup
        \ifOverFrame 
           #5\QCTOptA\par
        \fi
        \GRAPHIC{#4}{#3}{#1}{#2}{\z@}%
        \ifUnderFrame 
           \break#5\QCBOptA
        \fi
	 \egroup
  \egroup
  \vspace\topsep
  \break
}%
\def\FFRAME#1#2#3#4#5#6#7{%
  \@ifundefined{floatstyle}
    {
     \begin{figure}[#1]%
    }
    {
	 \ifx#1h
      \begin{figure}[H]%
	 \else
      \begin{figure}[#1]%
	 \fi
	}
  \let\QCTOptA\empty
  \let\QCTOptB\empty
  \let\QCBOptA\empty
  \let\QCBOptB\empty
  \ifOverFrame
    #4
    \ifx\QCTOptA\empty
    \else
      \ifx\QCTOptB\empty
        \caption{\QCTOptA}%
      \else
        \caption[\QCTOptB]{\QCTOptA}%
      \fi
    \fi
    \ifUnderFrame\else
      \label{#5}%
    \fi
  \else
    \UnderFrametrue%
  \fi
  \begin{center}\GRAPHIC{#7}{#6}{#2}{#3}{\z@}\end{center}%
  \vspace{-11pt}
  \ifUnderFrame
    #4
    \ifx\QCBOptA\empty
      \caption{}%
    \else
      \ifx\QCBOptB\empty
        \caption{\QCBOptA}%
      \else
        \caption[\QCBOptB]{\QCBOptA}%
      \fi
    \fi
    \label{#5}%
  \fi
  \end{figure}%
 }%
\def\makeactives{
  \catcode`\"=\active
  \catcode`\;=\active
  \catcode`\:=\active
  \catcode`\'=\active
  \catcode`\~=\active
}
   \gdef\activesoff{%
      \def"{\string"}%
      \def;{\string;}%
      \def:{\string:}%
      \def'{\string'}%
      \def~{\string~}%
    }
\def\FRAME#1#2#3#4#5#6#7#8{%
 \bgroup
 \ifnum\@msidraft=\@ne
   \wasdrafttrue
 \else
   \wasdraftfalse%
 \fi
 \def\LaTeXparams{}%
 \dispkind=\z@
 \def\LaTeXparams{}%
 \doFRAMEparams{#1}%
 \ifnum\dispkind=\z@\IFRAME{#2}{#3}{#4}{#7}{#8}{#5}\else
  \ifnum\dispkind=\@ne\DFRAME{#2}{#3}{#7}{#8}{#5}\else
   \ifnum\dispkind=\tw@
    \edef\@tempa{\noexpand\FFRAME{\LaTeXparams}}%
    \@tempa{#2}{#3}{#5}{#6}{#7}{#8}%
    \fi
   \fi
  \fi
  \ifwasdraft\@msidraft=1\else\@msidraft=0\fi{}%
  \egroup
 }%
\def\TEXUX#1{"texux"}
\long\def\QQQ#1#2{%
     \long\expandafter\def\csname#1\endcsname{#2}}%
\long\def\QQA#1#2{}%
\def\QTR#1#2{{\csname#1\endcsname {#2}}}%
\def\EXPAND#1[#2]#3{}%
\def\NOEXPAND#1[#2]#3{}%
\def\LaTeXparent#1{}%
\def\ChildStyles#1{}%
\def\ChildDefaults#1{}%
\def\QTagDef#1#2#3{}%
  \providecommand{\UNICODE}[2][]{\protect\rule{.1in}{.1in}}
  \providecommand{\U}[1]{\protect\rule{.1in}{.1in}}
\def\QQfnmark#1{\footnotemark}
 \def\abstract{%
  \if@twocolumn
   \section*{Abstract (Not appropriate in this style!)}%
   \else \small 
   \begin{center}{\bf Abstract\vspace{-.5em}\vspace{\z@}}\end{center}%
   \quotation 
   \fi
  }%
   \def\registered{\relax\ifmmode{}\r@gistered
                    \else$\m@th\r@gistered$\fi}%
 \def\r@gistered{^{\ooalign
  {\hfil\raise.07ex\hbox{$\scriptstyle\rm\text{R}$}\hfil\crcr
  \mathhexbox20D}}}}{}%
\newdimen\theight
\def\newfmtname{LaTeX2e}
  \DeclareOldFontCommand{\rm}{\normalfont\rmfamily}{\mathrm}
  \DeclareOldFontCommand{\sf}{\normalfont\sffamily}{\mathsf}
  \DeclareOldFontCommand{\tt}{\normalfont\ttfamily}{\mathtt}
  \DeclareOldFontCommand{\bf}{\normalfont\bfseries}{\mathbf}
  \DeclareOldFontCommand{\it}{\normalfont\itshape}{\mathit}
  \DeclareOldFontCommand{\sl}{\normalfont\slshape}{\@nomath\sl}
  \DeclareOldFontCommand{\sc}{\normalfont\scshape}{\@nomath\sc}
\def\alpha{{\Greekmath 010B}}%
\def\beta{{\Greekmath 010C}}%
\def\gamma{{\Greekmath 010D}}%
\def\delta{{\Greekmath 010E}}%
\def\epsilon{{\Greekmath 010F}}%
\def\zeta{{\Greekmath 0110}}%
\def\eta{{\Greekmath 0111}}%
\def\theta{{\Greekmath 0112}}%
\def\iota{{\Greekmath 0113}}%
\def\kappa{{\Greekmath 0114}}%
\def\lambda{{\Greekmath 0115}}%
\def\mu{{\Greekmath 0116}}%
\def\nu{{\Greekmath 0117}}%
\def\xi{{\Greekmath 0118}}%
\def\pi{{\Greekmath 0119}}%
\def\rho{{\Greekmath 011A}}%
\def\sigma{{\Greekmath 011B}}%
\def\tau{{\Greekmath 011C}}%
\def\upsilon{{\Greekmath 011D}}%
\def\phi{{\Greekmath 011E}}%
\def\chi{{\Greekmath 011F}}%
\def\psi{{\Greekmath 0120}}%
\def\omega{{\Greekmath 0121}}%
\def\varepsilon{{\Greekmath 0122}}%
\def\vartheta{{\Greekmath 0123}}%
\def\varpi{{\Greekmath 0124}}%
\def\varrho{{\Greekmath 0125}}%
\def\varsigma{{\Greekmath 0126}}%
\def\varphi{{\Greekmath 0127}}%
\def\nabla{{\Greekmath 0272}}
\def\FindBoldGroup{%
   {\setbox0=\hbox{$\mathbf{x\global\edef\theboldgroup{\the\mathgroup}}$}}%
}
\def\Greekmath#1#2#3#4{%
    \if@compatibility
        \ifnum\mathgroup=\symbold
           \mathchoice{\mbox{\boldmath$\displaystyle\mathchar"#1#2#3#4$}}%
                      {\mbox{\boldmath$\textstyle\mathchar"#1#2#3#4$}}%
                      {\mbox{\boldmath$\scriptstyle\mathchar"#1#2#3#4$}}%
                      {\mbox{\boldmath$\scriptscriptstyle\mathchar"#1#2#3#4$}}%
        \else
           \mathchar"#1#2#3#4%
        \fi 
    \else 
        \FindBoldGroup
        \ifnum\mathgroup=\theboldgroup 
           \mathchoice{\mbox{\boldmath$\displaystyle\mathchar"#1#2#3#4$}}%
                      {\mbox{\boldmath$\textstyle\mathchar"#1#2#3#4$}}%
                      {\mbox{\boldmath$\scriptstyle\mathchar"#1#2#3#4$}}%
                      {\mbox{\boldmath$\scriptscriptstyle\mathchar"#1#2#3#4$}}%
        \else
           \mathchar"#1#2#3#4%
        \fi     	    
	  \fi}
\newif\ifGreekBold  \GreekBoldfalse
\let\SAVEPBF=\pbf
\def\pbf{\GreekBoldtrue\SAVEPBF}%
  \newcounter{equationnumber}  
  \def\mathletters{%
     \addtocounter{equation}{1}
     \edef\@currentlabel{\theequation}%
     \setcounter{equationnumber}{\c@equation}
     \setcounter{equation}{0}%
     \edef\theequation{\@currentlabel\noexpand\alph{equation}}%
  }
    \def\BibTeX{{\rm B\kern-.05em{\sc i\kern-.025em b}\kern-.08em
                 T\kern-.1667em\lower.7ex\hbox{E}\kern-.125emX}}}{}%
\def\AmS{{\protect\usefont{OMS}{cmsy}{m}{n}%
                A\kern-.1667em\lower.5ex\hbox{M}\kern-.125emS}}}{}%
\def\@@eqncr{\let\@tempa\relax
    \ifcase\@eqcnt \def\@tempa{& & &}\or \def\@tempa{& &}%
      \else \def\@tempa{&}\fi
     \@tempa
     \if@eqnsw
        \iftag@
           \@taggnum
        \else
           \@eqnnum\stepcounter{equation}%
        \fi
     \fi
     \global\tag@false
     \global\@eqnswtrue
     \global\@eqcnt\z@\cr}
\def\TCItag{\@ifnextchar*{\@TCItagstar}{\@TCItag}}
\def\@TCItag#1{%
    \global\tag@true
    \global\def\@taggnum{(#1)}%
    \global\def\@currentlabel{#1}}
\def\@TCItagstar*#1{%
    \global\tag@true
    \global\def\@taggnum{#1}%
    \global\def\@currentlabel{#1}}
\def\tint{\msi@int\textstyle\int}%
\def\tiint{\msi@int\textstyle\iint}%
\def\tiiint{\msi@int\textstyle\iiint}%
\def\tiiiint{\msi@int\textstyle\iiiint}%
\def\tidotsint{\msi@int\textstyle\idotsint}%
\def\toint{\msi@int\textstyle\oint}%
\newtoks\temptoksa
\newtoks\temptoksb
\newtoks\temptoksc
\def\msi@int#1#2{%
 \def\@temp{{#1#2\the\temptoksc_{\the\temptoksa}^{\the\temptoksb}}}%
 \futurelet\@nextcs
 \@int
}
\def\@int{%
   \ifx\@nextcs\limits
      \typeout{Found limits}%
      \temptoksc={\limits}%
	  \let\@next\@intgobble%
   \else\ifx\@nextcs\nolimits
      \typeout{Found nolimits}%
      \temptoksc={\nolimits}%
	  \let\@next\@intgobble%
   \else
      \typeout{Did not find limits or no limits}%
      \temptoksc={}%
      \let\@next\msi@limits%
   \fi\fi
   \@next   
}%
\def\@intgobble#1{%
   \typeout{arg is #1}%
   \msi@limits
}
\def\msi@limits{%
   \temptoksa={}%
   \temptoksb={}%
   \@ifnextchar_{\@limitsa}{\@limitsb}%
}
\def\@limitsa_#1{%
   \temptoksa={#1}%
   \@ifnextchar^{\@limitsc}{\@temp}%
}
\def\@limitsb{%
   \@ifnextchar^{\@limitsc}{\@temp}%
}
\def\@limitsc^#1{%
   \temptoksb={#1}%
   \@ifnextchar_{\@limitsd}{\@temp}%
}
\def\@limitsd_#1{%
   \temptoksa={#1}%
   \@temp
}
\def\dint{\msi@int\displaystyle\int}%
\def\diint{\msi@int\displaystyle\iint}%
\def\diiint{\msi@int\displaystyle\iiint}%
\def\diiiint{\msi@int\displaystyle\iiiint}%
\def\didotsint{\msi@int\displaystyle\idotsint}%
\def\doint{\msi@int\displaystyle\oint}%
\def\ExitTCILatex{\makeatother }
\if@compatibility\message{amsmath already loaded}\fi\aftergroup\ExitTCILatex}
\if@compatibility\message{amstex already loaded}\fi\aftergroup\ExitTCILatex}
\if@compatibility\message{amsgen already loaded}\fi\aftergroup\ExitTCILatex}
\let\DOTSI\relax
\def\RIfM@{\relax\ifmmode}%
\def\FN@{\futurelet\next}%
\def\iint{\DOTSI\intno@\tw@\FN@\ints@}%
\def\iiint{\DOTSI\intno@\thr@@\FN@\ints@}%
\def\iiiint{\DOTSI\intno@4 \FN@\ints@}%
\def\idotsint{\DOTSI\intno@\z@\FN@\ints@}%
\def\ints@{\findlimits@\ints@@}%
\newif\iflimtoken@
\newif\iflimits@
\def\findlimits@{\limtoken@true\ifx\next\limits\limits@true
 \else\ifx\next\nolimits\limits@false\else
 \limtoken@false\ifx\ilimits@\nolimits\limits@false\else
 \ifinner\limits@false\else\limits@true\fi\fi\fi\fi}%
\def\multint@{\int\ifnum\intno@=\z@\intdots@                          
 \else\intkern@\fi                                                    
 \ifnum\intno@>\tw@\int\intkern@\fi                                   
 \ifnum\intno@>\thr@@\int\intkern@\fi                                 
 \int}
\def\multintlimits@{\intop\ifnum\intno@=\z@\intdots@\else\intkern@\fi
 \ifnum\intno@>\tw@\intop\intkern@\fi
 \ifnum\intno@>\thr@@\intop\intkern@\fi\intop}%
\def\intic@{%
    \mathchoice{\hskip.5em}{\hskip.4em}{\hskip.4em}{\hskip.4em}}%
\def\negintic@{\mathchoice
 {\hskip-.5em}{\hskip-.4em}{\hskip-.4em}{\hskip-.4em}}%
\def\ints@@{\iflimtoken@                                              
 \def\ints@@@{\iflimits@\negintic@
   \mathop{\intic@\multintlimits@}\limits                             
  \else\multint@\nolimits\fi                                          
  \eat@}
 \else                                                                
 \def\ints@@@{\iflimits@\negintic@
  \mathop{\intic@\multintlimits@}\limits\else
  \multint@\nolimits\fi}\fi\ints@@@}%
\def\intkern@{\mathchoice{\!\!\!}{\!\!}{\!\!}{\!\!}}%
\def\plaincdots@{\mathinner{\cdotp\cdotp\cdotp}}%
\def\intdots@{\mathchoice{\plaincdots@}%
 {{\cdotp}\mkern1.5mu{\cdotp}\mkern1.5mu{\cdotp}}%
 {{\cdotp}\mkern1mu{\cdotp}\mkern1mu{\cdotp}}%
 {{\cdotp}\mkern1mu{\cdotp}\mkern1mu{\cdotp}}}%
\def\RIfM@{\relax\protect\ifmmode}
\def\text{\RIfM@\expandafter\text@\else\expandafter\mbox\fi}
\let\nfss@text\text
\def\text@#1{\mathchoice
   {\textdef@\displaystyle\f@size{#1}}%
   {\textdef@\textstyle\tf@size{\firstchoice@false #1}}%
   {\textdef@\textstyle\sf@size{\firstchoice@false #1}}%
   {\textdef@\textstyle \ssf@size{\firstchoice@false #1}}%
   \glb@settings}
\def\textdef@#1#2#3{\hbox{{%
                    \everymath{#1}%
                    \let\f@size#2\selectfont
                    #3}}}
\newif\iffirstchoice@
\def\Let@{\relax\iffalse{\fi\let\\=\cr\iffalse}\fi}%
\def\vspace@{\def\vspace##1{\crcr\noalign{\vskip##1\relax}}}%
\def\multilimits@{\bgroup\vspace@\Let@
 \baselineskip\fontdimen10 \scriptfont\tw@
 \advance\baselineskip\fontdimen12 \scriptfont\tw@
 \lineskip\thr@@\fontdimen8 \scriptfont\thr@@
 \lineskiplimit\lineskip
 \vbox\bgroup\ialign\bgroup\hfil$\m@th\scriptstyle{##}$\hfil\crcr}%
\def\Sb{_\multilimits@}%
\def\endSb{\crcr\egroup\egroup\egroup}%
\def\Sp{^\multilimits@}%
\newdimen\ex@
\def\rightarrowfill@#1{$#1\m@th\mathord-\mkern-6mu\cleaders
 \hbox{$#1\mkern-2mu\mathord-\mkern-2mu$}\hfill
 \mkern-6mu\mathord\rightarrow$}%
\def\leftarrowfill@#1{$#1\m@th\mathord\leftarrow\mkern-6mu\cleaders
 \hbox{$#1\mkern-2mu\mathord-\mkern-2mu$}\hfill\mkern-6mu\mathord-$}%
\def\leftrightarrowfill@#1{$#1\m@th\mathord\leftarrow
\mkern-6mu\cleaders
 \hbox{$#1\mkern-2mu\mathord-\mkern-2mu$}\hfill
 \mkern-6mu\mathord\rightarrow$}%
\def\overrightarrow{\mathpalette\overrightarrow@}%
\def\overrightarrow@#1#2{\vbox{\ialign{##\crcr\rightarrowfill@#1\crcr
 \noalign{\kern-\ex@\nointerlineskip}$\m@th\hfil#1#2\hfil$\crcr}}}%
\def\overleftarrow{\mathpalette\overleftarrow@}%
\def\overleftarrow@#1#2{\vbox{\ialign{##\crcr\leftarrowfill@#1\crcr
 \noalign{\kern-\ex@\nointerlineskip}$\m@th\hfil#1#2\hfil$\crcr}}}%
\def\overleftrightarrow{\mathpalette\overleftrightarrow@}%
\def\overleftrightarrow@#1#2{\vbox{\ialign{##\crcr
   \leftrightarrowfill@#1\crcr
 \noalign{\kern-\ex@\nointerlineskip}$\m@th\hfil#1#2\hfil$\crcr}}}%
\def\underrightarrow{\mathpalette\underrightarrow@}%
\def\underrightarrow@#1#2{\vtop{\ialign{##\crcr$\m@th\hfil#1#2\hfil
  $\crcr\noalign{\nointerlineskip}\rightarrowfill@#1\crcr}}}%
\def\underleftarrow{\mathpalette\underleftarrow@}%
\def\underleftarrow@#1#2{\vtop{\ialign{##\crcr$\m@th\hfil#1#2\hfil
  $\crcr\noalign{\nointerlineskip}\leftarrowfill@#1\crcr}}}%
\def\underleftrightarrow{\mathpalette\underleftrightarrow@}%
\def\underleftrightarrow@#1#2{\vtop{\ialign{##\crcr$\m@th
  \hfil#1#2\hfil$\crcr
 \noalign{\nointerlineskip}\leftrightarrowfill@#1\crcr}}}%
\def\qopnamewl@#1{\mathop{\operator@font#1}\nlimits@}
\let\nlimits@\displaylimits
\def\setboxz@h{\setbox\z@\hbox}
\def\varlim@#1#2{\mathop{\vtop{\ialign{##\crcr
 \hfil$#1\m@th\operator@font lim$\hfil\crcr
 \noalign{\nointerlineskip}#2#1\crcr
 \noalign{\nointerlineskip\kern-\ex@}\crcr}}}}
 \def\rightarrowfill@#1{\m@th\setboxz@h{$#1-$}\ht\z@\z@
  $#1\copy\z@\mkern-6mu\cleaders
  \hbox{$#1\mkern-2mu\box\z@\mkern-2mu$}\hfill
  \mkern-6mu\mathord\rightarrow$}
\def\leftarrowfill@#1{\m@th\setboxz@h{$#1-$}\ht\z@\z@
  $#1\mathord\leftarrow\mkern-6mu\cleaders
  \hbox{$#1\mkern-2mu\copy\z@\mkern-2mu$}\hfill
  \mkern-6mu\box\z@$}
\def\projlim{\qopnamewl@{proj\,lim}}
\def\injlim{\qopnamewl@{inj\,lim}}
\def\varinjlim{\mathpalette\varlim@\rightarrowfill@}
\def\varprojlim{\mathpalette\varlim@\leftarrowfill@}
\def\varliminf{\mathpalette\varliminf@{}}
\def\varliminf@#1{\mathop{\underline{\vrule\@depth.2\ex@\@width\z@
   \hbox{$#1\m@th\operator@font lim$}}}}
\def\varlimsup{\mathpalette\varlimsup@{}}
\def\varlimsup@#1{\mathop{\overline
  {\hbox{$#1\m@th\operator@font lim$}}}}
\def\align{\@verbatim \frenchspacing\@vobeyspaces \@alignverbatim
You are using the "align" environment in a style in which it is not defined.}
\let\csname endalign*\endcsname =\endtrivlist
\def\alignat{\@verbatim \frenchspacing\@vobeyspaces \@alignatverbatim
You are using the "alignat" environment in a style in which it is not defined.}
\let\csname endalignat*\endcsname =\endtrivlist
\def\xalignat{\@verbatim \frenchspacing\@vobeyspaces \@xalignatverbatim
You are using the "xalignat" environment in a style in which it is not defined.}
\let\csname endxalignat*\endcsname =\endtrivlist
\def\gather{\@verbatim \frenchspacing\@vobeyspaces \@gatherverbatim
You are using the "gather" environment in a style in which it is not defined.}
\let\csname endgather*\endcsname =\endtrivlist
\def\multiline{\@verbatim \frenchspacing\@vobeyspaces \@multilineverbatim
You are using the "multiline" environment in a style in which it is not defined.}
\let\csname endmultiline*\endcsname =\endtrivlist
\def\arrax{\@verbatim \frenchspacing\@vobeyspaces \@arraxverbatim
You are using a type of "array" construct that is only allowed in AmS-LaTeX.}
\def\tabulax{\@verbatim \frenchspacing\@vobeyspaces \@tabulaxverbatim
You are using a type of "tabular" construct that is only allowed in AmS-LaTeX.}
\let\csname endarrax*\endcsname =\endtrivlist
\let\csname endtabulax*\endcsname =\endtrivlist
 \def\endequation{%
     \ifmmode\ifinner 
      \iftag@
        \addtocounter{equation}{-1} 
        $\hfil
           \displaywidth\linewidth\@taggnum\egroup \endtrivlist
        \global\tag@false
        \global\@ignoretrue   
      \else
        $\hfil
           \displaywidth\linewidth\@eqnnum\egroup \endtrivlist
        \global\tag@false
        \global\@ignoretrue 
      \fi
     \else   
      \iftag@
        \addtocounter{equation}{-1} 
        \eqno \hbox{\@taggnum}
        \global\tag@false%
        $$\global\@ignoretrue
      \else
        \eqno \hbox{\@eqnnum}
        $$\global\@ignoretrue
      \fi
     \fi\fi
 } 
 \newif\iftag@ \tag@false
 \def\TCItag{\@ifnextchar*{\@TCItagstar}{\@TCItag}}
 \def\@TCItag#1{%
     \global\tag@true
     \global\def\@taggnum{(#1)}%
     \global\def\@currentlabel{#1}}
 \def\@TCItagstar*#1{%
     \global\tag@true
     \global\def\@taggnum{#1}%
     \global\def\@currentlabel{#1}}
     \def\tag{\@ifnextchar*{\@tagstar}{\@tag}}
     \def\@tag#1{%
         \global\tag@true
         \global\def\@taggnum{(#1)}}
     \def\@tagstar*#1{%
         \global\tag@true
         \global\def\@taggnum{#1}}
\newtheorem{Rem}{Remark}
\begin{document}

\begin{frontmatter}

\title{Control theoretically explainable application of autoencoder methods to fault detection in nonlinear dynamic systems}

\thanks[footnoteinfo]{This work has been supported by the National Natural Science Foundation of
China under Grants 62073029 and 62173349.}

\author[linlin]{Linlin Li} \ead{linlin.li@ustb.edu.cn},  
\author[Ding]{Steven X. Ding}  \ead{steven.ding@uni-due.de}, 
\author[Ding]{Ketian Liang}  \ead{ketian.liang@uni-due.de}, 
\author[Chen]{Zhiwen Chen}  \ead{zhiwen.chen@csu.edu.cn}, and
\author[Xue]{Ting Xue}  \ead{xuetbuaa@126.com}

\address[linlin]{School of Automation and Electrical Engineering, University of Science and Technology Beijing, 100083 Beijing, China}
\address[Ding]{Institute for Automatic Control and Complex Systems, University of Duisburg-Essen, 47057 Duisburg, Germany}
\address[Chen]{Key Laboratory of Energy Saving Control and Safety
Monitoring for Rail Transportation of Hunan Provincial, School of
Automation, Central South University, Changsha 410083, China}
\address[Xue]{College of Electrical Engineering and Automation,
Shandong University of Science and Technology, Qingdao 266590, China}

 \begin{keyword}                          
Fault detection; autoencoder; system image representation; lossless information compression; minimal sufficient statistic
\end{keyword}                             

\begin{abstract}                          
This paper is dedicated to control theoretically explainable application of
autoencoders to optimal fault detection in nonlinear dynamic systems.
Autoencoder-based learning is a standard machine learning method and widely applied for fault (anomaly) detection and classification. In the context of representation learning, the so-called latent (hidden) variable plays an important role towards an optimal fault detection. In ideal case, the latent variable should be a minimal sufficient statistic. The existing
autoencoder-based fault detection schemes are mainly application-oriented, and few efforts have been devoted to optimal autoencoder-based fault detection and explainable applications. The main objective of our work is to establish a framework for learning autoencoder-based optimal fault detection in nonlinear dynamic systems. To this aim, a process model form for dynamic systems is firstly introduced with the aid of control theory, which also leads to a clear system interpretation of the latent variable. The major efforts are made on the development of a control theoretic solution to the optimal fault detection problem, in which an analog concept to minimal sufficient statistic, the so-called lossless information compression, is introduced and proven for dynamic systems and fault detection specifications. In particular, the existence conditions for such a latent variable are derived, based on which a loss function and further a learning algorithm are developed. This learning algorithm enables optimally training of autoencoders to achieve an optimal fault detection in nonlinear dynamic systems. A case study on three-tank system is given at the end of this paper to illustrate the capability of the proposed autoencoder-based fault detection and to explain the essential role of the latent variable in the proposed fault detection system.

\end{abstract}

\end{frontmatter}

\section{Introduction}
Associated
with increasing demands on production efficiency
and system performance, today's industrial processes are of an extremely
high degree of complexity and nonlinearity. For such type of systems, safety
and reliability are of significant importance, which motivates the
development of fault detection methods \citep{Blanke06,Ding2008}. Reviewing
the publications on fault detection in nonlinear control systems shows that
the observer-based schemes serve as a major methodology \citep%
{TE_2002_FDI_SMO,ZPP_automatica_2010,YDL_SCL_2015,LDQY2017,LDQYX_2017}, and
promise reliable fault detection. The application of the observer-based
fault detection schemes requires a precise physical model of the system
under consideration, which demands for considerable modeling efforts and, in
turn, leads to high engineering costs. This calls for research endeavor to
develop data-driven fault detection approaches \citep{Ding2014}. Among the
involved studies, the subspace methods and multivariable statistic analysis
build the main research stream \citep%
{Huang_2008_book,YDXH_2014survey,Qin-survey2012,CDZLH-CCA-2016,LDYPQ2018}.
Nevertheless, these methods are incapable to handle highly nonlinear
dynamics and thus mainly limited to linear or linearized systems.

In recent years, machine learning (ML) based methods have drawn remarkably
increasing attention in both academic and industrial fields thanks to the
learning capacity and the ability in dealing with nonlinearities by means of
huge amount of process data \citep{Haykin2009}. One of the most popular
application areas of ML methods is classification, to which fault detection
(also known as anomaly detection), as a typical one-class classification
problem, belongs. Towards this end, intensive research efforts have been
made in representation learning \citep{BCV2013}. Roughly speaking, the basic
idea behind ML-based fault detection methods lies in the reconstruction of
process variables in the nominal process operation state. Different from the
model- and observer-based methods, the ML-based reconstruction of the
process variables under consideration is achieved by processing the process
data collected and recorded during fault-free process operations, known as
learning. Unfortunately, the learning capability often suffers considerably
from uncertainties in the collected process data, like disturbances or
variations caused by agings in assets. The so-called autoencoder (AE)
method, an efficient ML-based tool for dimensionality reduction and feature
learning, offers a reasonable and convincing solution to this problem \citep%
{Hinton2006,Goodfellow2016}. An autoencoder consists of an encoder that
compresses the process data into the so-called latent (hidden) variable, and
a decoder that is driven by the latent variable and reconstructs the process
variables reflecting nominal process operations \citep{SXY-TPAMI-2021}. Using
neural networks (NNs), the encoder and decoder are learnt, in optimal case,
in such a way that the latent variable contains exclusive informations
(features) of the nominal process operations, from which the process
variables in the nominal state are then fully reconstructed. The challenging
issue in this learning process is how to learn such an (ideal) latent
variable. In \citep{TZ-ITW-2015}, the concept of information bottleneck has
been introduced, which suggests that all the information from the input
variables required in the reconstructing (estimating) target should be
contained in the learned latent representation in the neural networks. In
the information theoretical framework, \citep{TZ-ITW-2015,Geiger2021} have
proposed to generate latent variable as the so-called minimal sufficient
statistic. On the basis of these concepts, the information plane analysis
has been popularized by Shwartz-Ziv and Tishby, which showcases the
important role of determining the latent representations aiming at high ML
performance \citep{RT-arxiv-2017}. The methods of information plane analysis
have been applied to construction of autoencoders as well \citep%
{YP-NN-2019,TE-NN-2020}.

A review of the existing AE-based fault diagnosis schemes gives the
impression that the main focus of the reported efforts has been on a direct
(and successful) application of the existing AE-algorithms and schemes to
dealing with fault diagnosis issues \citep%
{JIANG2017,Ahmed2021,HZP-CEP-2022,LYY-NCA-2021,RZZ-TII-2020}. In this
regard, the latent variable is generally viewed as features generated by the
learning process, without explainable interpretation and assessment of the
(generated) latent variable with respect to its information quality, e.g.
measured as minimal sufficient statistic. The consequence of such
application-oriented research efforts is that there is a lack of a
methodical framework for explainable applications of AE-based technique to
approach optimal fault detection issues.

A further noteworthy aspect is that few of the existing AE-based fault
detection schemes have been devoted to dynamic systems, as they are known in
control theory and exist widely in industry. Moreover, existing control and
system theoretic knowledge has been rarely utilized in those AE-based fault
detection schemes. It should be noticed that the dynamic systems considered
in our work differ from those typical objects and processes addressed by the
existing AE-based methods generally in

\begin{itemize}
\item their complex dynamics,

\noindent They are not only time evolutionary processes, but also driven by
process input variables, which are operation condition triggered, strongly
time-varying and often in feedback closed-loop configurations;

\item existence of hybrid uncertainties,

\noindent Typical uncertainties in industrial automatic control systems are
disturbances, including (external) unknown inputs, process and measurement
noises, variations in the environment around the process and within the
process e.g. caused by agings in the operational and control assets,
mismatching of embedded control loops, as well as errors generated during
data transmissions among the subsystems over networks.
\end{itemize}

\noindent Although most of the existing observer-based fault detection
methods successfully developed in the past decades \citep%
{TE_2002_FDI_SMO,ZPP_automatica_2010,YDL_SCL_2015,LDQY2017,LDQYX_2017}
cannot be, due to the lack of process models, applied to realize data-driven
fault detection, existing knowledge and ideas would be helpful to develop
capable AE-based methods to approach optimal fault detection in nonlinear
dynamic systems.

Motivated by the above discussions and observations, this paper is devoted
to the research effort of control theoretically guided application of
AE-based methods to approaching optimal fault detection in nonlinear dynamic
systems. The main objectives and the intended contributions are

\begin{itemize}
\item introduction of a process model form for dynamic systems;

\noindent This process model matches the configuration of an autoencoder
with a clear interpretation of the latent variable. To this aim, the
so-called coprime factorization technique will serve as a tool, and the
concepts of system image representation and subspace are introduced.

\item development of a control theoretic solution to optimal fault detection;

\noindent The centerpiece of this solution is the introduction of an analog
concept to minimal sufficient statistic for nonlinear dynamic systems and
learning of the latent variable. In particular, the existence conditions for
such a latent variable will be derived. Based on image representation and
subspace of nonlinear systems, methods of Hamiltonian extension and analysis
of inner systems will be applied.

\item information theoretic study on the proposed latent variable;

\noindent On assumption of a defined probabilistic setting and using the
concept of mutual information, it is proven that the proposed latent
variable is equivalent to a minimal sufficient statistic.

\item construction of an autoencoder;

\noindent With the aid of the control theoretic results as guidelines, an
autoencoder will be learnt that delivers a data-driven solution of the
optimal fault detection problem for nonlinear dynamic systems. The core of
this work is to recast the existence condition of the optimal latent
variable as regularized terms in the loss function for learning the
autoencoder.

\item realization of the autoencoder and test on an experimental system,
analysis of the fault detection performance of the developed autoencoder and
the role of the latent variable in approaching the optimal solution.
\end{itemize}

The paper is organized as follows. The preliminaries and problem formulation
are given in Section \ref{sub2}. Section \ref{sub3} includes the main
results and consists of four parts, (i) the basic ideas and optimal solution
illustrated by means of linear systems and the associated concepts, (ii) the
proof that the proposed latent variable is, in the context of mutual
information, equivalent to a minimal sufficient statistic, (iii) the optimal
solution for fault detection in nonlinear systems, and (iv) realization and
implementation of the optimal solution by means of an autoencoder. Finally,
in Section \ref{sub5}, the results on a case study on the laboratory setup
of a three-tank system are presented and analyzed.

\textbf{Notations}: Throughout this paper, standard notations known in
control theory, and in linear algebra are adopted. In addition, $T^{\sim
}(s)=T^{T}(-s)$ (respectively $T^{\sim }(z)=T^{T}(z^{-1}),$ $T^{\sim
}(e^{j\theta })=T^{T}(e^{-j\theta }):=T^{T}(-\theta )$) denotes the
conjugate of (rational) transfer function matrix $T(s)$ (respectively $T(z),$
$T(e^{j\theta })=T(z)\left\vert _{z=e^{j\theta }}\right. :=T(\theta )$), and 
$\mathcal{L}_{2}$ is the notation of the space of all square
summable/integrable Lebesgue signals (signals with bounded energy) \citep%
{Francis87,Vinnicombe-book}.

\section{Preliminaries and problem formulation\label{sub2}}

\subsection{Basics of data-driven fault detection paradigm for dynamic
systems}

Consider a nonlinear process $\Sigma :\mathcal{L}_{2}\rightarrow \mathcal{L}%
_{2}$, whose nominal dynamic is modelled by%
\begin{equation}
\Sigma :%
\begin{cases}
\dot{x}(t)=f(x(t),u(t)) \\ 
y(t)=c(x(t),u(t)).%
\end{cases}
\label{eq2-1a}
\end{equation}%
Here, $u\in \mathbb{R}^{p},y\in \mathbb{R}^{m},x\in \mathbb{R}^{n}$ denote
the process input, output and state vectors, respectively. $f(\cdot)$ and $%
c(\cdot)$ represent nonlinear continuous functions. Taking into account
possible disturbances in the process dynamic and measurement variables, the
above model is extended to 
\begin{equation}
\Sigma :%
\begin{cases}
\dot{x}(t)=\bar{f}(x(t),u(t),\eta (t)) \\ 
y(t)=\bar{c}(x(t),u(t),\varepsilon (t)),%
\end{cases}
\label{eq2-1b}
\end{equation}%
where $\eta ,\varepsilon $ represent unknown and $\mathcal{L}_{2}$-bounded
signals. $\bar{f}(\cdot)$ and $\bar{c}(\cdot)$ denote nonlinear continuous
functions. Model (\ref{eq2-1b}) represents the process dynamics during
fault-free operations.

As often met in industrial applications, it is supposed, in our subsequent
study, that

\begin{itemize}
\item the model (\ref{eq2-1b}) is unknown, and instead,

\item sufficient process data, $\left( u,y\right) ,$ are collected and
recorded during fault-free operations, and

\item they are available for the purpose of learnling process dynamics
during fault-free operations.
\end{itemize}

The major task of designing and operating a fault detection system is to
detect process operations that lead to a significant deviation of process
performance from its nominal value. It is a well-established paradigm of
approaching data-driven fault detection in dynamic systems with the
following steps and specifications:

\begin{itemize}
\item learn a dynamic system $\Pi $ using the collected fault-free process
data. $\Pi $ is driven by process data $\left( u,y\right) $ and delivers $%
\left( \hat{u},\hat{y}\right) $ serving as an estimate for the process input
and output variables in the nominal operations, i.e. 
\begin{equation}
\left[ 
\begin{array}{c}
\hat{u} \\ 
\hat{y}%
\end{array}
\right] :=\Pi \left( \left[ 
\begin{array}{c}
u \\ 
y%
\end{array}
\right] \right) ;  \label{eq2-3}
\end{equation}

\item learning $\Pi $ should satisfy, at a high probability, the
specifications that (i) for data $\left( u,y\right) $ generated during
fault-free operations, 
\begin{equation}
\left\Vert \left[ 
\begin{array}{c}
u \\ 
y%
\end{array}%
\right] -\left[ 
\begin{array}{c}
\hat{u} \\ 
\hat{y}%
\end{array}%
\right] \right\Vert ^{2}\leq J_{th},  \label{eq2-4}
\end{equation}%
hereby, signal vector 
\begin{equation*}
r:=\left[ 
\begin{array}{c}
u \\ 
y%
\end{array}%
\right] -\left[ 
\begin{array}{c}
\hat{u} \\ 
\hat{y}%
\end{array}%
\right] \in \mathbb{R}^{p+m}
\end{equation*}%
is often called residual (vector), (ii) in case of $\left( u,y\right) $
being generated by a faulty operation, 
\begin{equation*}
\left\Vert r\right\Vert ^{2}>J_{th},
\end{equation*}%
where $\left\Vert \cdot \right\Vert $ denotes a certain signal norm;

\item as a part of the learning process, the so-called threshold $J_{th}$
should be determined so that the following detection logic holds
\begin{equation}
\left\{ 
\begin{array}{l}
\left\Vert r\right\Vert ^{2}\leq J_{th},\text{ during fault-free operations}
\\ 
\left\Vert r\right\Vert ^{2}>J_{th},\text{ in case of faulty operations.}%
\end{array}%
\right.  \label{eq2-2a}
\end{equation}%
\end{itemize}

Fig. \ref{fig1A} schematically sketches the configuration and composition of
such a fault detection system. 
\begin{figure*}[h]
\centering\includegraphics[scale=0.42]{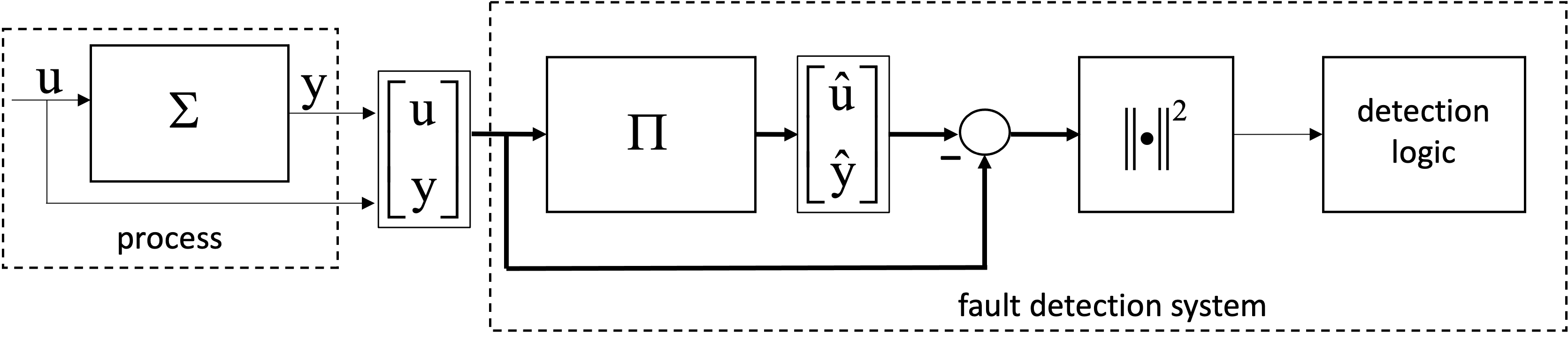}
\caption{Schematical description of a fault detection system. }
\label{fig1A}
\end{figure*}

It is noteworthy that the well-established observer-based fault detection
technique \citep{Ding2008} and its data-driven realization form \citep%
{Ding2014} are a special case of the above setting, in which $\hat{y}$ is
delivered by an observer and $\hat{u}=u.$ Consequently, computing%
\begin{equation*}
\left\Vert \left[ 
\begin{array}{c}
u \\ 
y%
\end{array}%
\right] -\left[ 
\begin{array}{c}
\hat{u} \\ 
\hat{y}%
\end{array}%
\right] \right\Vert \Longrightarrow \left\Vert \left[ 
\begin{array}{c}
u \\ 
y%
\end{array}%
\right] -\left[ 
\begin{array}{c}
u \\ 
\hat{y}%
\end{array}%
\right] \right\Vert =\left\Vert y-\hat{y}\right\Vert
\end{equation*}%
becomes an (output) residual generation and evaluation problem.

In engineering practice, the fault detection performance is mainly assessed
by the fault detectability subject to a user-defined upper bound of a false
alarm rate (FAR). Both in research and application domains, the fault
detectability and false alarm rate are often expressed by the probability of
successful detection of faulty operations and the probability of (false)
alarms in fault-free operations, respectively \cite{Ding2020}. In this
regard, data-driven design of an optimal fault detection system is
formulated as: given sufficient process data $\left( u,y\right) $ collected
during fault-free operations and the evaluation function $\left\Vert \cdot
\right\Vert ^{2}$, finding $\left( \Pi ,J_{th}\right) $ so that the fault
detectability is maximized while satisfying the user-defined FAR requirement.

\subsection{Basics of autoencoder technique and its applications to fault
detection}

Autoencoder methods are a well-established technique in ML \citep%
{Hinton2006,Goodfellow2016}. AE-based fault detection is one of numerous
application areas of AE methods and attracts increasing attention in recent
years \citep%
{JIANG2017,Ahmed2021,HZP-CEP-2022,LYY-NCA-2021,RZZ-TII-2020,YSXJ-KS-2021,TPD-ISA-2021,MYB-ANZCC-2020}%
. The basic idea behind the AE-based fault detection lies in the
reconstruction of process variables corresponding to nominal process
operations. To this aim, an autoencoder is learnt using process operation
data collected during fault-free operations. An AE is composed of two system
parts, an encoder that compresses the process variables into a
(low-dimensional) latent variable, and a decoder that reconstructs the
process variables from the latent variable. To be specific, consider a
process described by $\Sigma $ with process input and output variables $%
\left( u,y\right) .$ Using neural networks, $\mathcal{N}\mathcal{N}_{en}$
and $\mathcal{N}\mathcal{N}_{de}$ with $\theta _{en},\theta _{de}$ as the
associated parameters, the encoder and decoder are constructed as follows: 
\begin{gather}
v=\mathcal{N}\mathcal{N}_{en}\left( \theta _{en},\left[ 
\begin{array}{c}
u \\ 
y%
\end{array}%
\right] \right) ,\left[ 
\begin{array}{c}
\hat{u} \\ 
\hat{y}%
\end{array}%
\right] =\mathcal{N}\mathcal{N}_{de}\left( \theta _{de},v\right)
\Longrightarrow   \notag \\
\left[ 
\begin{array}{c}
\hat{u} \\ 
\hat{y}%
\end{array}%
\right] =\mathcal{N}\mathcal{N}_{de}\left( \theta _{de},\mathcal{N}\mathcal{N%
}_{en}\left( \theta _{en},\left[ 
\begin{array}{c}
u \\ 
y%
\end{array}%
\right] \right) \right) ,
\end{gather}%
where $v$ is the latent variable and $\left[ 
\begin{array}{c}
\hat{u} \\ 
\hat{y}%
\end{array}%
\right] $ denotes the reconstructed process variables that should reflect
the nominal process operations. Fig. \ref{fig2} showcases the structure of
an AE. 
\begin{figure}[h]
\centering\includegraphics[scale=0.49]{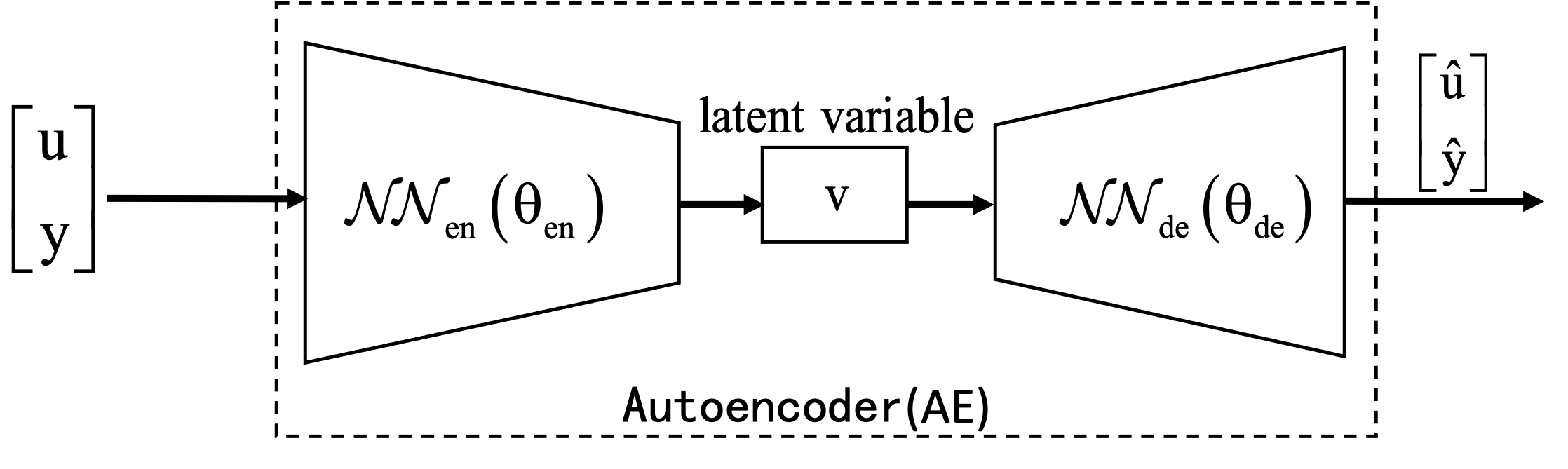}
\caption{The schematic of an autoencoder}
\label{fig2}
\end{figure}

To learn an AE, a standard and basic loss function is the squared
reconstruction error, for instance, defined by 
\begin{align}
& \mathcal{L}(\theta _{de},\theta _{en})\hspace{-2pt}=\hspace{-2pt}\frac{1}{M%
}\sum\limits_{i=1}^{M}\hspace{-2pt}\left( \hspace{-2pt}\left[ \hspace{-2pt}
\begin{array}{c}
u^{(i)} \\ 
y^{(i)}%
\end{array}%
\hspace{-2pt}\right] \hspace{-3pt}-\hspace{-3pt}\left[ \hspace{-2pt}
\begin{array}{c}
\hat{u}^{(i)} \\ 
\hat{y}^{(i)}%
\end{array}%
\hspace{-2pt}\right]\hspace{-2pt} \right) ^{T}\hspace{-4pt}\left(\hspace{-2pt} \left[\hspace{-2pt} 
\begin{array}{c}
u^{(i)} \\ 
y^{(i)}%
\end{array}%
\hspace{-2pt}\right] \hspace{-3pt}-\hspace{-3pt}\left[ \hspace{-2pt}
\begin{array}{c}
\hat{u}^{(i)} \\ 
\hat{y}^{(i)}%
\end{array}%
\hspace{-2pt}\right] \hspace{-2pt}\right) \hspace{-2pt},  \notag \\
& \left[ 
\begin{array}{c}
\hat{u}^{(i)} \\ 
\hat{y}^{(i)}%
\end{array}%
\right] =\mathcal{N}\mathcal{N}_{de}\left( \theta _{de},\mathcal{N}\mathcal{N%
}_{en}\left( \theta _{en},\left[ 
\begin{array}{c}
u^{(i)} \\ 
y^{(i)}%
\end{array}%
\right] \right) \right) .  \notag
\end{align}%
Here, $\left[ 
\begin{array}{c}
u^{(i)} \\ 
y^{(i)}%
\end{array}%
\right] $ represents a data sample (or batch) from the data set, and $M$
denotes the number of the data samples (batches) used for the learning
purpose. The parameters $\theta _{en},\theta _{de}$ are to be determined by
solving the optimization problem%
\begin{equation*}
\min_{\theta _{en},\theta _{de}}\mathcal{L}(\theta _{de},\theta _{en}).
\end{equation*}%
For the fault detection purpose, it is natural to use the trained
autoencoder (i) to generate 
\begin{equation*}
\left[ 
\begin{array}{c}
\hat{u} \\ 
\hat{y}%
\end{array}%
\right] =\mathcal{N}\mathcal{N}_{de}\left( \theta _{de},\mathcal{N}\mathcal{N%
}_{en}\left( \theta _{en},\left[ 
\begin{array}{c}
u \\ 
y%
\end{array}%
\right] \right) \right) 
\end{equation*}%
using online data $\left[ 
\begin{array}{c}
u \\ 
y%
\end{array}%
\right] ,$ (ii) to run the evaluation function 
\begin{equation*}
J=\left( \left[ 
\begin{array}{c}
u \\ 
y%
\end{array}%
\right] -\left[ 
\begin{array}{c}
\hat{u} \\ 
\hat{y}%
\end{array}%
\right] \right) ^{T}\left( \left[ 
\begin{array}{c}
u \\ 
y%
\end{array}%
\right] -\left[ 
\begin{array}{c}
\hat{u} \\ 
\hat{y}%
\end{array}%
\right] \right) ,
\end{equation*}%
and finally (iii) to make a decision according to the detection logic 
\begin{equation*}
\begin{cases}
J>J_{th}\Longrightarrow \text{faulty} \\ 
J\leq J_{th}\Longrightarrow \text{fault-free,}%
\end{cases}%
\end{equation*}%
where the threshold $J_{th}$ is set by means of the (minimum) value of the
loss function delivered by training.

\subsection{Problem formulation}

It is a widely recognized and accepted fact that, thanks to the power of
neural networks (including deep NNs) of approximating nonlinear functions
and systems, an autoencoder delivers optimal reconstruction of process
variables with respect to the defined loss function. For instance, when a
static (and statistic) process is under consideration, it has been proven
that an autoencoder with the squared reconstruction error as the loss
function is equivalent to the well-known principle component analysis (PCA)
algorithm, which is widely applied in fault detection and process monitoring
as well. Recently, efforts have been reported on improving fault detection
performance by introducing regularized terms into the loss function to
regularize the latent variable \cite%
{YSXJ-KS-2021,TPD-ISA-2021,MYB-ANZCC-2020}.

Comparing the optimal fault detection problem formulated in Subsection 2.1
and the basic principle of the AE-based fault detection technique introduced
in Subsection 2.2 leads to a convincing conclusion that the AE technique
offers an efficient tool to solve the formulated optimal fault detection
problem. On the other hand, to our best knowledge, no research work and
results have been reported on such a solution. To approach the optimal
solution, a challenging issue is how and to which degree the nominal process
operations can be fully reconstructed by means of the process data,
collected during fault-free operations but corrupted with noises or
operation uncertainties. In fact, this is the major concerning of
observer-based fault detection technique as well, which, as a special form
of the fault detection system (\ref{eq2-3})-(\ref{eq2-4}), is based on the
reconstruction of the process output variable using a nominal process model
and thus considerably suffers from uncertainties. In the ML framework, this
problem is reflected in a different form. Due to the capability of neural
networks to approximate nonlinear functions, overfitting often leads to
lower fault detectability, since uncertain operations would be learnt as a
part of the system dynamics. This issue has been addressed in the context of
the so-called information bottleneck \citep{BCV2013,Geiger2021}, which is
expressed in form of latent variables. An ideal latent variable should be
generated by maximally compressed mapping of the input variable that
preserves as much as possible the information on the output variable,
according to \cite{TZ-ITW-2015}. In the information theoretic framework, 
\cite{TZ-ITW-2015,Geiger2021} have proposed the concept minimal sufficient
statistic, and suggested to optimize autoencoders by generating the latent
variable in the sense of minimal sufficient statistic.

Motivated and inspired by the aforementioned discussions, the following
problems and tasks are formulated for our subsequent study:

\begin{itemize}
\item establishing a framework for the reconstruction of nominal process
variables $\left( u,y\right) $ on the basis of a latent variable. For this
purpose, the so-called system coprime factorization technique, that is
widely applied in observer-based fault diagnosis technique \citep{Ding2020},
will serve as a tool;

\item developing a control theoretic concept analog to minimal sufficient
statistic, which allows a maximal compression of information in the process
data $\left( u,y\right) $ about process nominal operations in terms of the
latent variable and leads to a reconstruction of the nominal operations
without loss of information. Control theoretic and mathematical existence
conditions for such a latent variable should be found;

\item studying on a probabilistic interpretation of the latent variable as a
minimal sufficient statistic with the aid of the established mutual
information concept;

\item constructing and learning autoencoders guided by the derived existence
conditions, which should result in an AE-based solution of the formulated
optimal fault detection problem, and finally,

\item verifying the proposed AE-based optimal fault detection system and, in
comparison with the standard AE-based schemes, analyzing the fault detection
capability.
\end{itemize}

\section{Main results\label{sub3}}

This section is devoted to a control theoretically guided learning of
autoencoders aiming at optimally detecting faults in nonlinear dynamic
systems. To this end, we first introduce and illustrate the basic idea in
the well-established framework of linear system theory as well as its
information theoretic interpretation. It is followed by a study on the
extension of the basic idea to nonlinear dynamic systems, and finally its
autoencoder-based realization.

\subsection{Introduction of the basic idea}

Consider a linear time invariant (LTI) system modelled by a transfer
function matrix $G(s),$ whose minimal state space representation is given by%
\begin{equation*}
G:%
\begin{cases}
\dot{x}(t)=Ax(t)+Bu(t) \\ 
y(t)=Cx(t)+Du(t),%
\end{cases}%
\end{equation*}%
where $A,B,C,D$ are system matrices of appropriate dimensions. A right
coprime factorization (RCF) of $G$ is given by $G(s)=N(s)M^{-1}(s)$ with the
right coprime pair $\left( M(s),N(s)\right) .$ The RCF of $G$ can be
interpreted as a state feedback control system with the closed-loop dynamics%
\begin{equation}
\begin{cases}
\dot{x}(t)=\left( A+BF\right) x(t)+BVv(t) \\ 
u(t)=Fx(t)+Vv(t) \\ 
y(t)=\left( C+DF\right) x(t)+DVv(t),%
\end{cases}
\label{eq3-2}
\end{equation}%
where $F$ as the state feedback gain is selected such that $A+BF$ is
Hurwitz, $V$ as a pre-filter is an invertible constant matrix, and $v$
serves as a reference signal. Correspondingly, $M(s),N(s)$ are stable
systems with the state space representations 
\begin{align}
& M(s)=(A+BF,BV,F,V),  \label{eq3-2a} \\
& N(s)=(A+BF,BV,C+DF,DV).  \label{eq3-2b}
\end{align}%
The system (\ref{eq3-2}) is called stable image representation (SIR) of $G$
and expressed in the frequency domain as \cite{Ding2020,Schaft-2000-book} 
\begin{equation}
I_{G}(s):\left[ 
\begin{array}{c}
u(s) \\ 
y(s)%
\end{array}%
\right] =\left[ 
\begin{array}{c}
M(s) \\ 
N(s)%
\end{array}%
\right] v(s).  \label{eq3-2c}
\end{equation}%
Note that the SIR of $G$ implies the input-output dynamic $y=Gu$, i.e. 
\begin{align*}
y(s)& =N(s)v(s),v(s)=M^{-1}(s)u(s) \\
& \Longrightarrow y(s)=N(s)M^{-1}(s)u(s)=G(s)u(s).
\end{align*}%
In this context, vector $v$ acts as a latent variable.

\begin{Rem}
Hereafter, we may drop out the domain variable $s$\ or $t$ when there is no
risk of confusion.
\end{Rem}

During normal process operations, the process data $\left( u,y\right) $
build a subspace in the Hilbert space $\mathcal{L}_{2},$ the so-called image
subspace of $G,$ which is explicitly defined by the SIR of $G$ and the
latent variable $v$ as follows 
\begin{equation}
\mathcal{I}_{G}=\left\{ \left[ 
\begin{array}{c}
u \\ 
y%
\end{array}%
\right] :\left[ 
\begin{array}{c}
u \\ 
y%
\end{array}%
\right] =\left[ 
\begin{array}{c}
M \\ 
N%
\end{array}%
\right] v,v\in \mathcal{L}_{2}\right\} .  \label{eq3-3}
\end{equation}%
It is of interest to notice that the nominal process data $\left( u,y\right) 
$ can be viewed as being generated by the latent variable $v$ serving as an
information signal (like the reference signal in a feedback control loop).

Now, we are in a position to introduce the basic idea of approaching the
problem of designing optimal fault detection systems formulated in the
previous section.

Let $\mathcal{P}_{\mathcal{V}}$ be an operator defined on a subspace $%
\mathcal{V}$ in Hilbert space that is endowed with the inner product, 
\begin{equation*}
\left\langle \alpha ,\beta \right\rangle =\int\limits_{-\infty }^{\infty
}\alpha ^{T}(t)\beta (t)dt,\alpha ,\beta \in \mathcal{V\subset L}_{2}.
\end{equation*}%
If $\mathcal{P}_{\mathcal{V}}$ is idempotent and self-adjoint, namely 
\begin{equation}
\forall \alpha ,\beta \in \mathcal{V},\mathcal{P}_{\mathcal{V}}^{2}=\mathcal{%
\ P}_{\mathcal{V}},\left\langle \mathcal{P}_{\mathcal{V}}\alpha ,\beta
\right\rangle =\left\langle \alpha ,\mathcal{P}_{\mathcal{V}}\beta
\right\rangle ,  \label{eq3-4}
\end{equation}%
it is an operator of an orthogonal projection onto $\mathcal{V}$ \cite%
{Kato_book}. The following properties of an orthogonal projection are of
importance for our solution \citep{Kato_book}:

\begin{itemize}
\item $\forall \alpha \in \mathcal{V},\mathcal{P}_{\mathcal{V}}\alpha
=\alpha \Longleftrightarrow \left\langle \mathcal{P}_{\mathcal{V}}\alpha
,\alpha -\mathcal{P}_{\mathcal{V}}\alpha \right\rangle =0;$

\item $\forall \alpha \in \mathcal{L}_{2},\alpha =\mathcal{P}_{\mathcal{V}%
}\alpha +\mathcal{P}_{\mathcal{V}^{\bot }}\alpha ,$ where $\mathcal{V}^{\bot
}$ is the orthogonal complement of $\mathcal{V};$

\item given $\beta \in \mathcal{L}_{2},\forall x\in \mathcal{V\subset L}%
_{2}, $ 
\begin{equation*}
\left\langle \beta -\alpha ,\beta -\alpha \right\rangle =\left\Vert \beta
-\alpha \right\Vert _{2}\geq \left\Vert \beta -\mathcal{P}_{\mathcal{V}%
}\beta \right\Vert _{2}.
\end{equation*}
\end{itemize}

\noindent Moreover, if the subspace $\mathcal{V}$ is closed, the distance
between $\beta $ and $\mathcal{V},dist\left( \beta ,\mathcal{V}\right) ,$ is
defined as%
\begin{equation}
dist\left( \beta ,\mathcal{V}\right) =\inf_{\alpha \in \mathcal{V}%
}\left\Vert \beta -\alpha \right\Vert _{2},  \label{eq3-5}
\end{equation}%
which can be computed as 
\begin{equation*}
dist\left( \beta ,\mathcal{V}\right) =\left\Vert \beta -\mathcal{P}_{%
\mathcal{V}}\beta\right\Vert _{2} =\left\Vert \mathcal{P}_{\mathcal{V}^{\bot
}}\beta \right\Vert _{2}.
\end{equation*}%
It is well-known that the image subspace $\mathcal{I}_{G}$ is closed in $%
\mathcal{L}_{2}$ and 
\begin{equation*}
\Pi (s)=I_{G,0}(s)I_{G,0}^{\sim }(s)
\end{equation*}%
forms an orthogonal projection onto $\mathcal{I}_{G}$ \cite{Vinnicombe-book}%
, denoted by $\mathcal{P}_{\mathcal{I}_{G}},$ namely%
\begin{equation}
\mathcal{P}_{\mathcal{I}_{G}}\left[ 
\begin{array}{c}
u \\ 
y%
\end{array}%
\right] =\Pi \left[ 
\begin{array}{c}
u \\ 
y%
\end{array}%
\right] =I_{G,0}I_{G,0}^{\sim }\left[ 
\begin{array}{c}
u \\ 
y%
\end{array}%
\right] .  \label{eq3-5a}
\end{equation}%
Here, $I_{G,0}$ is the normalized SIR of $G$ and satisfies 
\begin{equation}
I_{G,0}^{\sim }(s)I_{G,0}(s)=M_{0}^{\sim }(s)M_{0}(s)+N_{0}^{\sim
}(s)N_{0}(s)=I.  \label{eq3-5b}
\end{equation}%
The pair $\left( M_{0},N_{0}\right) $ is a RCF of $G$ with the following
setting for $F,V$ given in (\ref{eq3-2a})-(\ref{eq3-2b}) \cite{Hoffmann1996}%
, 
\begin{align*}
F& =-\left( I+D^{T}D\right) ^{-1}\left( D^{T}C+B^{T}P\right) , \\
V& =\left( I+D^{T}D\right) ^{-1/2},
\end{align*}%
where $P>0$ is the solution to the following Riccati equation%
\begin{align*}
& A^{T}P\hspace{-2pt}+\hspace{-2pt}PA\hspace{-2pt}+\hspace{-2pt}C^{T}C%
\hspace{-2pt}-\hspace{-2pt}\left( D^{T}C\hspace{-2pt}+\hspace{-2pt}%
B^{T}P\right) ^{T}R^{-1}\left( D^{T}C+B^{T}P\right) \hspace{-2pt}=\hspace{%
-2pt}0, \\
& R\hspace{-2pt}=\hspace{-2pt}I\hspace{-2pt}+\hspace{-2pt}D^{T}D.
\end{align*}%
Note that operator $\mathcal{I-P}_{\mathcal{I}_{G}}:\mathcal{L}%
_{2}\rightarrow \mathcal{L}_{2},$ 
\begin{equation*}
\left( \mathcal{I-P}_{\mathcal{I}_{G}}\right) \left[ 
\begin{array}{c}
u \\ 
y%
\end{array}%
\right] =\left( I-I_{G,0}I_{G,0}^{\sim }\right) \left[ 
\begin{array}{c}
u \\ 
y%
\end{array}%
\right] ,
\end{equation*}%
defines an orthogonal projection onto the orthogonal complement of $\mathcal{%
I}_{G},$ denoted by $\mathcal{I}_{G}^{\bot }.$ Consequently, any process
data can be written as 
\begin{equation*}
\left[ 
\begin{array}{c}
u \\ 
y%
\end{array}%
\right] =\mathcal{P}_{\mathcal{I}_{G}}\left[ 
\begin{array}{c}
u \\ 
y%
\end{array}%
\right] +\mathcal{P}_{\mathcal{I}_{G}^{\bot }}\left[ 
\begin{array}{c}
u \\ 
y%
\end{array}%
\right] ,\mathcal{P}_{\mathcal{I}_{G}^{\bot }}=\mathcal{I-P}_{\mathcal{I}%
_{G}}.
\end{equation*}%
In the context of one-class classification, faulty operations are detected
if 
\begin{equation*}
\left[ 
\begin{array}{c}
u \\ 
y%
\end{array}%
\right] \notin \mathcal{I}_{G}\Longrightarrow \mathcal{P}_{\mathcal{I}%
_{G}^{\bot }}\left[ 
\begin{array}{c}
u \\ 
y%
\end{array}%
\right] \neq 0,
\end{equation*}%
and $\mathcal{P}_{\mathcal{I}_{G}^{\bot }}\left[ 
\begin{array}{c}
u \\ 
y%
\end{array}%
\right] $ is sufficiently large (with respect to a defined threshold, see
below). In other words, in order to achieve a reliable and optimal fault
detection, the test statistic or the residual evaluation function should be
maximally sensitive to $\mathcal{P}_{\mathcal{I}_{G}^{\bot }}\left[ 
\begin{array}{c}
u \\ 
y%
\end{array}%
\right] $. In order to gain a deeper insight into the concepts concerning $%
\mathcal{I}_{G},\mathcal{I}_{G}^{\bot }$ and normalized coprime
factorizations, which are useful in our subsequent work, we briefly
introduce some essential relations.

As the dual concepts to RCF and SIR, the so-called left coprime
factorisation (LCF) and stable kernel representation (SKR) of $G$\ are
well-established in factorization technique \citep%
{Vidyasagar85,Vinnicombe-book}. Denoted by 
\begin{equation*}
K_{G}(s)=\left[ 
\begin{array}{cc}
-\hat{N}(s) & \text{ }\hat{M}(s)%
\end{array}%
\right]
\end{equation*}%
with $\left( \hat{M},\hat{N}\right) $ as a left coprime pair, the SKR $K_{G}$
satisfies 
\begin{equation}
\left[ 
\begin{array}{cc}
-\hat{N}(s) & \text{ }\hat{M}(s)%
\end{array}%
\right] \left[ 
\begin{array}{c}
M(s) \\ 
N(s)%
\end{array}%
\right] =0.  \label{eq3-31}
\end{equation}%
By means of $K_{G}$, the kernel subspace of $G$ is defined as%
\begin{equation}
\mathcal{K}_{G}=\left\{ \left[ 
\begin{array}{c}
u \\ 
y%
\end{array}%
\right] :\left[ 
\begin{array}{cc}
-\hat{N} & \text{ }\hat{M}%
\end{array}%
\right] \left[ 
\begin{array}{c}
u \\ 
y%
\end{array}%
\right] =0,\left[ 
\begin{array}{c}
u \\ 
y%
\end{array}%
\right] \in \mathcal{L}_{2}\right\} ,
\end{equation}%
which is, due to relation (\ref{eq3-31}), identical with $\mathcal{I}_{G},$
i.e. $\mathcal{K}_{G}=\mathcal{I}_{G}$ \citep{Vinnicombe-book}.
Correspondingly, $\mathcal{I}_{G}^{\bot }$ can be defined by 
\begin{equation*}
\mathcal{I}_{G}^{\bot }=\left\{ \left[ 
\begin{array}{c}
u \\ 
y%
\end{array}%
\right] :\left[ 
\begin{array}{cc}
-\hat{N} & \text{ }\hat{M}%
\end{array}%
\right] \left[ 
\begin{array}{c}
u \\ 
y%
\end{array}%
\right] \neq 0,\left[ 
\begin{array}{c}
u \\ 
y%
\end{array}%
\right] \in \mathcal{L}_{2}\right\} .
\end{equation*}%
Let $\left( \hat{M}_{0},\hat{N}_{0}\right) $ be the normalized left coprime
pair that, as a dual form of $\left( M_{0},N_{0}\right) ,$ satisfies 
\begin{equation}
\hat{N}_{0}(s)\hat{N}_{0}^{\sim }(s)+\hat{M}_{0}(s)\hat{M}_{0}^{\sim }(s)=I.
\label{eq3-32}
\end{equation}%
On account of (\ref{eq3-5b}), (\ref{eq3-31}) and (\ref{eq3-32}), we have%
\begin{equation}
\left[ 
\begin{array}{cc}
M_{0}^{\sim } & \text{ }N_{0}^{\sim } \\ 
-\hat{N}_{0} & \text{ }\hat{M}_{0}%
\end{array}%
\right] \left[ 
\begin{array}{cc}
M_{0} & \text{ }-\hat{N}_{0}^{\sim } \\ 
N_{0} & \text{ }\hat{M}_{0}^{\sim }%
\end{array}%
\right] =\left[ 
\begin{array}{cc}
I & \text{ }0 \\ 
0 & \text{ }I%
\end{array}%
\right] .  \label{eq3-33}
\end{equation}%
It follows from (\ref{eq3-33}) that%
\begin{align*}
\left[ 
\begin{array}{c}
u \\ 
y%
\end{array}%
\right] & =\left[ 
\begin{array}{cc}
M_{0} & \text{ }-\hat{N}_{0}^{\sim } \\ 
N_{0} & \text{ }\hat{M}_{0}^{\sim }%
\end{array}%
\right] \left[ 
\begin{array}{c}
v \\ 
q%
\end{array}%
\right] =\left[ 
\begin{array}{c}
M_{0} \\ 
N_{0}%
\end{array}%
\right] v+\left[ 
\begin{array}{c}
-\hat{N}_{0}^{\sim } \\ 
\hat{M}_{0}^{\sim }%
\end{array}%
\right] q, \\
\left[ 
\begin{array}{c}
v \\ 
q%
\end{array}%
\right] & =\left[ 
\begin{array}{cc}
M_{0}^{\sim } & \text{ }N_{0}^{\sim } \\ 
-\hat{N}_{0} & \text{ }\hat{M}_{0}%
\end{array}%
\right] \left[ 
\begin{array}{c}
u \\ 
y%
\end{array}%
\right] =\left[ 
\begin{array}{c}
M_{0}^{\sim }u+N_{0}^{\sim }y \\ 
\hat{M}_{0}y-\hat{N}_{0}u%
\end{array}%
\right] ,
\end{align*}%
and furthermore%
\begin{equation}
\mathcal{I}_{G}^{\bot }=\left\{ \left[ 
\begin{array}{c}
u \\ 
y%
\end{array}%
\right] :\left[ 
\begin{array}{c}
u \\ 
y%
\end{array}%
\right] =\left[ 
\begin{array}{c}
-\hat{N}_{0}^{\sim } \\ 
\hat{M}_{0}^{\sim }%
\end{array}%
\right] q,q\in \mathcal{L}_{2}\right\} .  \label{eq3-34}
\end{equation}

\begin{Rem}
It is noteworthy that the above results on coprime factorizations,
orthogonal projections, and image and kernel subspaces hold both for
continuous-time and discrete-time systems. The reader is referred to \cite%
{Vidyasagar85,Hoffmann1996,Vinnicombe-book,ding2022} for more details about
the aforementioned methods.
\end{Rem}

Now, we delineate how to solve the formulated optimal fault detection
problem by means of the system 
\begin{equation*}
\left[ 
\begin{array}{c}
\hat{u} \\ 
\hat{y}%
\end{array}%
\right] =\Pi \left[ 
\begin{array}{c}
u \\ 
y%
\end{array}%
\right] =I_{G,0}I_{G,0}^{\sim }\left[ 
\begin{array}{c}
u \\ 
y%
\end{array}%
\right] ,
\end{equation*}%
step by step and on account of the following arguments:

\begin{itemize}
\item given process data $\left( u,y\right) $ generated by normal
operations, there exists $v\in \mathcal{L}_{2}$ so that 
\begin{gather*}
\left[ 
\begin{array}{c}
\hat{u} \\ 
\hat{y}%
\end{array}%
\right] =I_{G,0}I_{G,0}^{\sim }I_{G,0}v=I_{G,0}v\in \mathcal{I}%
_{G}\Longrightarrow \\
\left\Vert \left[ 
\begin{array}{c}
u \\ 
y%
\end{array}%
\right] -\left[ 
\begin{array}{c}
\hat{u} \\ 
\hat{y}%
\end{array}%
\right] \right\Vert _{2}=0;
\end{gather*}

\item given process data $\left( u,y\right) $ generated during operations
with disturbances/uncertainties or faults, 
\begin{gather}
\left[ 
\begin{array}{c}
u \\ 
y%
\end{array}%
\right] -\left[ 
\begin{array}{c}
\hat{u} \\ 
\hat{y}%
\end{array}%
\right] =\left( I-I_{G,0}I_{G,0}^{\sim }\right) \left[ 
\begin{array}{c}
u \\ 
y%
\end{array}%
\right]  \notag \\
\Longrightarrow \left\Vert \left[ 
\begin{array}{c}
u \\ 
y%
\end{array}%
\right] -\left[ 
\begin{array}{c}
\hat{u} \\ 
\hat{y}%
\end{array}%
\right] \right\Vert _{2}=\left\Vert \mathcal{P}_{\mathcal{I}_{G}^{\bot }}%
\left[ 
\begin{array}{c}
u \\ 
y%
\end{array}%
\right] \right\Vert _{2}\neq 0;\,  \label{eq3-7c}
\end{gather}

\item according to the distance definition (\ref{eq3-5}), the threshold is
set to be 
\begin{align}
J_{th}& :=\sup_{\substack{ \left[ 
\begin{array}{c}
u \\ 
y%
\end{array}%
\right] \in \mathcal{Z}  \\ FAR\leq \gamma }}dist\left( \left[ 
\begin{array}{c}
u \\ 
y%
\end{array}%
\right] ,\mathcal{I}_{G}\right) ^{2}  \notag \\
& =\sup_{\substack{ \left[ 
\begin{array}{c}
u \\ 
y%
\end{array}%
\right] \in \mathcal{Z}  \\ FAR\leq \gamma }}\inf_{\left[ 
\begin{array}{c}
u_{0} \\ 
y_{0}%
\end{array}%
\right] \in \mathcal{I}_{G}}\left\Vert \left[ 
\begin{array}{c}
u \\ 
y%
\end{array}%
\right] -\left[ 
\begin{array}{c}
u_{0} \\ 
y_{0}%
\end{array}%
\right] \right\Vert _{2}^{2},  \label{eq3-8}
\end{align}%
where $\mathcal{Z}$ denotes the set of the process data $\left( u,y\right) $
collected during the fault-free operations, and $\gamma $ is the
user-defined FAR upper bound.
\end{itemize}

\noindent The following facts are of remarkable importance in the subsequent
study:

\begin{itemize}
\item for any $\left[ 
\begin{array}{c}
u \\ 
y%
\end{array}
\right] \in \mathcal{L}_{2},$ 
\begin{align}
\left[ 
\begin{array}{c}
\hat{u} \\ 
\hat{y}%
\end{array}
\right] & =\mathcal{P}_{\mathcal{I}_{G}}\left[ 
\begin{array}{c}
u \\ 
y%
\end{array}
\right] \in \mathcal{I}_{G}\Longleftrightarrow  \notag \\
\left[ 
\begin{array}{c}
\hat{u} \\ 
\hat{y}%
\end{array}
\right] & =I_{G,0}v,v=I_{G,0}^{\sim }\left[ 
\begin{array}{c}
u \\ 
y%
\end{array}
\right] \in \mathcal{L}_{2},  \label{eq3-6}
\end{align}

\item moreover, 
\begin{equation}
\left\Vert \left[ 
\begin{array}{c}
\hat{u} \\ 
\hat{y}%
\end{array}
\right] \right\Vert _{2}=\left\Vert \mathcal{P}_{\mathcal{I}_{G}}\left[ 
\begin{array}{c}
u \\ 
y%
\end{array}
\right] \right\Vert _{2}=\left\Vert I_{G,0}^{\sim }\left[ 
\begin{array}{c}
u \\ 
y%
\end{array}
\right] \right\Vert _{2}=\left\Vert v\right\Vert _{2}  \label{eq3-7a}
\end{equation}

\item as well as 
\begin{align}
\left[ 
\begin{array}{c}
u \\ 
y%
\end{array}
\right] & =\mathcal{P}_{\mathcal{I}_{G}}\left[ 
\begin{array}{c}
u \\ 
y%
\end{array}
\right] +\mathcal{P}_{\mathcal{I}_{G}^{\bot }}\left[ 
\begin{array}{c}
u \\ 
y%
\end{array}
\right] \Longrightarrow  \notag \\
\left\Vert \left[ 
\begin{array}{c}
u \\ 
y%
\end{array}
\right] \right\Vert _{2}^{2}& =\left\Vert \mathcal{P}_{\mathcal{I}_{G}}\left[
\begin{array}{c}
u \\ 
y%
\end{array}
\right] \right\Vert _{2}^{2}+\left\Vert \mathcal{P}_{\mathcal{I}_{G}^{\bot
}} \left[ 
\begin{array}{c}
u \\ 
y%
\end{array}
\right] \right\Vert _{2}^{2}  \notag \\
& =\left\Vert v\right\Vert _{2}^{2}+\left\Vert \mathcal{P}_{\mathcal{I}
_{G}^{\bot }}\left[ 
\begin{array}{c}
u \\ 
y%
\end{array}
\right] \right\Vert _{2}^{2}.  \label{eq3-7b}
\end{align}
\end{itemize}

\noindent As defined in (\ref{eq3-6}), $v$ is a latent variable for
constructing process variables $\left( u,y\right) $ in the nominal
operation. The fact (\ref{eq3-7a}) reveals that $v$ preserves the exact
amount of information needed for constructing the process variables $\left(
u,y\right) $ in the nominal operation. As the latent variable, $v$ is
achieved by maximally compressed mapping of $\left( u,y\right) $ and
preserves as much as possible the information in $\left( u,y\right) $ by its
reconstruction. Moreover, (\ref{eq3-7c}) and (\ref{eq3-7b}) imply the
maximal sensitivity of the evaluation function to faulty operations. This
motivates us to introduce the following definition.

\begin{definition}
	\label{Def1}The property (\ref{eq3-7a}) is called lossless information
	compression.
\end{definition}

We would like to mention that the term lossless is a well-established
concept in control theory and describes the property of a dynamic system in
the regard of energy balance and transport \citep{Schaft-2000-book}. For our
study on projection-based fault detection, lossless is adopted in the
context of information compression and expressed as preservation of the $%
\mathcal{L}_{2}$ norms of $v$ and $\left( \hat{u},\hat{y}\right) .$ As will
be showcased in the next subsection, under certain conditions, the concept
of lossless information compressing is equivalent to minimal sufficient
statistic for dynamic (control) systems in sense of mutual information.

Guided by the above results, an AE consisting of%
\begin{equation*}
\left\{ 
\begin{array}{l}
\text{encoder: }v=I_{G,0}^{\sim }\left[ 
\begin{array}{c}
u \\ 
y%
\end{array}%
\right] , \\ 
\text{decoder: }\left[ 
\begin{array}{c}
\hat{u} \\ 
\hat{y}%
\end{array}%
\right] =I_{G,0}v,%
\end{array}%
\right. 
\end{equation*}%
with $v$ as the latent variable, together with the residual evaluation
function $J,$%
\begin{equation*}
J=\left\Vert \left[ 
\begin{array}{c}
u \\ 
y%
\end{array}%
\right] -\left[ 
\begin{array}{c}
\hat{u} \\ 
\hat{y}%
\end{array}%
\right] \right\Vert _{2}^{2},
\end{equation*}%
as well as the threshold setting law (\ref{eq3-8}) would form an optimal
fault detection system that solves the formulated optimal fault detection
problem. Unfortunately, the aforementioned results are limited to LTI
systems. In the subsequent subsections, we will extend and realize the idea
for nonlinear systems and finally propose a learning scheme to train the
autoencoder designed based on the idea introduced in this subsection.


\subsection{Information theoretic view of latent variable learning}

In their celebrated review paper on representation learning \citep{BCV2013},
Bengio et al. have pointed out that by learning the latent variable as
feature vector \textquotedblleft good generalization means low
reconstruction error at test examples, while having high reconstruction
error for most other configurations". In the context of fault detection as a
one-class classification problem, this claim implies the maximal fault
detectability subject to the required FAR condition. To this aim, the
concept of minimal sufficient statistic is helpful \citep%
{TZ-ITW-2015,Geiger2021}. In the information-theoretic framework, learning
the latent variable as a minimal sufficient statistic is equivalent to
finding a latent representation that minimizes the mutual information of the
input and latent variables and simultaneously maximizes the mutual
information of the latent and the output variables \citep{Geiger2021}.
Concerning our task, this optimization issue is schematically formulated as
finding $v$ so that%
\begin{equation*}
I\left( \left[ 
\begin{array}{c}
u \\ 
y%
\end{array}%
\right] ;v\right) \rightarrow \min \text{ and }I\left( \left[ 
\begin{array}{c}
\hat{u} \\ 
\hat{y}%
\end{array}%
\right] ;v\right) \rightarrow \max .
\end{equation*}%
Here, $I\left( X;Y\right) $ represents the mutual information of two random
variables $X$ and $Y,$ which can be expressed in terms of the entropies of $X
$ and $Y,$ and joint entropy of $X$ and $Y,H(X),H(Y),H\left( X,Y\right) ,$ as%
\begin{equation*}
I\left( X;Y\right) =H(X)+H(Y)-H\left( X,Y\right) ,
\end{equation*}%
and is bounded by $H(X)$ or $H(Y)$ \citep{Cover2006}. In the sequel, we
highlight that the orthogonal projection-based reconstruction with the
latent variable $v$ solves this optimisation problem in a defined
probabilistic setting.

To begin with, we briefly review some existing definitions and results in
information theory concerning dynamic processes \citep%
{Papoulis_book,Cover2006}. Let $\left\{ x(k)\right\} _{k=-\infty }^{\infty
},\left\{ y(k)\right\} _{k=-\infty }^{\infty },x\in \mathbb{R}^{n},y\in 
\mathbb{R}^{m},$ denote discrete-time stationary stochastic processes. The
entropy rate of $\left\{ x(k)\right\} _{k=-\infty }^{\infty },$ and the
mutual information rate of $\left\{ x(k)\right\} _{k=-\infty }^{\infty
},\left\{ y(k)\right\} _{k=-\infty }^{\infty }$ represent the limits of the
average entropy and average mutual information, respectively, and are
defined by 
\begin{eqnarray*}
H_{R}\left( x\right) &:&=\lim_{n\rightarrow \infty }\frac{1}{n}H\left(
x(1),\cdots ,x(n)\right) , \\
I_{R}\left( x;y\right) &:&=\lim_{n\rightarrow \infty }\frac{1}{n}I\left(
x(1),\cdots ,x(n);y(1),\cdots ,y(n)\right) .
\end{eqnarray*}%
Denote by $\Phi _{x}(\theta )$, $\Phi _{y}(\theta )$ and $\Phi _{xy}(\theta
),\Phi _{yx}(\theta ),$ $\theta \in \left[ -\pi ,\pi \right] ,$ the power
spectral densities and cross-power spectral densities of $x$ and $y,$
respectively. Note that 
\begin{equation*}
\Phi _{xy}(\theta )=\Phi _{yx}^{T}(-\theta )=\Phi _{xy}^{\sim }(\theta ).
\end{equation*}

\begin{Rem}
Power and cross-power spectral densities $\Phi _{x}(\theta )/\Phi
_{y}(\theta )$ and $\Phi _{xy}(\theta )$ are the discrete-time Fourier
transform of the corresponding autocorrelation and cross-correlation
functions. Variable $\theta $ denotes the frequency \citep{Boashashbook2015}.
\end{Rem}

We are in the position to introduce the major result in this subsection. The
following lemmas are essential for our work.

\begin{lemma}
\label{Le1}\citep{Cover2006,ISHII2011} Given a zero-mean asymptotically
stationary Gaussian process $\left\{ x(k)\right\} _{k=-\infty }^{\infty
},x\in \mathbb{R}^{n},$ with power spectral density $\Phi _{x}(\theta ),$
then 
\begin{equation}
H_{R}\left( x\right) =\frac{1}{2}\log \left( 2\pi e\right) ^{n}+\frac{1}{%
4\pi }\int\limits_{-\pi }^{\pi }\log \det \Phi _{x}(\theta )d\theta .
\end{equation}
\end{lemma}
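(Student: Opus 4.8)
The plan is to reduce the entropy-rate formula to the asymptotic behaviour of the determinant of a block-Toeplitz covariance matrix, and then invoke Szeg\H{o}'s limit theorem. First I would fix the dimension $n$ of the process and, to avoid the notational clash with the sample count in the definition of $H_{R}$, write the joint vector of $N$ consecutive samples as $X_{N}=(x(1)^{T},\ldots,x(N)^{T})^{T}\in\mathbb{R}^{Nn}$. Since $\{x(k)\}$ is a zero-mean Gaussian process, $X_{N}$ is a zero-mean Gaussian vector whose covariance $R_{N}=\mathbb{E}[X_{N}X_{N}^{T}]$ is the $Nn\times Nn$ block-Toeplitz matrix with $(i,j)$ block equal to the autocorrelation $R(i-j)=\mathbb{E}[x(i)x(j)^{T}]$.

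Next I would substitute the closed-form differential entropy of a nondegenerate zero-mean multivariate Gaussian,
\[
H(X_{N})=\tfrac{1}{2}\log\!\big((2\pi e)^{Nn}\det R_{N}\big)=\tfrac{Nn}{2}\log(2\pi e)+\tfrac{1}{2}\log\det R_{N},
\]
into the definition $H_{R}(x)=\lim_{N\to\infty}\tfrac{1}{N}H(X_{N})$. Dividing by $N$ and passing to the limit isolates the constant term $\tfrac{n}{2}\log(2\pi e)=\tfrac{1}{2}\log(2\pi e)^{n}$ and leaves the single task of evaluating $\lim_{N\to\infty}\tfrac{1}{2N}\log\det R_{N}$. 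The core step is then to recognize $R_{N}$ as the $N$-th block-Toeplitz truncation generated by the matrix-valued symbol $\Phi_{x}(\theta)$, which is precisely the discrete-time Fourier transform of the autocorrelation sequence $\{R(k)\}$. Applying the block (matrix-valued) Szeg\H{o} limit theorem yields
\[
\lim_{N\to\infty}\frac{1}{N}\log\det R_{N}=\frac{1}{2\pi}\int_{-\pi}^{\pi}\log\det\Phi_{x}(\theta)\,d\theta,
\]
so the remaining contribution equals $\tfrac{1}{4\pi}\int_{-\pi}^{\pi}\log\det\Phi_{x}(\theta)\,d\theta$; adding the two terms gives the claimed identity.

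The main obstacle is the rigorous justification of this last limit rather than the elementary algebra around it. I expect two points to require care: (i) the passage from the hypothesis of \emph{asymptotic} (rather than strict) stationarity to a genuinely Toeplitz symbol, which needs the observation that the transient boundary corrections in $R_{N}$ are asymptotically negligible and therefore do not affect the $\tfrac{1}{N}\log\det$ limit (the theory of matrices asymptotically equivalent to Toeplitz); and (ii) the integrability of $\log\det\Phi_{x}(\theta)$, i.e. that $\Phi_{x}(\theta)$ is positive definite almost everywhere, which is a purely nondeterministic / Paley--Wiener type regularity condition guaranteeing both the nonsingularity of $R_{N}$ used in the entropy formula and the finiteness of the integral. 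Since the matrix Szeg\H{o} theorem under these hypotheses is classical, I would invoke it from \citep{Cover2006,ISHII2011} and devote the argument mainly to verifying that the Gaussian covariance of the given process meets its assumptions, rather than reproving the limit theorem itself.
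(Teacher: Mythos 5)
The paper offers no proof of Lemma~\ref{Le1}: it is quoted verbatim from the cited references \citep{Cover2006,ISHII2011} as a classical result, so there is no in-paper argument to compare yours against. Your outline is nonetheless the standard and correct derivation of the Kolmogorov--Szeg\H{o} entropy-rate formula: the Gaussian differential entropy identity $H(X_{N})=\tfrac{1}{2}\log\bigl((2\pi e)^{Nn}\det R_{N}\bigr)$ reduces the claim to evaluating $\lim_{N\to\infty}\tfrac{1}{N}\log\det R_{N}$, and the block (matrix-symbol) Szeg\H{o} limit theorem supplies exactly the integral $\tfrac{1}{2\pi}\int_{-\pi}^{\pi}\log\det\Phi_{x}(\theta)\,d\theta$, whose halving gives the $\tfrac{1}{4\pi}$ coefficient in the statement. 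You have also correctly isolated the two genuine technical points --- the asymptotic-equivalence argument needed to pass from an asymptotically stationary covariance to a true block-Toeplitz symbol, and the a.e.\ positive definiteness of $\Phi_{x}(\theta)$ needed for both the nondegeneracy of $R_{N}$ and the integrability of $\log\det\Phi_{x}$ --- which is more care than the paper itself takes, since it imposes no explicit regularity hypothesis beyond Gaussianity and asymptotic stationarity. Your proposal is sound as a proof sketch for the cited lemma.
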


\begin{lemma}
\label{Le2}\citep{Pinsker1964,Stoorvogel1996,ZHANG2005} Given two zero-mean
asymptotically stationary Gaussian processes $\left\{ x(k)\right\}
_{k=-\infty }^{\infty },\left\{ y(k)\right\} _{k=-\infty }^{\infty },x\in 
\mathbb{R}^{n},y\in \mathbb{R}^{m},$ and the joint Gaussian process $%
\varsigma (k)=\left[ 
\begin{array}{c}
x(k) \\ 
y(k)%
\end{array}%
\right] \in \mathbb{R}^{n+m}$ with power spectral densities $\Phi
_{x}(\theta ),\Phi _{y}(\theta )$ as well as $\Phi _{\varsigma }(\theta ),$
then mutual information rate of $x$ and $y$ is given by%
\begin{equation}
I_{R}\left( x;y\right) =\frac{1}{4\pi }\int\limits_{-\pi }^{\pi }\log \frac{%
\det \Phi _{x}(\theta )\det \Phi _{y}(\theta )}{\det \Phi _{\varsigma
}(\theta )}d\theta .  \label{eq3-9}
\end{equation}
\end{lemma}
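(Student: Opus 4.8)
The plan is to reduce the mutual information rate to three applications of the entropy-rate formula of Lemma~\ref{Le1}, exploiting the entropy decomposition of mutual information recalled just above. The starting point is the elementary finite-block identity, valid for the Gaussian blocks $x^{(N)}=(x(1),\dots,x(N))$ and $y^{(N)}=(y(1),\dots,y(N))$,
\begin{equation*}
I\left( x^{(N)};y^{(N)}\right) =H\left( x^{(N)}\right) +H\left( y^{(N)}\right) -H\left( x^{(N)},y^{(N)}\right) .
\end{equation*}
The joint block $(x^{(N)},y^{(N)})$ is, up to a reordering of coordinates that leaves differential entropy invariant, exactly the block $\varsigma^{(N)}=(\varsigma(1),\dots,\varsigma(N))$ of the stacked process $\varsigma$, so $H(x^{(N)},y^{(N)})=H(\varsigma^{(N)})$. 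Dividing by $N$ and letting $N\rightarrow \infty$ then yields the additivity of the rates,
\begin{equation*}
I_{R}\left( x;y\right) =H_{R}\left( x\right) +H_{R}\left( y\right) -H_{R}\left( \varsigma\right) ,
\end{equation*}
provided the three limits on the right exist.

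I would next verify that $\varsigma$ inherits the hypotheses of Lemma~\ref{Le1}. As a linear stacking of two jointly Gaussian, zero-mean, asymptotically stationary processes, $\varsigma$ is itself zero-mean, Gaussian and asymptotically stationary, and its power spectral density is the $(n+m)\times(n+m)$ matrix $\Phi_{\varsigma}$ whose diagonal blocks are $\Phi_{x},\Phi_{y}$ and whose off-diagonal blocks are the cross-spectra $\Phi_{xy},\Phi_{yx}$. Hence Lemma~\ref{Le1} applies verbatim to $x$, $y$, and $\varsigma$, giving
\begin{align*}
H_{R}(x) &= \tfrac{1}{2}\log (2\pi e)^{n}+\tfrac{1}{4\pi}\int_{-\pi}^{\pi}\log \det \Phi_{x}(\theta)\,d\theta ,\\
H_{R}(y) &= \tfrac{1}{2}\log (2\pi e)^{m}+\tfrac{1}{4\pi}\int_{-\pi}^{\pi}\log \det \Phi_{y}(\theta)\,d\theta ,\\
H_{R}(\varsigma) &= \tfrac{1}{2}\log (2\pi e)^{n+m}+\tfrac{1}{4\pi}\int_{-\pi}^{\pi}\log \det \Phi_{\varsigma}(\theta)\,d\theta .
\end{align*}

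Substituting these into the additivity relation, the dimensional constants cancel, since $\tfrac{1}{2}\log (2\pi e)^{n}+\tfrac{1}{2}\log (2\pi e)^{m}-\tfrac{1}{2}\log (2\pi e)^{n+m}=0$, and the three spectral integrals merge into a single log-ratio, producing exactly the claimed formula (\ref{eq3-9}). This algebraic collapse is the routine part once Lemma~\ref{Le1} is in hand.

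The step I expect to be the most delicate is the passage $\tfrac{1}{N}I(x^{(N)};y^{(N)})\to I_{R}(x;y)$ together with splitting the limit across the entropy decomposition: one must guarantee that the three entropy rates are individually finite so that their difference is unambiguous, and that the transient contributions arising from mere asymptotic (rather than strict) stationarity vanish in the Ces\`aro average. This is where non-degeneracy of the spectra enters, requiring $\Phi_{x},\Phi_{y},\Phi_{\varsigma}$ to be bounded and bounded away from singularity almost everywhere on $[-\pi,\pi]$, so that the log-determinant integrals converge. Under the standing assumption that the processes are genuinely $n$- and $m$-dimensional (so $\Phi_{\varsigma}$ is nonsingular a.e.), this holds, and the underlying analytic engine, Szeg\H{o}'s limit theorem for block-Toeplitz covariance matrices already invoked in proving Lemma~\ref{Le1}, supplies the existence of each Ces\`aro limit and its identification with the stated integral.
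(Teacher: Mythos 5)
The paper does not actually prove this lemma; it is imported from the cited references (Pinsker, Stoorvogel, Zhang), so there is no in-paper argument to compare yours against. Your derivation is the standard one and is correct: the finite-block identity $I(x^{(N)};y^{(N)})=H(x^{(N)})+H(y^{(N)})-H(x^{(N)},y^{(N)})$, the observation that the joint block is a coordinate permutation of $\varsigma^{(N)}$ (under which differential entropy, i.e.\ $\log\det$ of the covariance, is invariant), division by $N$, and three applications of Lemma~\ref{Le1}, after which the $\tfrac{1}{2}\log(2\pi e)^{n}+\tfrac{1}{2}\log(2\pi e)^{m}-\tfrac{1}{2}\log(2\pi e)^{n+m}$ terms cancel and the integrals merge into the stated log-ratio. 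You also correctly isolate the only genuinely delicate step: splitting the Ces\`aro limit across the three entropy rates requires each rate to exist and be finite, which is where almost-everywhere nonsingularity and integrability of $\log\det\Phi_{\varsigma}$ (hence of $\Phi_{x},\Phi_{y}$) and the Szeg\H{o}-type limit theorem for block-Toeplitz covariances enter; these are precisely the regularity hypotheses under which the cited sources state the result, and they are implicitly assumed here since otherwise both sides of (\ref{eq3-9}) degenerate. One cosmetic remark: joint Gaussianity of $\varsigma$ is part of the lemma's hypotheses, so it need not be derived from the marginal Gaussianity of $x$ and $y$ (from which it would not follow in general).
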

In order to fit the above information theoretic setting, we now consider,
analog to continuous-time systems, discrete-time system models and the
associated RCF and LCF as well as the normalized SIR and SKR,%
\begin{equation*}
I_{G,0}(z)=\left[ 
\begin{array}{c}
M_{0}(z) \\ 
N_{0}(z)%
\end{array}%
\right] ,K_{G,0}(z)=\left[ 
\begin{array}{cc}
-\hat{N}_{0}(z) & \text{ }\hat{M}_{0}(z)%
\end{array}%
\right] ,
\end{equation*}%
the orthogonal projection $\mathcal{P}_{\mathcal{I}_{G}}$, as described in
the previous subsection. The problem under consideration is formulated as
follows: given system%
\begin{align}
\left[ 
\begin{array}{c}
u \\ 
y%
\end{array}%
\right] & =I_{G,0}\bar{v}+K_{G,0}^{\sim }q,v=I_{G,0}^{\sim }\left[ 
\begin{array}{c}
u \\ 
y%
\end{array}%
\right] +\xi ,  \label{eq3-30a} \\
\text{ }\left[ 
\begin{array}{c}
\hat{u} \\ 
\hat{y}%
\end{array}%
\right] & =I_{G,0}v,K_{G,0}^{\sim }=\left[ 
\begin{array}{c}
-\hat{N}_{0}^{\sim } \\ 
\hat{M}_{0}^{\sim }%
\end{array}%
\right] ,  \label{eq3-30b}
\end{align}%
where $\bar{v}\in \mathcal{L}_{2}$ is a deterministic signal and represents
nominal operations$,$ 
$q,\;\xi $ are zero-mean stationary Gaussian processes. 
and $K_{G,0}^{\sim }q$ and $\xi $ are independent, find $I_{R}\left( \left[ 
\begin{array}{c}
u \\ 
y%
\end{array}%
\right] ;v\right) $ and $I_{R}\left( \left[ 
\begin{array}{c}
\hat{u} \\ 
\hat{y}%
\end{array}%
\right] ;v\right) $. Recall that 
\begin{equation*}
K_{G,0}^{\sim }q\in \mathcal{I}_{G}^{\bot }
\end{equation*}%
represents uncertainties caused by disturbances or/and faulty operations.
Thus, the mutual information rate $I_{R}\left( \left[ 
\begin{array}{c}
u \\ 
y%
\end{array}%
\right] ;v\right) $ indicates amount of uncertainties in the latent variable 
$v.$

\begin{theorem}\label{The3-1}
Given system (\ref{eq3-30a})-(\ref{eq3-30b}) with the asymptotically
stationary Gaussian processes $\left[ 
\begin{array}{c}
u \\ 
y%
\end{array}%
\right] ,\left[ 
\begin{array}{c}
\hat{u} \\ 
\hat{y}%
\end{array}%
\right] $ and $v,$ it holds%
\begin{equation}
I_{R}\left( \left[ 
\begin{array}{c}
u \\ 
y%
\end{array}%
\right] ;v\right) =0,H_{R}\left( v\right) =H_{R}\left( \left[ 
\begin{array}{c}
\hat{u} \\ 
\hat{y}%
\end{array}%
\right] \right) .  \label{eq3-35}
\end{equation}
\end{theorem}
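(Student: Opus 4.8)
The plan is to first reduce the latent variable $v$ to a maximally simple form and then read off both assertions from the orthonormality of the normalized image representation, combined with Lemmas \ref{Le1} and \ref{Le2}. First I would substitute (\ref{eq3-30a}) into the defining expression for $v$ and invoke the two algebraic identities carried by (\ref{eq3-5b}) and the top row of (\ref{eq3-33}), namely $I_{G,0}^{\sim}I_{G,0}=I$ and $I_{G,0}^{\sim}K_{G,0}^{\sim}=-M_{0}^{\sim}\hat{N}_{0}^{\sim}+N_{0}^{\sim}\hat{M}_{0}^{\sim}=0$. These collapse the expression to
\begin{equation*}
v=I_{G,0}^{\sim}\left(I_{G,0}\bar{v}+K_{G,0}^{\sim}q\right)+\xi=\bar{v}+\xi ,
\end{equation*}
so that the stochastic content of $v$ is exactly $\xi$, while that of $\left[\begin{array}{c} u \\ y \end{array}\right]$ is exactly $K_{G,0}^{\sim}q$. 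Since $\bar{v}\in\mathcal{L}_{2}$ is a transient (finite-energy) deterministic signal, it contributes nothing to the per-sample rate limits defining $H_R$ and $I_R$, so both rates are governed solely by the stochastic parts.

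For the first identity in (\ref{eq3-35}) I would therefore reduce to
\begin{equation*}
I_R\left(\left[\begin{array}{c} u \\ y \end{array}\right];v\right)=I_R\left(K_{G,0}^{\sim}q;\xi\right)=0 ,
\end{equation*}
where the last equality holds because $K_{G,0}^{\sim}q$ and $\xi$ are assumed independent and independent processes have vanishing mutual information rate (equivalently, the joint spectral density is block diagonal, so the integrand in Lemma \ref{Le2} is $\log 1=0$).

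For the second identity I would pass $v=\bar{v}+\xi$ through the decoder (\ref{eq3-30b}), giving $\left[\begin{array}{c} \hat{u} \\ \hat{y} \end{array}\right]=I_{G,0}\bar{v}+I_{G,0}\xi$ with stochastic part $I_{G,0}\xi$ and power spectral density $\Phi_{[\hat{u};\hat{y}]}(\theta)=I_{G,0}(\theta)\Phi_{\xi}(\theta)I_{G,0}^{\sim}(\theta)$. Writing $A=I_{G,0}$ and $B=\Phi_{\xi}I_{G,0}^{\sim}$, the products $AB$ and $BA=\Phi_{\xi}I_{G,0}^{\sim}I_{G,0}=\Phi_{\xi}$ share the same nonzero eigenvalues at every frequency, so the nonzero eigenvalues of $\Phi_{[\hat{u};\hat{y}]}(\theta)$ coincide with those of $\Phi_{v}(\theta)=\Phi_{\xi}(\theta)$. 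Feeding this into the entropy-rate formula of Lemma \ref{Le1} matches the frequency integrals and yields $H_R(v)=H_R\left(\left[\begin{array}{c} \hat{u} \\ \hat{y} \end{array}\right]\right)$.

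The main obstacle I anticipate is the dimension mismatch in this last step: $I_{G,0}$ embeds the $p$-dimensional $v$ into the $(p+m)$-dimensional $\left[\begin{array}{c} \hat{u} \\ \hat{y} \end{array}\right]$, so $\Phi_{[\hat{u};\hat{y}]}(\theta)$ is singular and the literal $\log\det$ of Lemma \ref{Le1} does not apply. The clean resolution is to record that $I_{G,0}$ is inner, i.e. a frequency-wise isometry ($I_{G,0}^{\sim}I_{G,0}=I$), so $v\mapsto I_{G,0}v$ is measure preserving onto its image and the entropy rate is preserved; the equality must then be read on the $p$-dimensional image subspace, equivalently with $\log\det$ replaced by the sum of logs of the nonzero eigenvalues. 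Making this identification precise, and checking that the additive constant $\tfrac{1}{2}\log(2\pi e)^{p}$ agrees on both sides, is the only delicate point; everything else is the routine eigenvalue bookkeeping indicated above.
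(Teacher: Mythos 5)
Your proposal is correct and follows essentially the same route as the paper: the vanishing mutual information rate comes from the biorthogonality $K_{G,0}I_{G,0}=0$ plus the independence of $q$ and $\xi$ (the paper keeps $\alpha=I_{G,0}^{\sim}K_{G,0}^{\sim}q+\xi$ and shows the cross-spectrum $\Phi _{\beta \alpha }$ vanishes via Lemma \ref{Le2}, where you simplify $v=\bar{v}+\xi$ up front using the transposed identity), and the entropy-rate equality comes from the isometry $I_{G,0}^{\sim}I_{G,0}=I$ (the paper completes $I_{G,0}^{\sim}$ to a unitary map sending $(\hat{u},\hat{y})$ to $(v,0)$, where you instead match the nonzero eigenvalues of $I_{G,0}\Phi _{\xi }I_{G,0}^{\sim}$ and $\Phi _{\xi }$). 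The degeneracy of the spectral density that you flag as the delicate point is equally present in the paper's argument --- its image $(v,0)$ is just as singular in $\mathbb{R}^{p+m}$ --- and is resolved there implicitly in exactly the way you propose, by evaluating the entropy rate on the $p$-dimensional non-degenerate part.
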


\begin{proof}
Let 
\begin{align*}
\beta & =\left[ 
\begin{array}{c}
u \\ 
y%
\end{array}%
\right] -\mathbb{E}\left( \left[ 
\begin{array}{c}
u \\ 
y%
\end{array}%
\right] \right) =\left[ 
\begin{array}{c}
u \\ 
y%
\end{array}%
\right] -I_{G,0}\bar{v}=K_{G,0}^{\sim }q, \\
\alpha & =v-\mathbb{E}\left( v\right) =I_{G,0}^{\sim }K_{G,0}^{\sim }q+\xi ,
\end{align*}%
where $\mathbb{E}\left( \cdot \right) $ denotes expectation. Hence, both $%
\alpha $ and $\beta $ are zero-mean asymptotically stationary Gaussian
processes. It follows from Lemma \ref{Le2} that%
\begin{gather}
I_{R}\left( \left[ 
\begin{array}{c}
u \\ 
y%
\end{array}%
\right] ;v\right) =\frac{1}{4\pi }\int\limits_{-\pi }^{\pi }\log \frac{\det
\Phi _{\beta }(\theta )\det \Phi _{\alpha }(\theta )}{\det \Phi _{\gamma
}(\theta )}d\theta , \\
\gamma =\left[ 
\begin{array}{c}
\beta \\ 
\alpha%
\end{array}%
\right] ,\Phi _{\gamma }(\theta )=\left[ 
\begin{array}{cc}
\Phi _{\beta }(\theta ) & \Phi _{\beta \alpha }(\theta ) \\ 
\Phi _{\alpha \beta }(\theta ) & \Phi _{\alpha }(\theta )%
\end{array}%
\right] , \\
\Phi _{\beta \alpha }(\theta )=\Phi _{\beta }(\theta )I_{G,0}(\theta ).
\end{gather}%
Since%
\begin{equation*}
\Phi _{\beta }(\theta )=K_{G,0}^{\sim }(\theta )\Phi _{q}(\theta
)K_{G,0}(\theta ),K_{G,0}(\theta )I_{G,0}(\theta )=0,
\end{equation*}%
and $K_{G,0}^{\sim }q$ and $\xi $ are independent, it turns out 
\begin{gather*}
\Phi _{\beta \alpha }(\theta )=\Phi _{\beta }(\theta )I_{G,0}(\theta )=\Phi
_{\alpha \beta }^{\sim }(\theta )=0\Longrightarrow \\
\log \frac{\det \Phi _{\beta }(\theta )\det \Phi _{\alpha }(\theta )}{\det
\Phi _{\gamma }(\theta )}=0\Longrightarrow I_{R}\left( \left[ 
\begin{array}{c}
u \\ 
y%
\end{array}%
\right] ;v\right) =0.
\end{gather*}%
Recall (\ref{eq3-33}), namely 
\begin{equation*}
\left[ 
\begin{array}{c}
I_{G,0}^{\sim } \\ 
K_{G,0}%
\end{array}%
\right] \left[ 
\begin{array}{c}
I_{G,0}^{\sim } \\ 
K_{G,0}%
\end{array}%
\right] ^{\sim }=\left[ 
\begin{array}{cc}
I & \text{ }0 \\ 
0 & \text{ }I%
\end{array}%
\right] ,
\end{equation*}%
which means that $\left[ 
\begin{array}{c}
I_{G,0}^{\sim } \\ 
K_{G,0}%
\end{array}%
\right] $ is a unitary mapping. Multiplying $\left[ 
\begin{array}{c}
\hat{u} \\ 
\hat{y}%
\end{array}%
\right] $ by $\left[ 
\begin{array}{c}
I_{G,0}^{\sim } \\ 
K_{G,0}%
\end{array}%
\right] $ yields%
\begin{equation*}
\left[ 
\begin{array}{c}
I_{G,0}^{\sim } \\ 
K_{G,0}%
\end{array}%
\right] \left[ 
\begin{array}{c}
\hat{u} \\ 
\hat{y}%
\end{array}%
\right] =\left[ 
\begin{array}{c}
I_{G,0}^{\sim } \\ 
K_{G,0}%
\end{array}%
\right] I_{G,0}v=\left[ 
\begin{array}{c}
v \\ 
0%
\end{array}%
\right] .
\end{equation*}%
As a result, by Lemma \ref{Le1}%
\begin{gather}
H_{R}\left( \left[ 
\begin{array}{c}
\hat{u} \\ 
\hat{y}%
\end{array}%
\right] \right) =H_{R}\left( v\right)  \label{eq3-36} \\
=\frac{1}{2}\log \left( 2\pi e\right) ^{p}+\frac{1}{4\pi }\int\limits_{-\pi
}^{\pi }\log \det \Phi _{\alpha }(\theta )d\theta .  \notag
\end{gather}%
The theorem is thus proven.
\end{proof}

Note that the relation (\ref{eq3-36}) implies that $I_{R}\left( \left[ 
\begin{array}{c}
\hat{u} \\ 
\hat{y}%
\end{array}%
\right] ;v\right) $ is equal to its upper-bound and thus reaches the
maximum. In summary, Theorem \ref{The3-1} gives a solution to the optimal
selection of the latent variable and proves that the latent variable $%
v=I_{G,0}^{\sim }\left[ 
\begin{array}{c}
u \\ 
y%
\end{array}%
\right] $ is a minimal sufficient statistic on the assumption of the system
model (\ref{eq3-30a})-(\ref{eq3-30b}). In particular, the relation (\ref%
{eq3-35}) provides us with an information theoretic interpretation for the
concept of lossless information compression given in Definition \ref{Def1}.
We would like to emphasize that the property (\ref{eq3-35}) is the result of
the adopted orthogonal projection and leads to the maximal fault
detectability thanks to the maximal sensitivity of the evaluation function
to the faulty operations.

\subsection{Extension to nonlinear dynamic systems}

We now consider nonlinear systems given in (\ref{eq2-1a}). For the sake of
simplicity, we restrict our study to a class of nonlinear systems, the
so-called affine systems modelled by the following state space representation%
\begin{equation}
\Sigma :%
\begin{cases}
\dot{x}=a(x)+B(x)u \\ 
y=c(x)+D(x)u,%
\end{cases}
\label{eq3-10}
\end{equation}%
where $a(x),B(x),c(x),D(x)$ are smooth functions of appropriate dimensions.
Analog to LTI systems, the definition of stable image representation of $%
\Sigma $ is essential for our extension effort.

\begin{definition}
	\citep{Schaft-2000-book} Given system $\Sigma $ defined by (\ref{eq3-10}),
	system $\Sigma _{I}:\mathcal{V}\rightarrow \mathcal{L}_{2}\times \mathcal{L}%
	_{2}$ is an SIR of the system $\Sigma $ if for all $u\in \mathcal{U\subset L}%
	_{2}$ and $y=\Sigma (u)\in \mathcal{Y\subset L}_{2}$, there exists $v\in 
	\mathcal{V\subset L}_{2}$ such that 
	\begin{equation*}
	\left[ 
	\begin{array}{c}
	u \\ 
	y%
	\end{array}%
	\right] =\Sigma _{I}\left( v\right).
	\end{equation*}%
\end{definition}
A state space representation of the SIR is given by 
\begin{gather}
\Sigma _{I}:\hspace{-2pt}%
\begin{cases}
\dot{x}=a(x)+B(x)g(x)+B(x)V(x)v\hspace{-2pt} \\ 
\;\;\;=\bar{a}(x)+B(x)V(x)v \\ 
\left[ 
\begin{array}{c}
u \\ 
y%
\end{array}%
\right] \hspace{-2pt}=\hspace{-2pt}\bar{c}(x)\hspace{-2pt}+\hspace{-2pt}\bar{%
D}(x)V(x)v,%
\end{cases}
\label{eq3-11} \\
\bar{c}(x)=\left[ 
\begin{array}{c}
g(x) \\ 
c(x)+D(x)g(x)%
\end{array}%
\right] ,\bar{D}(x)=\left[ 
\begin{array}{c}
V(x) \\ 
D(x)V(x)%
\end{array}%
\right] ,  \notag
\end{gather}%
where $g(x)$ is designed such that $\bar{a}(x)=a(x)+B(x)g(x)$ is
asymptotically stable and $V(x)\in \mathbb{R}^{p\times p}$ is invertible 
\citep{Scherpen1994,Ball1996}. In the context of feedback control systems, 
\begin{equation*}
u=g(x)+V(x)v
\end{equation*}%
can be understood as a controller with state feedback $g(x),$ feed-forward
controller $V(x)$ and ${v}$ as the reference signal. Note that 
\begin{equation*}
\bar{D}^{T}(x)\bar{D}(x)=V^{T}(x)\left( I+D^{T}(x)D(x)\right) V(x)
\end{equation*}%
is invertible.

Based on the SIR of $\Sigma ,$ the image subspace of $\Sigma ,$ which
defines the set of the process data generated under the nominal operation
conditions, is defined as follows.

\begin{definition}
	Given system $\Sigma $, its SIR $\Sigma _{I},$ and $u\in \mathcal{U},y\in 
	\mathcal{Y},$%
	\begin{equation}
	\mathcal{I}_{\Sigma }=\left\{ \left[ 
	\begin{array}{c}
	u \\ 
	y%
	\end{array}%
	\right] :\left[ 
	\begin{array}{c}
	u \\ 
	y%
	\end{array}%
	\right] =\Sigma _{I}\left( v\right) ,v\in \mathcal{V}\right\} 
	\end{equation}%
	is called image subspace of $\Sigma .$
\end{definition}

Recall that for LTI systems, the basic idea of constructing an optimal fault
detection system is to find an operator $\Pi ,$%
\begin{equation*}
\left[ 
\begin{array}{c}
\hat{u} \\ 
\hat{y}%
\end{array}%
\right] =\Pi \left[ 
\begin{array}{c}
u \\ 
y%
\end{array}%
\right] ,
\end{equation*}%
which is idempotent, i.e.%
\begin{equation}
\Pi \left[ 
\begin{array}{c}
\hat{u} \\ 
\hat{y}%
\end{array}%
\right] =\Pi \left( \Pi \left[ 
\begin{array}{c}
u \\ 
y%
\end{array}%
\right] \right) =\Pi \left[ 
\begin{array}{c}
u \\ 
y%
\end{array}%
\right] ,  \label{eq3-12a}
\end{equation}%
and guarantees lossless information compression, as defined in Definition %
\ref{Def1},%
\begin{equation}
\left\Vert v\right\Vert _{2}=\left\Vert \left[ 
\begin{array}{c}
\hat{u} \\ 
\hat{y}%
\end{array}%
\right] \right\Vert _{2}=\left\Vert \Pi \left[ 
\begin{array}{c}
u \\ 
y%
\end{array}%
\right] \right\Vert _{2}.  \label{eq3-12}
\end{equation}%
For LTI systems, the normalized SIR serves for this end with $\Pi
=I_{G,0}I_{G,0}^{\sim },$ which cannot be applied to nonlinear systems.
Notice that the conditions (\ref{eq3-12a})-(\ref{eq3-12}) are essential for
the orthogonal projection. It inspires us to address the issues for
nonlinear systems under these two aspects. To this end, the concepts of the
Hamiltonian extension and inner systems are firstly introduced, which are
well established in nonlinear control theory \citep%
{Schaft-book1987,Scherpen1994}.

Given system (\ref{eq3-11}), the Hamiltonian extension of $\Sigma _{I}$ is a
dynamic system described by%
\begin{equation}
\begin{cases}
\dot{x}=\bar{a}(x)+B(x)v \\ 
\dot{p}=-\left( \frac{\partial \bar{a}(x)}{\partial x}+\frac{\partial B(x)}{%
\partial x}v\right) ^{T}p-\left( \frac{\partial \bar{c}\left( x\right) }{%
\partial x}+\frac{\partial \bar{D}(x)}{\partial x}\right) ^{T}v_{a}, \\ 
\bar{z}:=\left[ 
\begin{array}{c}
u \\ 
y%
\end{array}%
\right] =\bar{c}(x)+\bar{D}(x)v \\ 
z_{a}=B^{T}(x)p+\bar{D}^{T}(x)v_{a},z_{a}\in \mathbb{R}^{p},v_{a}\in \mathbb{%
\ R}^{m+p}%
\end{cases}
\label{eq3-15}
\end{equation}%
with state variables $\left( x,p\right) ,p\in \mathbb{\ R}^{n},$ input
variables $\left( v,v_{a}\right) $ and output variable $\left( \bar{z}%
,z_{a}\right) $. Define the Hamiltonian function 
\begin{align*}
H\left( x,p,v\right) =& \frac{1}{2}\left( \bar{c}(x)+\bar{D}(x)v\right)
^{T}\left( \bar{c}(x)+\bar{D}(x)v\right)  \\
& +p^{T}\left( \bar{a}(x)+B(x)v\right) 
\end{align*}%
and connect $\bar{z}$ and $v_{a},v_{a}=\bar{z}=\bar{c}(x)+\bar{D}(x)v.$ We
have the following Hamiltonian system 
\begin{equation*}
z_{a}=\left( D\Sigma _{I}\right) ^{T}\circ \Sigma _{I}\left( v\right) ,
\end{equation*}%
whose state space representation can be written in the compact form%
\begin{equation}
\left( D\Sigma _{I}\right) ^{T}\circ \Sigma _{I}:%
\begin{cases}
\dot{x}=\frac{\partial H}{\partial p}\left( x,p,v\right)  \\ 
\dot{p}=-\frac{\partial H}{\partial x}\left( x,p,v\right)  \\ 
z_{a}=\frac{\partial H}{\partial v}\left( x,p,v\right) .%
\end{cases}
\label{eq3-13}
\end{equation}%
We now introduce the definition of inner systems \citep{Scherpen1994}.

\begin{definition}
Given nonlinear affine system (\ref{eq3-11}) and the corresponding
Hamiltonian system (\ref{eq3-13}), $\Sigma _{I}$ is inner if 
\begin{equation}
z_{a}=v,  \label{eq3-14a}
\end{equation}%
and there exists a storage function $P(x)\geq 0, P(0)=0$ so that 
\begin{equation}
P\left( x\left( t_{2}\right) \right) -P\left( x\left( t_{1}\right) \right)
=\int_{t_{1}}^{t_{2}}\left( \frac{1}{2}v^{T}v-\frac{1}{2}\bar{z}^{T}\bar{z}%
\right) d\tau .  \label{eq3-14b}
\end{equation}
\end{definition}

\begin{Rem}
A system with the property (\ref{eq3-14b}) is called lossless with respect
to the supply rate 
\begin{equation*}
s(v,\bar{z})=\frac{1}{2}v^{T}v-\frac{1}{2}\bar{z}^{T}\bar{z}.
\end{equation*}%
For $t_{1}=0,t_{2}=\infty ,x\left( 0\right) =0,$ it holds 
\begin{gather*}
\int_{0}^{\infty }\left( \frac{1}{2}v^{T}v-\frac{1}{2}\bar{z}^{T}\bar{z}%
\right) d\tau =0 \\
\Longleftrightarrow\left\Vert v\right\Vert _{2}=\left\Vert \bar{z}%
\right\Vert _{2}=\left\Vert \left[ 
\begin{array}{c}
u \\ 
y%
\end{array}%
\right] \right\Vert _{2}.
\end{gather*}
\end{Rem}

The existence conditions for the system (\ref{eq3-11}) to be inner are
summarized in the following lemma.

\begin{lemma}
	\label{lemma1}The nonlinear affine system (\ref{eq3-11}) is inner, if there
	exists $P(x)\geq 0$ such that the following equations are feasible 
	\begin{align}
	& P_{x}\left( x\right) \bar{a}(x)+\frac{1}{2}\bar{c}^{T}(x)\bar{c}(x)=0,
	\label{eq3-14a} \\
	& P_{x}\left( x\right) B(x)+\bar{c}^{T}(x)\bar{D}(x)=0, \\
	& \bar{D}^{T}(x)\bar{D}(x)=I,  \label{eq3-14c}
	\end{align}%
	where $P_{x}(x)=\frac{\partial P}{\partial x}(x)$. 
\end{lemma}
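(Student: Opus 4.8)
The plan is to realize, directly from the three hypotheses, the two clauses that define an inner system: the output relation $z_a=v$ and the lossless dissipation equality (\ref{eq3-14b}). Throughout I write $\bar z=\bar c(x)+\bar D(x)v$ and work with the Hamiltonian $H(x,p,v)=\tfrac12\bar z^T\bar z+p^T(\bar a(x)+B(x)v)$ underlying (\ref{eq3-13}), so that $\dot x=\partial H/\partial p=\bar a+Bv$, $z_a=\partial H/\partial v=\bar D^T\bar z+B^Tp$, and $\dot p=-\partial H/\partial x$. The candidate costate I would use is the gradient of the hypothesized storage function along the trajectory, $p(t)=P_x^T(x(t))$.

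First I would establish the dissipation equality, which requires no knowledge of $p$. Differentiating $P$ along the state trajectory gives $\tfrac{d}{dt}P(x)=P_x(\bar a+Bv)=P_x\bar a+(P_xB)v$. Substituting the first hypothesis $P_x\bar a=-\tfrac12\bar c^T\bar c$ and the second hypothesis $P_xB=-\bar c^T\bar D$ yields $\tfrac{d}{dt}P(x)=-\tfrac12\bar c^T\bar c-\bar c^T\bar D v$. On the other side, expanding $\tfrac12v^Tv-\tfrac12\bar z^T\bar z$ and invoking the normalization $\bar D^T\bar D=I$ from (\ref{eq3-14c}) cancels the quadratic-in-$v$ terms and leaves exactly $-\tfrac12\bar c^T\bar c-\bar c^T\bar D v$. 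Hence $\tfrac{d}{dt}P(x)=\tfrac12v^Tv-\tfrac12\bar z^T\bar z$, and integrating from $t_1$ to $t_2$ reproduces (\ref{eq3-14b}).

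Next I would verify $z_a=v$ for $p=P_x^T(x)$. Using $\bar D^T\bar D=I$ one has $z_a=\bar D^T\bar c+v+B^Tp$, so it suffices to check $\bar D^T\bar c+B^TP_x^T=0$, which is precisely the transpose of the second hypothesis $P_xB+\bar c^T\bar D=0$. The more delicate point is to confirm that $p=P_x^T(x)$ is genuinely consistent with the adjoint dynamics $\dot p=-\partial H/\partial x$, so that $z_a=v$ holds along the whole trajectory and not merely pointwise. For this I would differentiate the two Hamilton--Jacobi-type hypotheses with respect to $x$ (the second after contracting against the constant input $v$), producing two identities that together express $P_{xx}(\bar a+Bv)$; comparing this with $\tfrac{d}{dt}P_x^T(x)=P_{xx}(\bar a+Bv)$ and with $-\partial H/\partial x$ shows the two candidate right-hand sides coincide up to the single residual term $(\partial(\bar D v)/\partial x)^T\bar D v$.

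The crux, and the step I expect to be the main obstacle, is showing this residual term vanishes. Differentiating $\bar D^T\bar D=I$ shows that the quantity $\sum_l\partial_k\bar D_{li}\,\bar D_{lj}$ is antisymmetric under $i\leftrightarrow j$, so its contraction against the symmetric tensor $v_iv_j$ is zero; this is the one place where the third hypothesis enters through its derivative rather than as a pointwise algebraic identity, and getting the index bookkeeping right is where the real work lies. Once this cancellation is secured, $p=P_x^T(x)$ solves the costate equation, $z_a=v$ holds identically along trajectories, and together with the dissipation equality the system (\ref{eq3-11}) satisfies both clauses of the definition of an inner system, completing the proof.
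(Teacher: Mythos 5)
The paper does not actually supply a proof of Lemma~\ref{lemma1}: it states that the result is well known and defers to \cite{Scherpen1994,Ball1996}, so there is no in-paper argument to compare yours against. Your reconstruction is correct and is essentially the standard proof from those references. The dissipation identity follows exactly as you compute it, since $\tfrac{d}{dt}P(x)=P_x\bar a+(P_xB)v=-\tfrac12\bar c^T\bar c-\bar c^T\bar Dv$ matches $\tfrac12 v^Tv-\tfrac12\bar z^T\bar z$ once $\bar D^T\bar D=I$ cancels the quadratic term. For $z_a=v$ you correctly identify the real content: one must check that the manifold $p=P_x^T(x)$ is invariant under the Hamiltonian dynamics (\ref{eq3-13}), which you do by differentiating (\ref{eq3-14a}) and the contracted second condition $P_xBv+\bar c^T\bar Dv=0$ with respect to $x$ and comparing with $-\partial H/\partial x$; the leftover term $\bigl(\partial(\bar Dv)/\partial x\bigr)^T\bar Dv$ indeed vanishes because differentiating $\bar D^T\bar D=I$ makes the relevant contraction antisymmetric in the indices paired with the symmetric tensor $v_iv_j$. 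Two minor remarks: the definition of inner in the paper also requires $P(0)=0$, which neither the lemma statement nor your proof addresses explicitly (it follows from asymptotic stability of $\bar a$ together with (\ref{eq3-14a}), since $P$ is then the observability-type function vanishing at the origin); and your verification of the costate equation implicitly fixes the boundary condition for $p$ so that the trajectory lies on the invariant manifold, which is the intended reading of the Hamiltonian extension and worth stating if you write the argument out in full.
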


The results given in the above lemma are well-known, see, for instance, \cite%
{Scherpen1994,Ball1996}. Hence, the proof of Lemma \ref{lemma1} is omitted.

According to Lemma \ref{lemma1}, the SIR $\Sigma _{I}$ becomes inner, when
feedback and feed-forward controllers $g(x)$ and $V(x)$ are set in such a
way such that equations (\ref{eq3-14a})-(\ref{eq3-14c}) are solved.

Let $\Sigma _{I}$ be inner, and denoted by $\Sigma _{I,0}.$ We propose to
build estimator $\Pi $ as follows%
\begin{equation}
\left[ 
\begin{array}{c}
\hat{u} \\ 
\hat{y}%
\end{array}%
\right] =\Pi \left[ 
\begin{array}{c}
u \\ 
y%
\end{array}%
\right] =\Sigma _{I,0}\circ \left( D\Sigma _{I,0}\right) ^{T}\left( \left[ 
\begin{array}{c}
u \\ 
y%
\end{array}%
\right] \right) ,  \label{eq3-16a}
\end{equation}%
whose state space representation is, by connecting $z_{a}$ and $v$ in the
Hamiltonian extension (\ref{eq3-15}) and defining%
\begin{equation*}
v_{a}=\left[ 
\begin{array}{c}
u \\ 
y%
\end{array}%
\right] ,
\end{equation*}%
given by 
\begin{align}
& \;\;\;\;\;\;\;\;\;\;\Sigma _{I,0}\circ \left( D\Sigma _{I,0}\right) ^{T}: 
\notag \\
& 
\begin{cases}
\dot{x}=\bar{a}(x)+B(x)B^{T}(x)p+B(x)\bar{D}^{T}(x)\left[ 
\begin{array}{c}
u \\ 
y%
\end{array}%
\right]  \\ 
\dot{p}=-\left( \frac{\partial \bar{a}(x)}{\partial x}+\frac{\partial }{%
\partial x}B(x)\left( B^{T}(x)p+\bar{D}^{T}(x)\left[ 
\begin{array}{c}
u \\ 
y%
\end{array}%
\right] \right) \right) ^{T}p \\ 
\;\;\;-\hspace{-2pt}\left( \hspace{-2pt}\frac{\partial \bar{c}\left(
x\right) }{\partial x}\right) ^{T}\hspace{-3pt}\left[ 
\begin{array}{c}
u \\ 
y%
\end{array}%
\right] \hspace{-3pt}-\hspace{-3pt}\left( B^{T}(x)p\hspace{-2pt}+\hspace{-2pt%
}\bar{D}^{T}(x)\hspace{-2pt}\left[ 
\begin{array}{c}
u \\ 
y%
\end{array}%
\right] \hspace{-2pt}\right) ^{T}\hspace{-4pt}\frac{\partial ^{T}\bar{D}(x)}{%
\partial x}\hspace{-2pt}\left[ 
\begin{array}{c}
u \\ 
y%
\end{array}%
\right]  \\ 
\left[ 
\begin{array}{c}
\hat{u} \\ 
\hat{y}%
\end{array}%
\right] =\bar{c}(x)+\bar{D}(x)B^{T}(x)p+\bar{D}(x)\bar{D}^{T}(x)\left[ 
\begin{array}{c}
u \\ 
y%
\end{array}%
\right] .%
\end{cases}
\label{eq3-16}
\end{align}%
Below, it is demonstrated that system $\Pi =\Sigma _{I,0}\circ \left(
D\Sigma _{I,0}\right) ^{T}$ is idempotent and lossless by information
compression, i.e. satisfying (\ref{eq3-12}).

\begin{theorem}
	\label{Theo1}Given system (\ref{eq3-16a}), it holds%
	\begin{align*}
	&\Sigma _{I,0}\circ \left( D\Sigma _{I,0}\right) ^{T}\circ \Sigma _{I,0}\circ
	\left( D\Sigma _{I,0}\right) ^{T}\left( \left[ 
	\begin{array}{c}
	u \\ 
	y%
	\end{array}%
	\right] \right) \\
	&=\Sigma _{I,0}\circ \left( D\Sigma _{I,0}\right) ^{T}\left( %
	\left[ 
	\begin{array}{c}
	u \\ 
	y%
	\end{array}%
	\right] \right) , \\
	&\left\Vert \left( D\Sigma _{I,0}\right) ^{T}\left( \left[ 
	\begin{array}{c}
	u \\ 
	y%
	\end{array}%
	\right] \right) \right\Vert _{2}=\left\Vert v\right\Vert _{2}=\left\Vert %
	\left[ 
	\begin{array}{c}
	\hat{u} \\ 
	\hat{y}%
	\end{array}%
	\right] \right\Vert _{2}\\
	&=\left\Vert \Sigma _{I,0}\circ \left( D\Sigma
	_{I,0}\right) ^{T}\left( \left[ 
	\begin{array}{c}
	u \\ 
	y%
	\end{array}%
	\right] \right) \right\Vert _{2}.
	\end{align*}
\end{theorem}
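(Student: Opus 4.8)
The plan is to reduce both assertions to two operator-level consequences of the hypothesis that $\Sigma_{I,0}$ is inner. Abbreviate the process-data vector by $\zeta$ and set $v:=\left(D\Sigma_{I,0}\right)^{T}(\zeta)$ for the latent variable, so that $\Pi(\zeta)=\Sigma_{I,0}(v)$ is the reconstructed pair $\left(\hat{u},\hat{y}\right)$. I would first isolate the two facts I intend to use: (i) the \emph{left-inverse identity} $\left(D\Sigma_{I,0}\right)^{T}\circ\Sigma_{I,0}=\mathrm{id}$ on $\mathcal{V}$, which is exactly the inner condition $z_{a}=v$ read as an input--output identity of the Hamiltonian system (\ref{eq3-13}) under the connection $v_{a}=\bar{z}=\Sigma_{I,0}(v)$; and (ii) the \emph{norm preservation} $\left\Vert v\right\Vert_{2}=\left\Vert\Sigma_{I,0}(v)\right\Vert_{2}$, which is the lossless storage-function property (\ref{eq3-14b}). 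Both are furnished by the definition of an inner system together with the feasibility of (\ref{eq3-14a})--(\ref{eq3-14c}) guaranteed by Lemma \ref{lemma1}.

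For the idempotency claim, applying $\Pi$ a second time gives
\begin{equation*}
\Pi\bigl(\Pi(\zeta)\bigr)=\Sigma_{I,0}\circ\left(D\Sigma_{I,0}\right)^{T}\circ\Sigma_{I,0}(v),
\end{equation*}
and the inner composition $\left(D\Sigma_{I,0}\right)^{T}\circ\Sigma_{I,0}$ acting on $v$ collapses to $v$ by (i). Hence $\Pi\circ\Pi(\zeta)=\Sigma_{I,0}(v)=\Pi(\zeta)$, which is the first assertion. The only point to verify here is that $v$ indeed belongs to the input space $\mathcal{V}$ of $\Sigma_{I,0}$ and that the series interconnection is initialized at the zero state, so that (i) legitimately applies to this particular $v$.

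For the chain of norm equalities, the two outer identities are definitional: the first because $v$ is by definition the output of the encoder $\left(D\Sigma_{I,0}\right)^{T}$, and the last because $\left(\hat{u},\hat{y}\right)=\Sigma_{I,0}(v)=\Sigma_{I,0}\circ\left(D\Sigma_{I,0}\right)^{T}(\zeta)$ is the decoder output. The remaining middle equality $\left\Vert v\right\Vert_{2}=\left\Vert\Sigma_{I,0}(v)\right\Vert_{2}$ is where I would invoke (\ref{eq3-14b}) with $t_{1}=0$, $t_{2}=\infty$ and $x(0)=0$: the boundary term vanishes since $P(0)=0$ and, by asymptotic stability of $\bar{a}$, the state and hence $P\left(x(t)\right)$ tend to $0$ for $v\in\mathcal{L}_{2}$, so the integral identity reduces to $\frac{1}{2}\left\Vert v\right\Vert_{2}^{2}=\frac{1}{2}\left\Vert\bar{z}\right\Vert_{2}^{2}$ with $\bar{z}=\Sigma_{I,0}(v)$.

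I expect the principal obstacle to be the rigorous justification of (i) as a genuine pathwise operator equality, not merely a statement about norms. Evaluating $\partial H/\partial v$ and substituting the inner conditions $\bar{D}^{T}(x)\bar{D}(x)=I$ and $P_{x}(x)B(x)+\bar{c}^{T}(x)\bar{D}(x)=0$ yields $z_{a}=v+B^{T}(x)\bigl(p-P_{x}^{T}(x)\bigr)$, so that $z_{a}=v$ holds precisely when the costate obeys $p(t)=P_{x}^{T}\left(x(t)\right)$ along the flow. I would therefore check that this costate relation is invariant under the adjoint dynamics of (\ref{eq3-15}) once (\ref{eq3-14a})--(\ref{eq3-14c}) hold, with the consistent initialization $p(0)=P_{x}^{T}\left(x(0)\right)=0$. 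Matching the initial costates across the series interconnection and confirming that all intermediate signals stay in $\mathcal{L}_{2}$ (so the compositions are well defined) are the remaining technical points; everything else is bookkeeping, and as an alternative one may verify both claims directly on the explicit realization (\ref{eq3-16}) at the cost of a considerably longer computation.
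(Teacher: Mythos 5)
Your proposal is correct and follows essentially the same route as the paper's proof: idempotency from the inner identity $\left(D\Sigma_{I,0}\right)^{T}\circ\Sigma_{I,0}=\mathrm{id}$ (i.e. $z_{a}=v$), and the norm chain from the lossless property (\ref{eq3-14b}) plus the definitional outer equalities. The additional technical points you flag (costate invariance $p=P_{x}^{T}(x)$, zero initialization, $\mathcal{L}_{2}$ membership) are not addressed in the paper, which simply invokes the definition of inner via Lemma \ref{lemma1}; your verification of them is a refinement, not a different argument.
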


\begin{proof}
Let 
\begin{equation*}
v=\left( D\Sigma _{I,0}\right) ^{T}\left( \left[ 
\begin{array}{c}
u \\ 
y%
\end{array}%
\right] \right) .
\end{equation*}%
Since $\Sigma _{I,0}$ is inner, it holds 
\begin{gather*}
\Sigma _{I,0}\circ \left( D\Sigma _{I,0}\right) ^{T}\circ \Sigma _{I,0}\circ
\left( D\Sigma _{I,0}\right) ^{T}\left( \left[ 
\begin{array}{c}
u \\ 
y%
\end{array}%
\right] \right) \\
=\Sigma _{I,0}\circ \left( D\Sigma _{I,0}\right) ^{T}\left( \left[ 
\begin{array}{c}
u \\ 
y%
\end{array}%
\right] \right) .
\end{gather*}%
Moreover, due to the lossless property of $\Sigma _{I,0}$, 
\begin{equation*}
\left\Vert v\right\Vert _{2}=\left\Vert \Sigma _{I,0}(v)\right\Vert _{2},
\end{equation*}%
which results in 
\begin{gather*}
\left\Vert \left( D\Sigma _{I,0}\right) ^{T}\left( \left[ 
\begin{array}{c}
u \\ 
y%
\end{array}%
\right] \right) \right\Vert _{2}=\left\Vert v\right\Vert _{2}=\left\Vert
\Sigma _{I,0}(v)\right\Vert _{2} \\
=\left\Vert \Sigma _{I,0}\circ \left( D\Sigma _{I,0}\right) ^{T}\left( \left[
\begin{array}{c}
u \\ 
y%
\end{array}%
\right] \right) \right\Vert _{2}=\left\Vert \left[ 
\begin{array}{c}
\hat{u} \\ 
\hat{y}%
\end{array}%
\right] \right\Vert _{2}.
\end{gather*}%
The theorem is thus proven.
\end{proof}

In the context of fault detection, Theorem \ref{Theo1} implies that

\begin{itemize}
\item $\forall \left[ 
\begin{array}{c}
u \\ 
y%
\end{array}
\right] \in \mathcal{I}_{\Sigma },$ i.e. process data generated under the
nominal operation condition, 
\begin{align*}
\left[ 
\begin{array}{c}
\hat{u} \\ 
\hat{y}%
\end{array}
\right] & =\Pi \left[ 
\begin{array}{c}
u \\ 
y%
\end{array}
\right] =\Sigma _{I,0}(v)=\left[ 
\begin{array}{c}
u \\ 
y%
\end{array}
\right] , \\
v& =\left( D\Sigma _{I,0}\right) ^{T}\left( \left[ 
\begin{array}{c}
u \\ 
y%
\end{array}
\right] \right) ,
\end{align*}

\item the generation of the latent variable by 
\begin{equation*}
v=\left( D\Sigma _{I,0}\right) ^{T}\left( \left[ 
\begin{array}{c}
u \\ 
y%
\end{array}
\right] \right)
\end{equation*}
is a lossless information compression, and

\item $\forall \left[ 
\begin{array}{c}
u \\ 
y%
\end{array}
\right] \notin \mathcal{I}_{\Sigma },$ 
\begin{gather*}
\left\Vert \left[ 
\begin{array}{c}
u \\ 
y%
\end{array}
\right] \right\Vert _{2}\geq \left\Vert \left[ 
\begin{array}{c}
\hat{u} \\ 
\hat{y}%
\end{array}
\right] \right\Vert _{2}=\left\Vert v\right\Vert _{2}, \\
\Longrightarrow \left\Vert \left[ 
\begin{array}{c}
u \\ 
y%
\end{array}
\right] -\left[ 
\begin{array}{c}
\hat{u} \\ 
\hat{y}%
\end{array}
\right] \right\Vert _{2}\geq 0.
\end{gather*}
\end{itemize}

\noindent As a result, the formulated optimal fault detection problem can be
solved by (i) constructing an AE realizing%
\begin{equation}
\left\{ 
\begin{array}{l}
\text{encoder: }v=\left( D\Sigma _{I,0}\right) ^{T}\left( \left[ 
\begin{array}{c}
u \\ 
y%
\end{array}%
\right] \right) , \\ 
\text{decoder: }\left[ 
\begin{array}{c}
\hat{u} \\ 
\hat{y}%
\end{array}%
\right] =\Sigma _{I,0}(v),%
\end{array}%
\right.   \label{eq3-17}
\end{equation}%
with $v$ as the latent variable, (ii) defining the evaluation function%
\begin{equation}
J=\left\Vert \left[ 
\begin{array}{c}
u \\ 
y%
\end{array}%
\right] -\left[ 
\begin{array}{c}
\hat{u} \\ 
\hat{y}%
\end{array}%
\right] \right\Vert _{2}^{2}=\left\Vert \left[ 
\begin{array}{c}
u \\ 
y%
\end{array}%
\right] -\Sigma _{I,0}(v)\right\Vert _{2}^{2},  \label{eq3-18}
\end{equation}%
and setting the threshold as 
\begin{align}
J_{th}& =\sup_{\substack{ \left[ 
\begin{array}{c}
u \\ 
y%
\end{array}%
\right] \in \mathcal{Z} \\ FAR\leq \gamma }}dist\left( \left[ 
\begin{array}{c}
u \\ 
y%
\end{array}%
\right] ,\mathcal{I}_{\Sigma }\right) ^{2},\mathcal{Z}=\mathcal{U\times Y} 
\notag \\
& =\sup_{\substack{ \left[ 
\begin{array}{c}
u \\ 
y%
\end{array}%
\right] \in \mathcal{Z} \\ FAR\leq \gamma }}\inf_{\left[ 
\begin{array}{c}
u_{0} \\ 
y_{0}%
\end{array}%
\right] \in \mathcal{I}_{\Sigma }}\left\Vert \left[ 
\begin{array}{c}
u \\ 
y%
\end{array}%
\right] -\left[ 
\begin{array}{c}
u_{0} \\ 
y_{0}%
\end{array}%
\right] \right\Vert _{2}^{2}.  \label{eq3-19}
\end{align}%
It is obvious that the above solution can only be analytically achieved if
(i) the process model (\ref{eq3-10}) exists, and (ii) $g(x)$ and $V(x)$ in
the SIR $\Sigma _{I,0}$ can be determined by solving equations (\ref{eq3-14a}%
)-(\ref{eq3-14c}). In practical applications, both of these requirements can
often not be satisfied or are satisfied at remarkably high engineering
costs. In the next subsection, we are going to introduce a data-driven
solution alternatively using an AE that will be learnt in such a way that it
is idempotent and results in lossless information compression via the latent
variable $v$.

\subsection{An AE-based fault detection scheme \label{sub4}}

We now propose a learning algorithm to train an AE to realize the optimal
fault detection system given in (\ref{eq3-17})-(\ref{eq3-19}) on the
assumption that

\begin{itemize}
\item sufficient process data $\left( u,y\right) $ have been collected
during fault-free operations,

\item the data have been recorded batchwise, 
\begin{equation*}
\left[ 
\begin{array}{c}
u^{(i)} \\ 
y^{(i)}%
\end{array}%
\right] =\left\{ \left[ 
\begin{array}{c}
u(k) \\ 
y(k)%
\end{array}%
\right] ,k\in \left[ k_{i},k_{i}+N\right] ,\right\} ,
\end{equation*}%
where $k,k_{i}$ are the sampling number, $N$ is the length of the data
batch, and

\item the data set is denoted by $\mathcal{Z},$ 
\begin{equation*}
\mathcal{Z}=\left\{ \left[ 
\begin{array}{c}
u^{(i)} \\ 
y^{(i)}%
\end{array}%
\right] ,i=1,\cdots M\right\} .
\end{equation*}
\end{itemize}

\noindent For our purpose,

\begin{itemize}
\item recurrent neural networks $(\mathcal{R}\mathcal{N}\mathcal{N}s)$ are
firstly constructed as 
\begin{equation}
\left\{ 
\begin{array}{l}
\hspace{-2pt}\text{encoder: }\mathcal{R}\mathcal{N}\mathcal{N}_{en}\hspace{%
-2pt}\left( \hspace{-1pt}\theta _{en},\left[ 
\begin{array}{c}
u \\ 
y%
\end{array}%
\right] \hspace{-1pt}\right) \hspace{-3pt}=\hspace{-2pt}\left( D\Sigma
_{I,0}\right) ^{T}\hspace{-2pt}\left( \left[ 
\begin{array}{c}
u \\ 
y%
\end{array}%
\right] \right) \hspace{-3pt}=\hspace{-2pt}v \\ 
\hspace{-2pt}\text{decoder: }\mathcal{R}\mathcal{N}\mathcal{N}_{de}\left(
\theta _{de},v\right) =\Sigma _{I,0}(v)=\left[ 
\begin{array}{c}
\hat{u} \\ 
\hat{y}%
\end{array}%
\right] ,%
\end{array}%
\right. 
\end{equation}%
and they build an AE, based on it,

\item the evaluation function is computed (during training and later for the
online implementation) 
\begin{align*}
J& =\left\Vert \left[ 
\begin{array}{c}
u \\ 
y%
\end{array}%
\right] -\left[ 
\begin{array}{c}
\hat{u} \\ 
\hat{y}%
\end{array}%
\right] \right\Vert _{2}^{2}=\left\Vert \left[ 
\begin{array}{c}
u \\ 
y%
\end{array}%
\right] -\mathcal{R}\mathcal{N}\mathcal{N}_{de}\left( \theta _{de},v\right)
\right\Vert _{2}^{2} \\
& =\left\Vert \left[ 
\begin{array}{c}
u \\ 
y%
\end{array}%
\right] -\mathcal{R}\mathcal{N}\mathcal{N}_{de}\left( \theta _{de},\mathcal{R%
}\mathcal{N}\mathcal{N}_{en}\left( \theta _{en},\left[ 
\begin{array}{c}
u \\ 
y%
\end{array}%
\right] \right) \right) \right\Vert _{2}^{2},
\end{align*}%
after the AE is learnt, and finally

\item the threshold is to be determined as 
\begin{align*}
&J_{th}=\sup_{\substack{ \left[ 
\begin{array}{c}
u^{(i)} \\ 
y^{(i)}%
\end{array}
\right] \in \mathcal{Z}  \\ FAR\leq \gamma }}\left\Vert \left[ 
\begin{array}{c}
u^{(i)} \\ 
y^{(i)}%
\end{array}
\right] -\left[ 
\begin{array}{c}
\hat{u}^{(i)} \\ 
\hat{y}^{(i)}%
\end{array}
\right]\right\Vert _{2}^{2} \\
&\left[ 
\begin{array}{c}
\hat{u}^{(i)} \\ 
\hat{y}^{(i)}%
\end{array}
\right]=\mathcal{R}\mathcal{N}\mathcal{N}_{de}\left( \theta _{de},\mathcal{R 
}\mathcal{N}\mathcal{N}_{en}\left( \theta _{en},\left[ 
\begin{array}{c}
u^{(i)} \\ 
y^{(i)}%
\end{array}
\right] \right) \right) .
\end{align*}
\end{itemize}

\noindent We would like to remark that process data $\left( u,y\right) $
exist generally as discrete-time samples. Correspondingly, the $\mathcal{L}%
_{2}$ norm adopted in the evaluation function is approximated by%
\begin{equation*}
\left\Vert \alpha \right\Vert
_{2}^{2}:=\sum\limits_{i=k_{0}}^{k_{0}+N}\alpha ^{T}(i)\alpha (i).
\end{equation*}%
Next, as a major contribution of our work, the loss function for optimizing
the parameters $\theta _{en},\theta _{de}$ of $\mathcal{R}\mathcal{N}%
\mathcal{N}s$ is defined. In order to learn an idempotent autoencoder with
lossless information compression, two regularized terms, $\mathcal{L}%
_{2}(\theta _{de},\theta _{en})$ and $\mathcal{L}_{3}(\theta _{de},\theta
_{en}),$ are introduced into the loss function, in addition to the standard
index, 
\begin{equation}
\mathcal{L}_{1}(\theta _{de},\theta _{en})=\frac{1}{M}\sum_{i=1}^{M}\left%
\Vert \left[ 
\begin{array}{c}
u^{(i)} \\ 
y^{(i)}%
\end{array}%
\right] -\left[ 
\begin{array}{c}
\hat{u}^{(i)} \\ 
\hat{y}^{(i)}%
\end{array}%
\right] \right\Vert _{2}^{2}.  \label{eq3-41}
\end{equation}%
It is obvious that $\mathcal{L}_{1}(\theta _{de},\theta _{en})$ is dedicated
to minimizing the reconstruction error and thus the evaluation function%
\begin{equation*}
J=\left\Vert \left[ 
\begin{array}{c}
u \\ 
y%
\end{array}%
\right] -\left[ 
\begin{array}{c}
\hat{u} \\ 
\hat{y}%
\end{array}%
\right] \right\Vert _{2}^{2}.
\end{equation*}%
The two regularized terms are 
\begin{gather}
\mathcal{L}_{2}(\theta _{de},\theta _{en})=  \notag \\
\frac{1}{M}\hspace{-2pt}\sum_{i=1}^{M}\hspace{-2pt}\left\Vert \left[ 
\begin{array}{c}
\hat{u}^{(i)} \\ 
\hat{y}^{(i)}%
\end{array}%
\right] \hspace{-3pt}-\hspace{-2pt}\mathcal{R}\mathcal{N}\mathcal{N}_{de}%
\hspace{-2pt}\left( \theta _{de},\mathcal{R}\mathcal{N}\mathcal{N}_{en}%
\hspace{-2pt}\left( \theta _{en},\left[ 
\begin{array}{c}
\hat{u}^{(i)} \\ 
\hat{y}^{(i)}%
\end{array}%
\right] \right) \right) \right\Vert _{2}^{2},  \label{eq3-42} \\
\mathcal{L}_{3}(\theta _{de},\theta _{en})=  \notag \\
\frac{1}{M}\hspace{-2pt}\sum_{i=1}^{M}\hspace{-2pt}\left\Vert \mathcal{R}%
\mathcal{N}\mathcal{N}_{en}\hspace{-2pt}\left( \theta _{en},\left[ 
\begin{array}{c}
u^{(i)} \\ 
y^{(i)}%
\end{array}%
\right] \right) \hspace{-3pt}-\hspace{-2pt}\mathcal{R}\mathcal{N}\mathcal{N}%
_{en}\hspace{-2pt}\left( \theta _{en},\left[ 
\begin{array}{c}
\hat{u}^{(i)} \\ 
\hat{y}^{(i)}%
\end{array}%
\right] \right) \right\Vert _{2}^{2}.  \label{eq3-43}
\end{gather}%
It is apparent that $\mathcal{L}_{2}(\theta _{de},\theta _{en})$ is
dedicated to training the AE to be idempotent, and $\mathcal{L}_{3}(\theta
_{de},\theta _{en})$ serves for the purpose 
\begin{gather*}
\mathcal{R}\mathcal{N}\mathcal{N}_{en}\hspace{-2pt}\left( \theta _{en},\left[
\begin{array}{c}
u \\ 
y%
\end{array}%
\right] \right) -\mathcal{R}\mathcal{N}\mathcal{N}_{en}\hspace{-2pt}\left(
\theta _{en},\left[ 
\begin{array}{c}
\hat{u} \\ 
\hat{y}%
\end{array}%
\right] \right)  \\
=v-\hat{v}\rightarrow 0\Longleftrightarrow \mathcal{R}\mathcal{N}\mathcal{N}%
_{en}\left( \theta _{en},\mathcal{R}\mathcal{N}\mathcal{N}_{de}\left( \theta
_{de}\right) \right) \rightarrow I,
\end{gather*}%
which implies that the AE is inner. Consequently, 
\begin{figure*}[t]
\centering\includegraphics[scale=0.5]{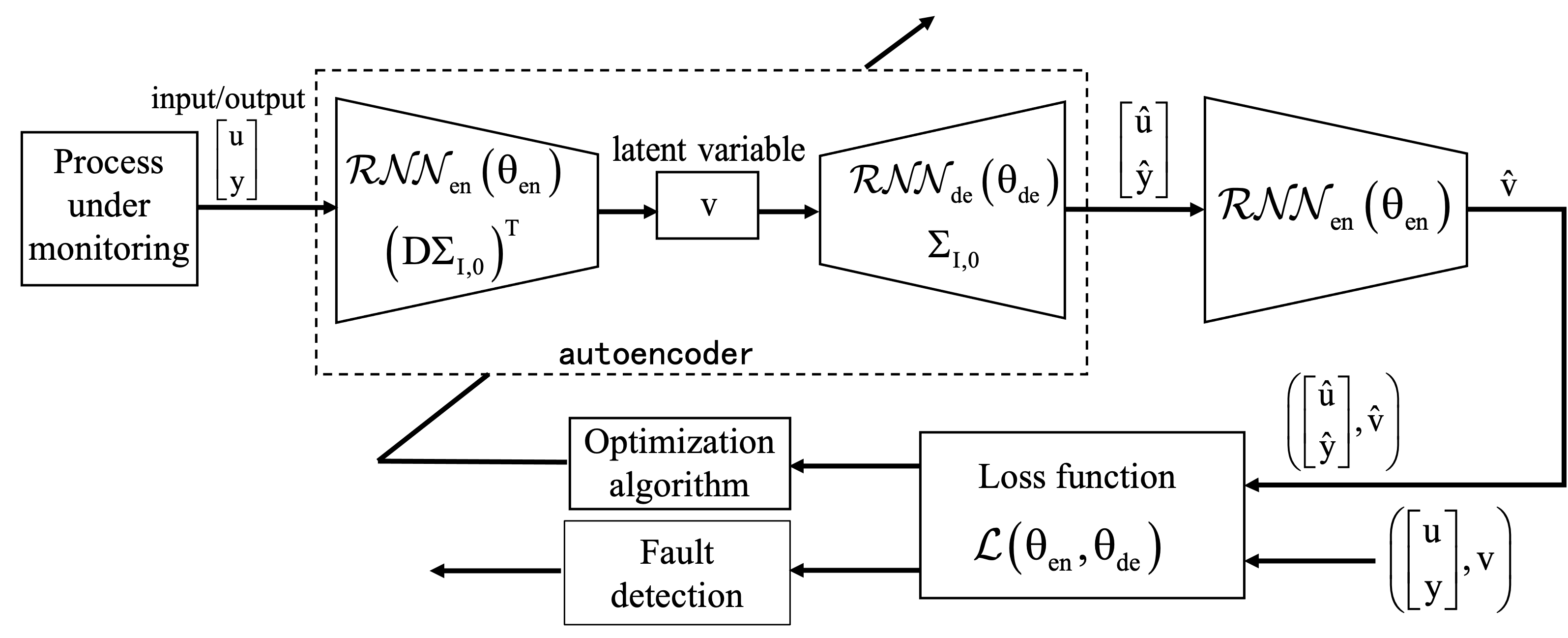}
\caption{The schematic of the inner-autoencoder-based fault detection}
\label{fig3}
\end{figure*}
\begin{align*}
& \left\Vert \left[ 
\begin{array}{c}
\hat{u} \\ 
\hat{y}%
\end{array}%
\right] \right\Vert _{2}=\left\langle v,\mathcal{R}\mathcal{N}\mathcal{N}%
_{en}\left( \theta _{en},\mathcal{R}\mathcal{N}\mathcal{N}_{de}\left( \theta
_{de},v\right) \right) \right\rangle , \\
& \left\Vert \left[ 
\begin{array}{c}
\hat{u} \\ 
\hat{y}%
\end{array}%
\right] \right\Vert _{2}-\left\Vert v\right\Vert _{2}\rightarrow 0,
\end{align*}%
which results in lossless information compression. Finally, we have the
total loss function $\mathcal{L}(\theta _{en},\theta _{de})$ given by%
\begin{equation}
\mathcal{L}(\theta _{en},\theta _{de})=\sum_{i=1}^{3}\omega _{i}\mathcal{L}%
_{i}(\theta _{en},\theta _{de}),  \label{eq3-21}
\end{equation}%
where $0<\omega _{i}\leq 1$ denotes a weighting factor. Fig. \ref{fig3}
sketches the learning procedure proposed above, in which

\begin{itemize}
\item the NNs, $\mathcal{R}\mathcal{N}\mathcal{N}_{en}\left( \theta _{en},%
\left[ 
\begin{array}{c}
u^{(i)} \\ 
y^{(i)}%
\end{array}%
\right] \right) ,\mathcal{R}\mathcal{N}\mathcal{N}_{de}\left( \theta
_{de},v\right) ,$ in the AE block are learnt using an optimization algorithm,

\item $\left[ 
\begin{array}{c}
\hat{u} \\ 
\hat{y}%
\end{array}%
\right] $ and $\hat{v}$ are the intermediate variables used for the
calculation of the regularized terms $\mathcal{L}_{2}(\theta _{de},\theta
_{en})$ and $\mathcal{L}_{3}(\theta _{de},\theta _{en}),$ which, together
with the latent variable $v,$ process data $\left[ 
\begin{array}{c}
u \\ 
y%
\end{array}%
\right] $ and $\mathcal{L}_{1}(\theta _{de},\theta _{en}),$ build the loss
function $\mathcal{L}(\theta _{de},\theta _{en})$ (\ref{eq3-21}) for the
optimization.
\end{itemize}

\noindent We name the above AE \textit{I}nner-\textit{A}utoencoder (I-AE).
It is worth to emphasize that the I-AE-based solution proposed above is a
pure data-driven solution. It requires (i) no process model, e.g. as given
by (\ref{eq3-10}), and (ii) no solution of differential equations like (\ref%
{eq3-14a})-(\ref{eq3-14c}). So far, it is a general and practical solution
for detecting faults in nonlinear dynamic systems.

\section{Data experimental study\label{sub5}}

In this section, the main results of our work on control theoretically
guided training of autoencoders for detecting faults in nonlinear dynamic
processes are evaluated on a three-tank system simulator. The objective of
the evaluation study is twofold:

\begin{itemize}
\item comparison of the I-AE trained under the loss function (\ref{eq3-21})
with a standard AE regarding their capability of performing fault detection,

\item explanation of the role of the latent variable in approaching an
optimal fault detection.
\end{itemize}

\subsection{Experimental setting and data description}

Three-tank systems (TTS) have typical characteristics of chemical processes
and are widely accepted as a benchmark process in research and application
domains of process control and fault diagnosis. The simulator adopted in our
work is the real-time laboratory setup TTS20 that is in operation in the AKS
labor since more than 20 years \citep{Ding2008,Ding2020}. As schematically
sketched in Fig. \ref{fig_tts}, TTS20 consists of three water tanks that are
connected through pipes. Water from a reservoir is pumped into tank 1 and
tank 2, respectively. TTS20 is a nonlinear dynamic control system. Two input
variables, $u_{1}$ and $u_{2},$ manipulate the incoming mass flow into tank
1 and tank 2. In order to regulate the water levels in tank 1 and tank 2,
two PI-controllers are used with the output variables $y_{1}$ and $y_{2},$
which are measured by two level sensors. The reader is referred to \cite%
{Ding2014,Ding2020} for a detailed description of the nonlinear dynamic
model, the asset and controller parameters. The simulator has been developed
in the MATLAB/SIMULINK software environment with a sampling time equal to $%
1s $, and simulated sensor and actuator noises. It demonstrates excellent
simulation results, and is widely used in teaching programs and research
projects.

\begin{figure}[t]
\centering\includegraphics[width=0.45\textwidth]{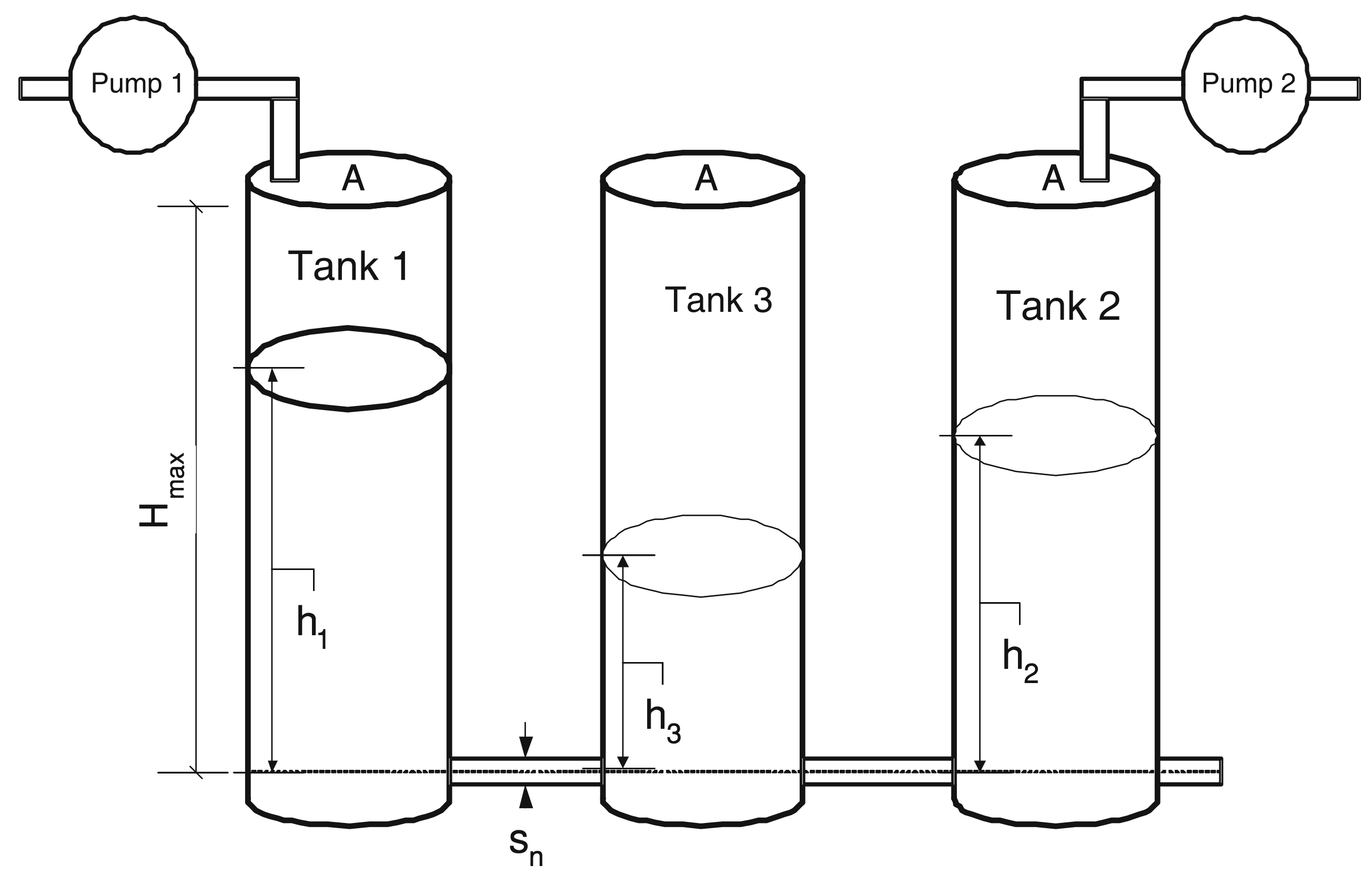}
\caption{Schematic of three-tank system TTS20}
\label{fig_tts}
\end{figure}

\subsection{Design of evaluation program and data description}

In order to guarantee that the NNs to be learnt fully model the nonlinear
system dynamics, the reference signals are randomly generated so that over $%
85\%$ of the operation region of the process are covered. An example is
given in Fig. \ref{fig_y1ref}, which shows the water level in tank 1, $y_{1}$%
, with reference signal $y_{1ref}$.

\begin{figure}[t]
\centering\includegraphics[width=0.45\textwidth]{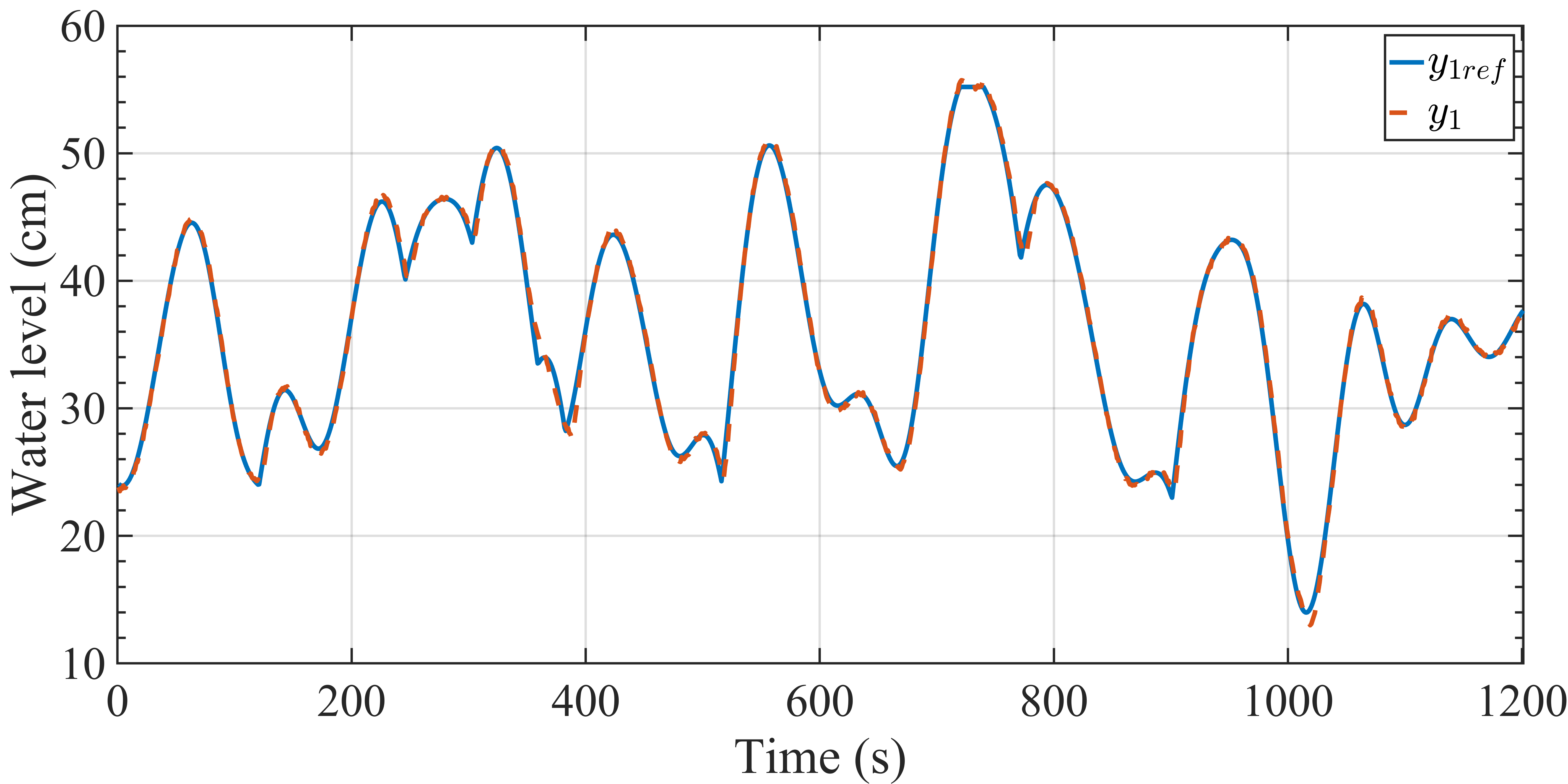}
\caption{Water level in tank 1}
\label{fig_y1ref}
\end{figure}

To prepare the training data, $1,200,000$ fault-free samples are
continuously generated by TTS20-simulator. They are randomly divided into a
training set and a validation set in a ratio of 7:3. The following three
types of faults are simulated:

\begin{itemize}
\item $10\%$ leakage fault in tank 1,

\item $5\%$ sensor gain fault in the water level sensor mounted on tank 2,

\item sensor gain fault in the water level sensor mounted on tank 2 with
stepwise changes from $5\%$ to $10\%$ and $15\%$.
\end{itemize}

\noindent To achieve a reliable fault detection, a moving window with the
window length equal to $100s$ is adopted for the evaluation purpose.

\subsection{Configuration of autoencoders and training}

Recurrent neural networks (RNNs) are known to be capable of learning time
evolutionary features of dynamic systems. For our purpose, a typical RNN,
the long short-term memory (LSTM) NN \citep{lstm}, is applied for building
the autoencoder. In our experiments, the following hyperparameter
combination is adopted:

\begin{itemize}
\item four layers in the encoder and decoder, respectively,

\item two LSTM units are included in each layer.
\end{itemize}

\noindent With these hyperparameters, all models presented below are set and
trained.

In order to showcase the capability of the control theoretically guided
autoencoder learning, two autoencoders are trained by means of two different
loss functions. The autoencoder trained under the loss function $\mathcal{L}%
(\theta _{de},\theta _{en}),$ 
\begin{equation*}
\mathcal{L}(\theta _{de},\theta _{en})=\mathcal{L}_{1}(\theta _{de},\theta
_{en})=\frac{1}{M}\sum_{i=1}^{M}\left\Vert \left[ 
\begin{array}{c}
u^{(i)} \\ 
y^{(i)}%
\end{array}%
\right] -\left[ 
\begin{array}{c}
\hat{u}^{(i)} \\ 
\hat{y}^{(i)}%
\end{array}%
\right] \right\Vert _{2}^{2},
\end{equation*}%
is denoted by AE with $\mathcal{L}_{1},$ while the I-AE trained by the loss
function $\mathcal{L}(\theta _{de},\theta _{en})$ given in (\ref{eq3-21}),
i.e.%
\begin{equation*}
\mathcal{L}(\theta _{en},\theta _{de})=\sum_{i=1}^{3}\mathcal{L}_{i}(\theta
_{en},\theta _{de}).
\end{equation*}%
In Fig. \ref{fig_loss1}, the values of the loss functions of both AEs during
the training process are shown for comparison. It is apparent that

\begin{itemize}
\item in both cases, the training process converges properly, and

\item the AE with $\mathcal{L}_{1}$ is learnt faster with a smaller
reconstruction error than I-AE.
\end{itemize}

\noindent Moreover, according to the final reconstruction error and on the
demand for an FAR upper bound equal to $0.05$, threshold is set to be $%
0.2201 $. 
\begin{figure}[t]
\centering
\includegraphics[width=0.48\textwidth]{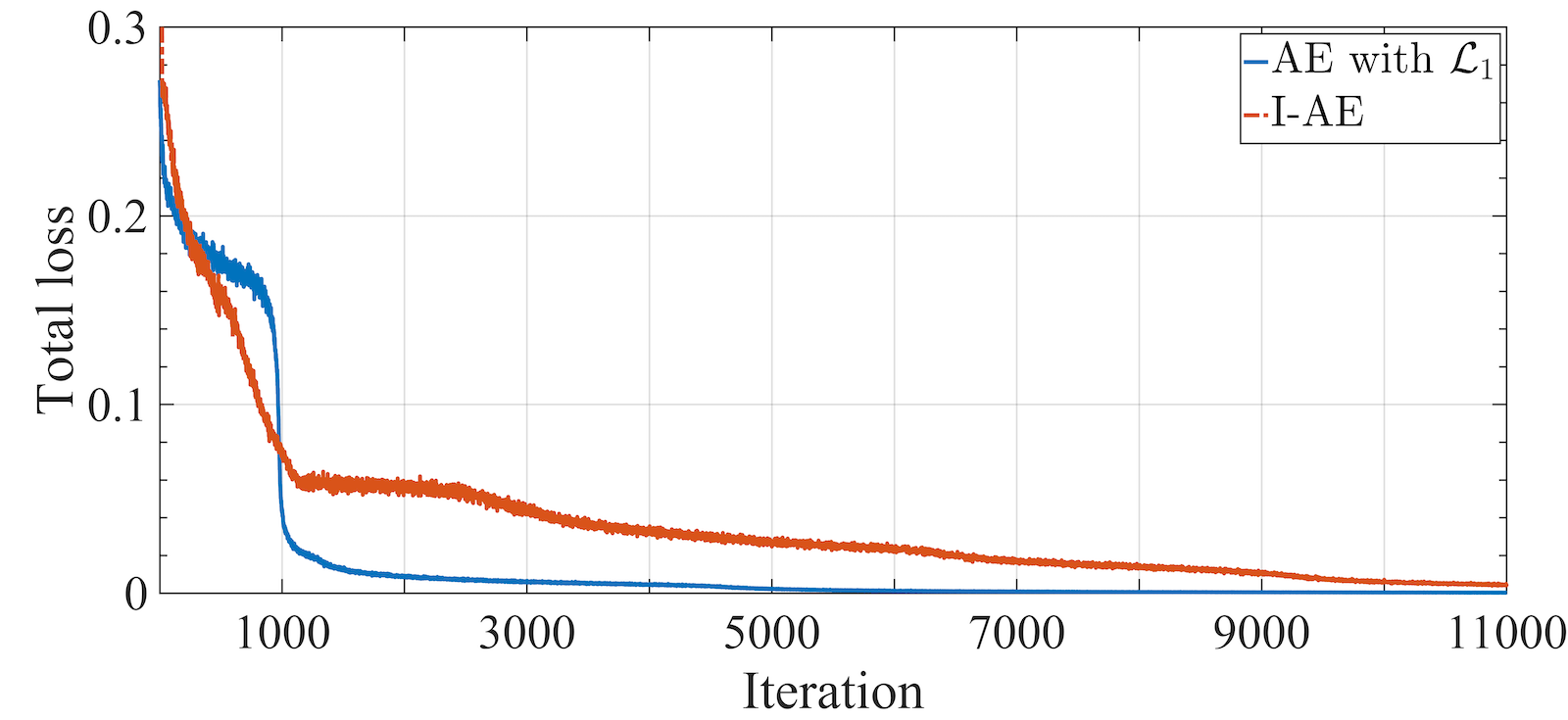}
\caption{Training loss functions in comparison}
\label{fig_loss1}
\end{figure}

\subsection{Fault detection performance evaluation}

The trained AE-based fault detection systems, AE with $\mathcal{L}_{1}$ and
I-AE, are tested aiming at examining and comparing their fault detection
performance. To this end, the following tests are designed:

\begin{itemize}
\item the TTS20-simulator runs for $400s$ under fault-free operation
conditions, and a fault is injected at the time instant $401s$ with a
duration of $400s,$

\item the AE-based fault detection system runs for $800s,$ and

\item the above procedure is repeated $1000$ times under random operation
conditions.
\end{itemize}
\begin{figure}[t]
\centering
\subfigure[Detection result of AE with
$\mathcal{L}_1$]{\includegraphics[width=0.47\textwidth]{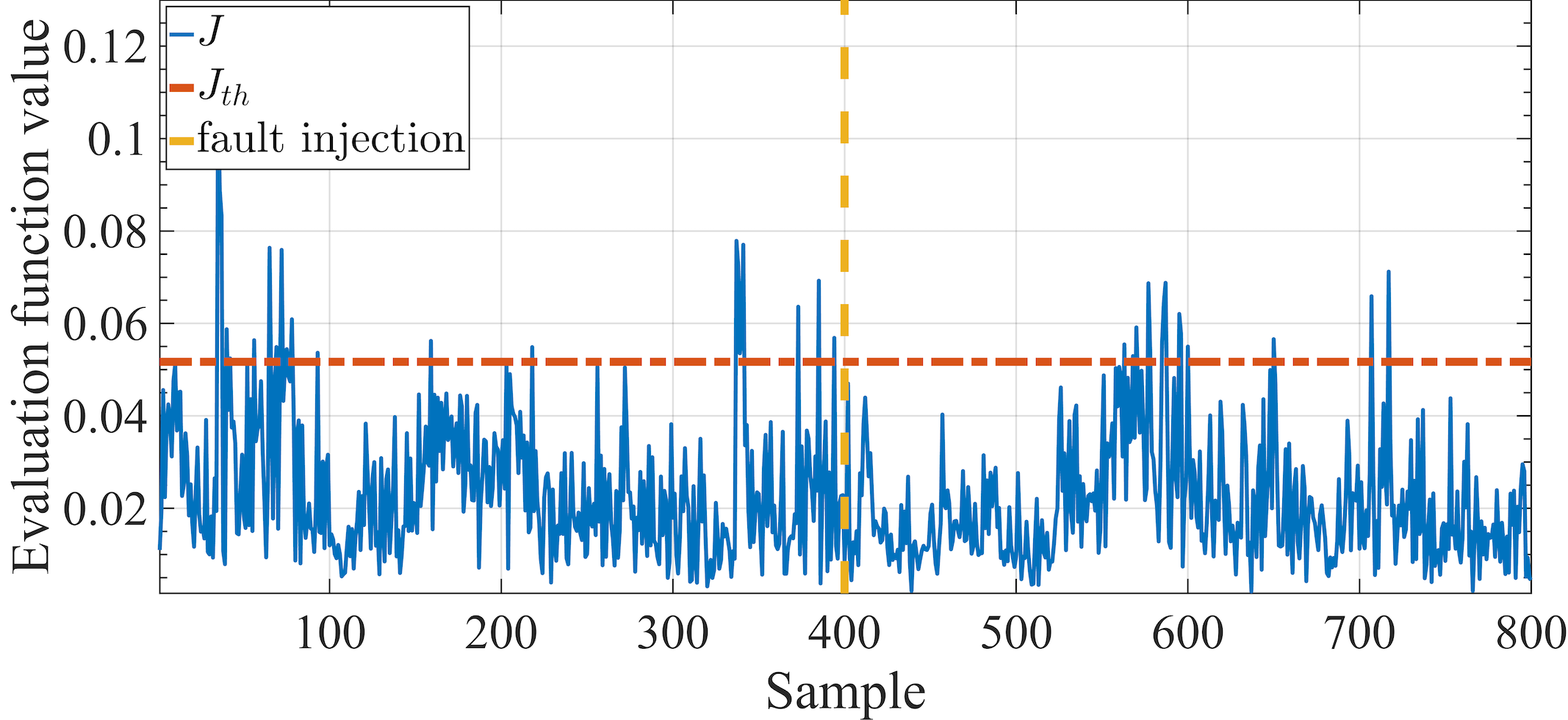}}%
\hspace{0.4in} 
\subfigure[Detection result of
I-AE]{\includegraphics[width=0.45\textwidth]{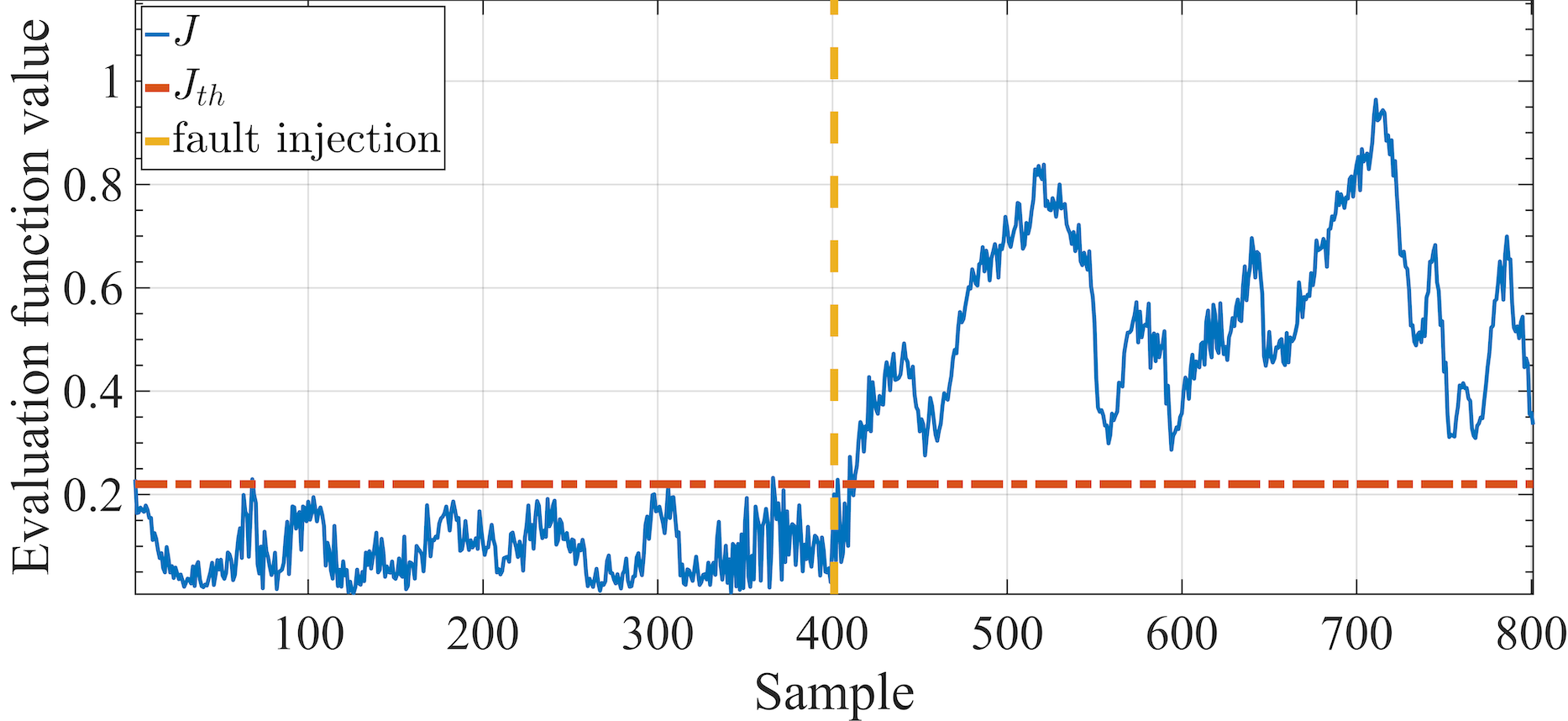}}
\caption{Detection of $10\%$ leakage fault in tank 1}
\label{fig_fd1}
\end{figure}

\begin{figure}[t]
\centering
\subfigure[Detection result of AE with
$\mathcal{L}_1$]{\includegraphics[width=0.45\textwidth]{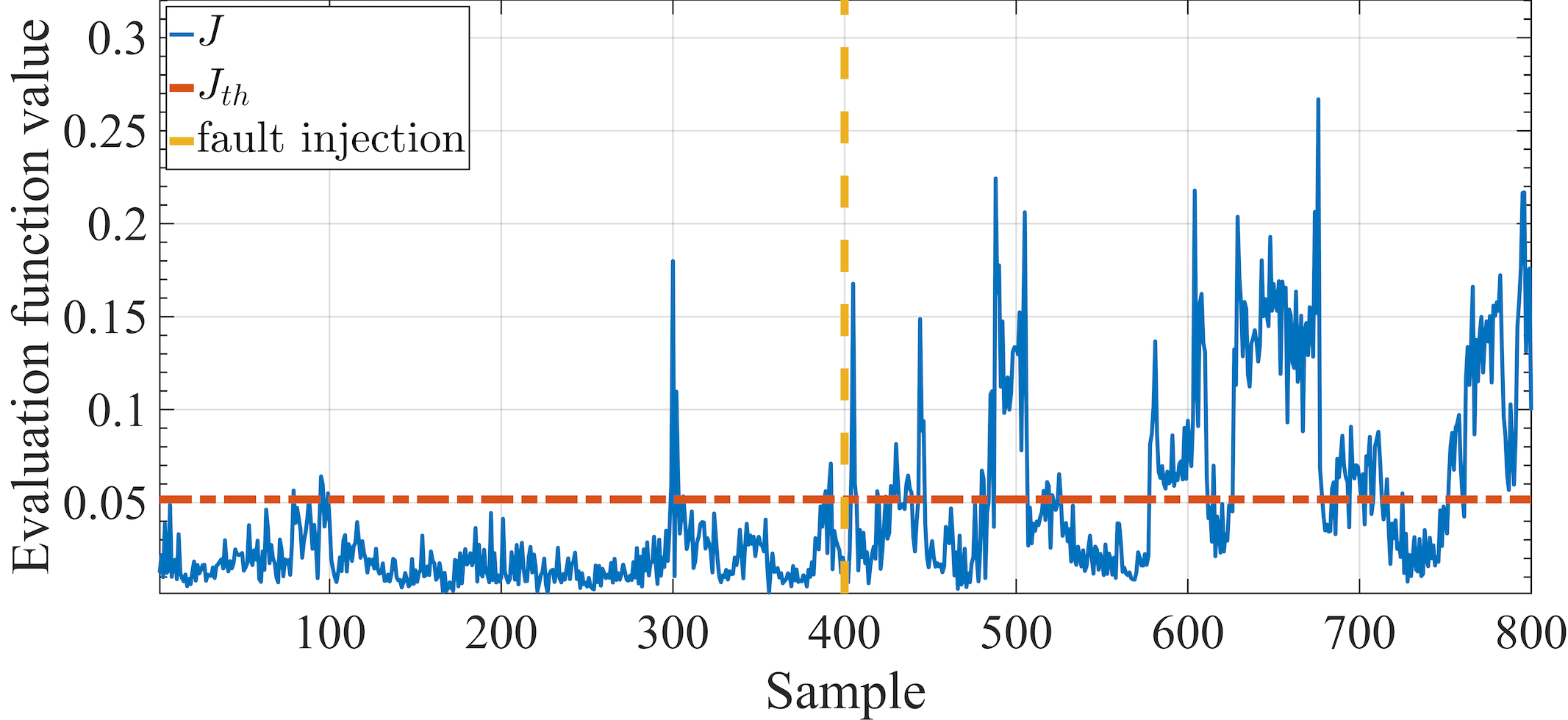}}%
\hspace{0.4in} 
\subfigure[Detection result of
I-AE]{\includegraphics[width=0.45\textwidth]{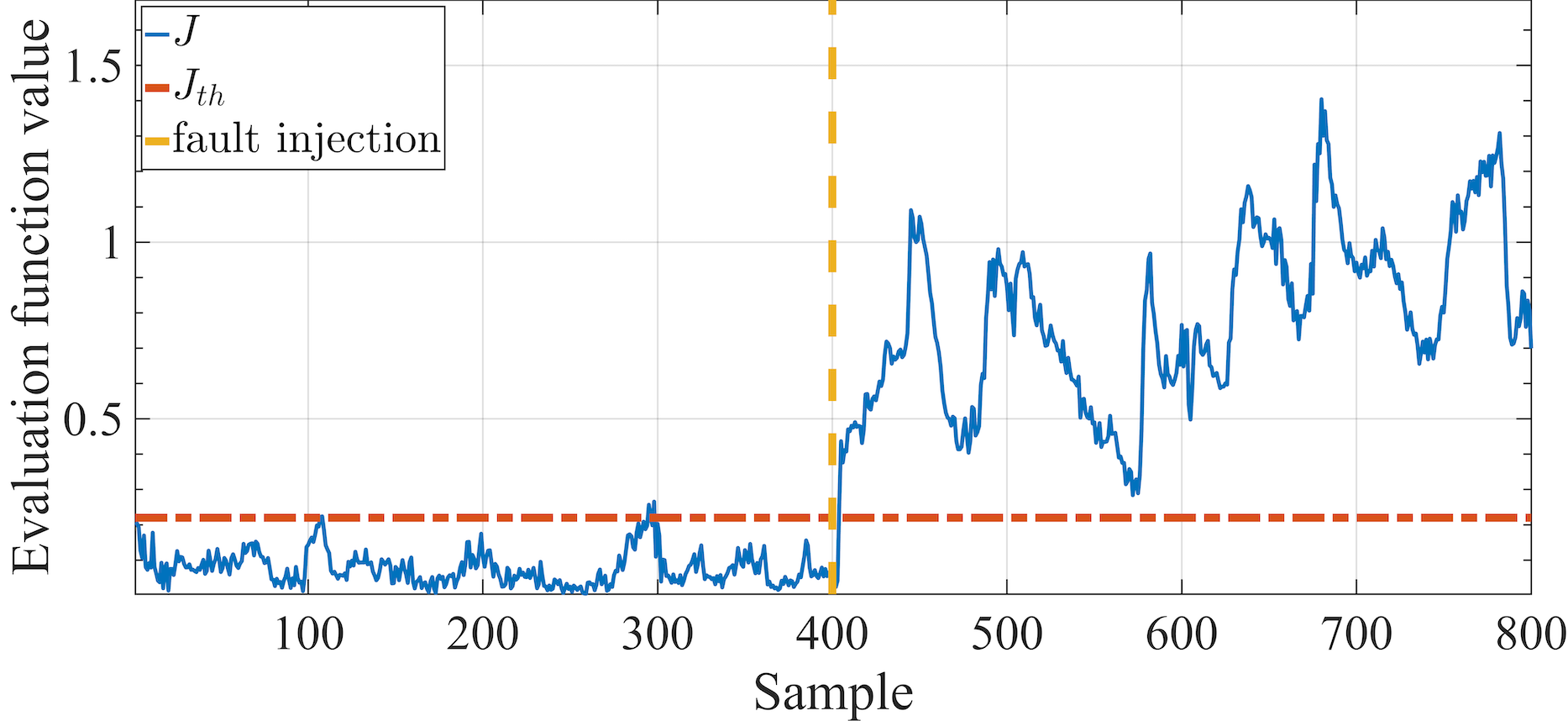}}
\caption{Detection of $5\%$ sensor gain fault in the water level sensor
mounted on tank 2}
\label{fig_fd2}
\end{figure}
\noindent Fig. \ref{fig_fd1} and Fig. \ref{fig_fd2} are examples of fault
detection results. Although these examples show a better detection
performance of I-AE than the one of the conventional AE with $\mathcal{L}%
_{1},$ performance evaluation metric statistics based on the simulation data
are necessary for a reasonable and comparable performance assessment. To
this end, FAR and missed detection rate (MDR) as well as accuracy and
F1-score that are commonly adopted for anomaly detection \citep{fawcett2006}
are under consideration. They are estimated using the detection data as
follows: 
\begin{align}
\text{FAR}& =\frac{N_{\text{FA}}}{N_{\text{FF}}},\text{ MDR}=\frac{N_{\text{%
MD}}}{N_{\text{F}}}, \\
\text{Accuracy}& =\frac{N-N_{\text{FA}}-N_{\text{MD}}}{N},N=N_{\text{F}}+N_{%
\text{FF}}, \\
\text{ F1-score}& =\frac{N_{\text{F}}-N_{\text{MD}}}{N_{\text{F}}-N_{\text{MD%
}}+\frac{1}{2}(N_{\text{FA}}+N_{\text{MD}})},
\end{align}%
where $N_{\text{F}}$ and $N_{\text{FF}}$ are the total numbers of faulty and
fault-free operations, and $N_{\text{FA}}$ and $N_{\text{MD}}$ are the
numbers of false alarms and miss detections, respectively. Thus, $N-N_{\text{%
FA}}-N_{\text{MD}}$ is the total number of correct decisions, $N_{\text{F}%
}-N_{\text{MD}}$ is the number of correctly detected faults, and $N$ is the
total number of the operations.

\begin{Rem}
Let faulty and fault-free operations be positive and negative classes,
respectively. In the framework of anomaly detection \citep{fawcett2006}, $N_{%
\text{F}}$ and $N_{\text{FF}}$ correspond to the total numbers of positive
class and negative class, and $N_{\text{FA}}$ and $N_{\text{MD}}$ are the
numbers of false positives and false negatives, respectively.
\end{Rem}

In Table \ref{tab_fd}, the performance evaluation results are listed for
both AEs. It is obvious that

\begin{itemize}
\item both AEs are at the similar FAR level about $0.05,$ as required,

\item the I-AE is of a considerably lower MDR than the conventional AE with $%
\mathcal{L}_{1},$ and consequently,

\item accuracy and F1-score of the I-AE are remarkably higher than the ones
of the conventional AE with $\mathcal{L}_{1}.$
\end{itemize}

\begin{table}[t]
\caption{Fault detection performance evaluation}
\label{tab_fd}\vspace{0.1in} \center
\renewcommand\arraystretch{1.3} 
\begin{tabular}{m{40pt}m{60pt}m{60pt}}
\hline
& \textbf{AE with $\mathcal{L}_1$} & \textbf{I-AE} \\ \hline
FAR & 0.051$\pm$0.004 & 0.049$\pm$0.007 \\ 
MDR & 0.411$\pm$0.039 & 0.034$\pm$0.002 \\ 
Accuracy & 0.537$\pm$0.038 & 0.915$\pm$0.011 \\ 
F1-score & 0.267$\pm$0.106 & 0.916$\pm$0.013 \\ \hline
\end{tabular}%
\end{table}

\subsection{Test on the role of latent variable}

To illustrate the influence of the proposed regularized terms on the latent
variable in the I-AE and hence to gain a deeper insight into the role of the
latent variable, an ablation study is designed as follows: a third AE, named
AE with $\mathcal{L}_{1}\sim \mathcal{L}_{4},$ is trained under the loss
function\ 
\begin{equation*}
\setlength{\abovedisplayskip}{6pt}\setlength{\belowdisplayskip}{6pt}
\mathcal{L}(\theta _{en},\theta _{de})=\sum_{i=1}^{3}\mathcal{L}_{i}+0.001%
\mathcal{L}_{4},
\end{equation*}%
where $\mathcal{L}_{i},i=1,2,3,$ are the ones given in (\ref{eq3-41})-(\ref%
{eq3-43}), and $\mathcal{L}_{4}$ is an additional regularized term defined by
\begin{align}
& \mathcal{L}_{4}(\theta _{en})=\left\vert \left\Vert \left[ 
\begin{array}{c}
u \\ 
y%
\end{array}%
\right] \right\Vert _{2}-\left\Vert v\right\Vert _{2}\right\vert
\label{eq3-44} \\
& =\left\vert \left\Vert \left[ 
\begin{array}{c}
u \\ 
y%
\end{array}%
\right] \right\Vert _{2}-\left\Vert \mathcal{N}\mathcal{N}_{en}\left( \theta
_{en},\left[ 
\begin{array}{c}
u \\ 
y%
\end{array}%
\right] \right) \right\Vert _{2}\right\vert  \notag
\end{align}%
that minimizes $\left\vert\left\Vert \left[ 
\begin{array}{c}
u \\ 
y%
\end{array}%
\right] \right\Vert _{2}-\left\Vert v\right\Vert _{2}\right\vert$. With $%
\mathcal{L}_{4} $, the latent variable $v$ will tend to preserve all the
information in $u$ and $y$, including uncertainty and redundancy. In other
words, $v$ contains, under such a training condition, the information about
more than the nominal operations. In the information theoretical context,
the latent variable $v$ is no more minimal sufficient statistic. Due to the
large amplitude of $\mathcal{L}_{4}$, it is weighted by a factor $0.001$.
The training loss of AE with $\mathcal{L}_{1}\sim \mathcal{L}_{4}$ is shown
in Fig. \ref{fig_loss3}. 

\begin{figure}[!t]
\centering
\includegraphics[width=0.44\textwidth]{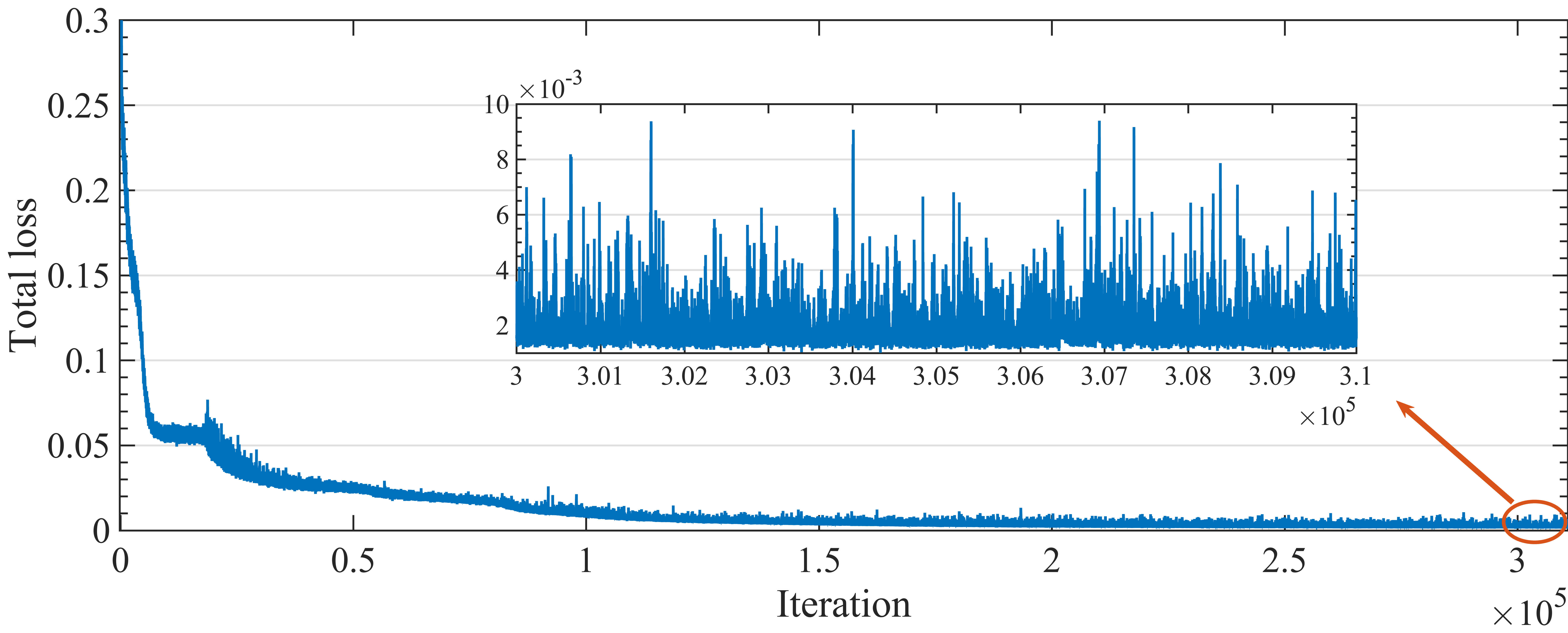}
\caption{Training loss of AE with $\mathcal{L}_1\sim\mathcal{L}_4$}
\label{fig_loss3}
\end{figure}
\begin{figure}[t]
\centering
\subfigure[Fault detection result of AE with
$\mathcal{L}_1$]{\includegraphics[width=0.45\textwidth]{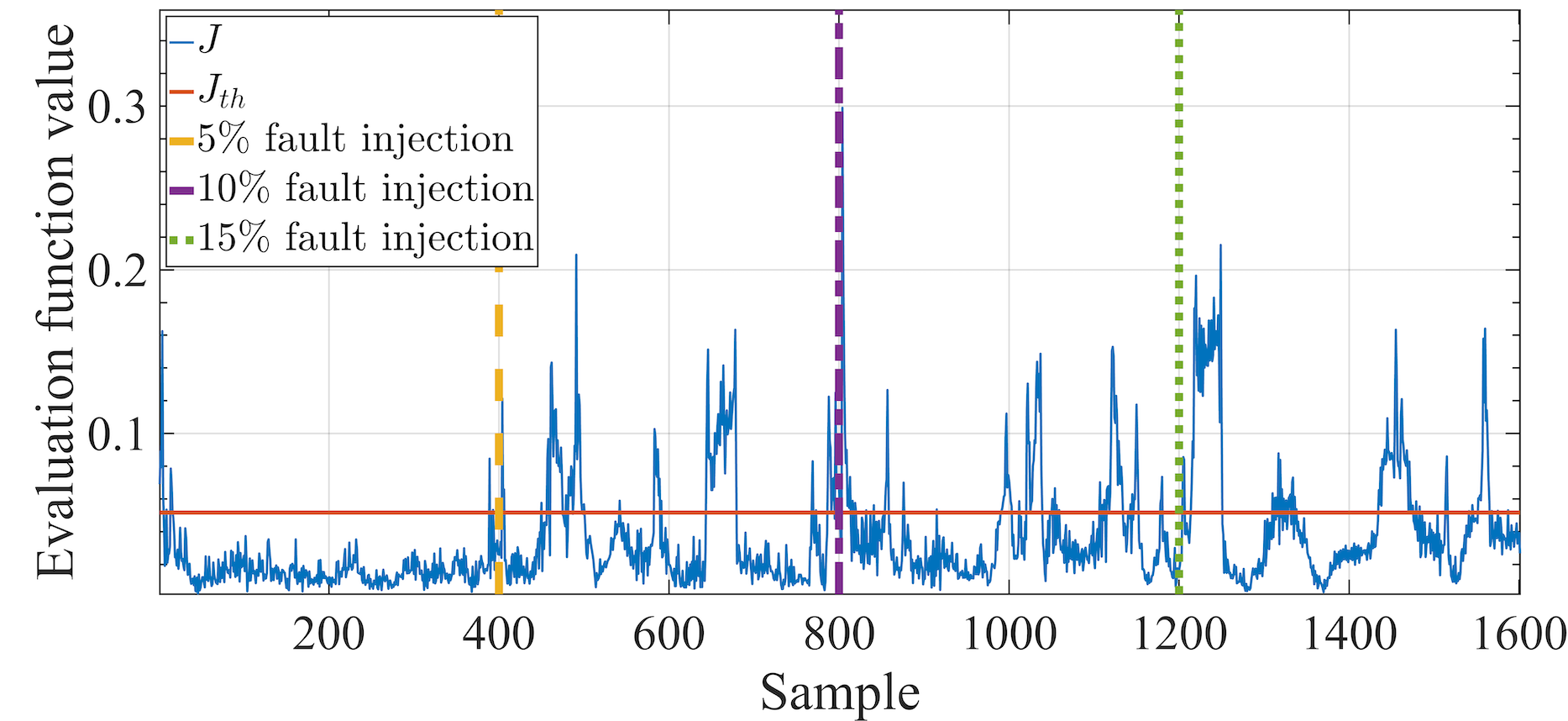}}%
\hspace{0.4in} 
\subfigure[Fault detection result of
I-AE]{\includegraphics[width=0.45\textwidth]{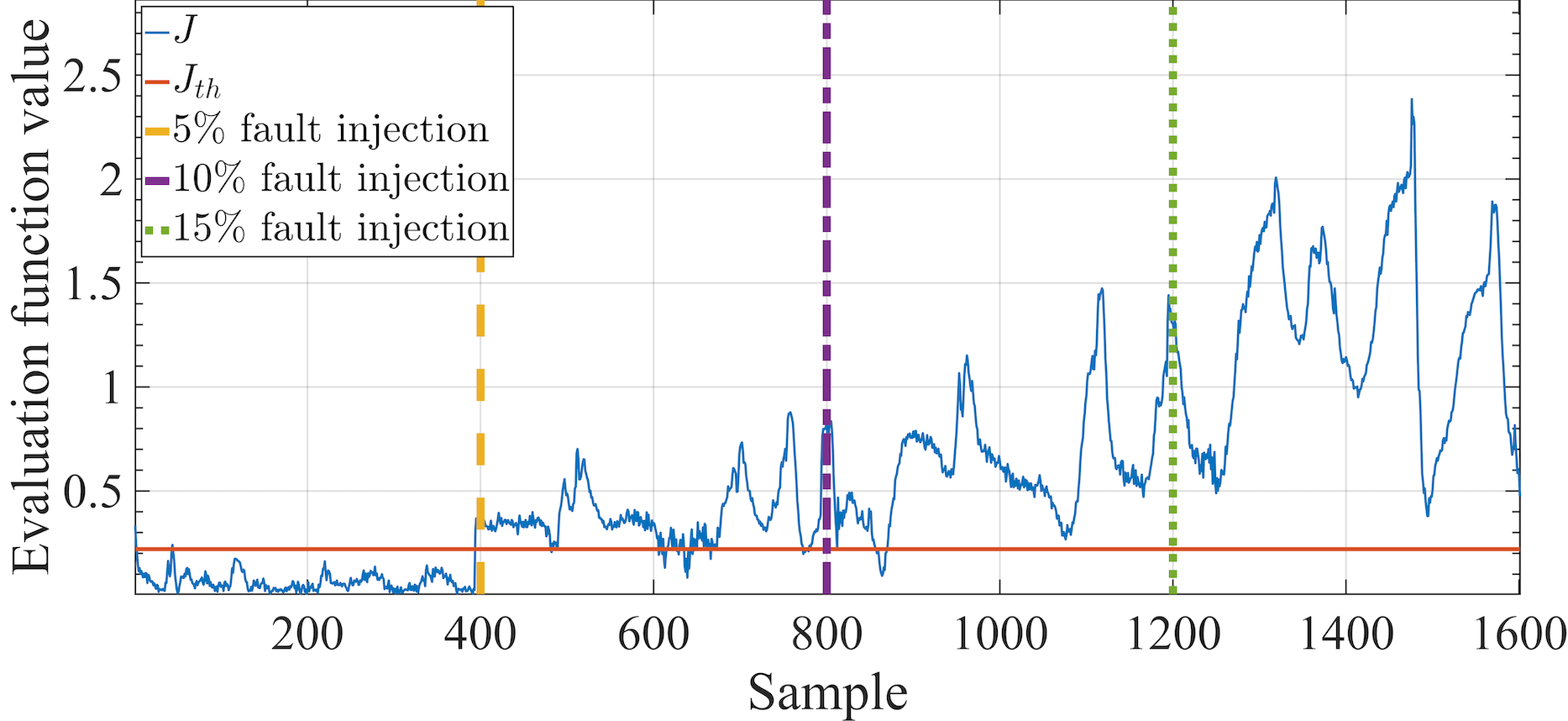}} 
\subfigure[Fault detection result of AE with
$\mathcal{L}_1\sim\mathcal{L}_4$]{\includegraphics[width=0.45\textwidth]{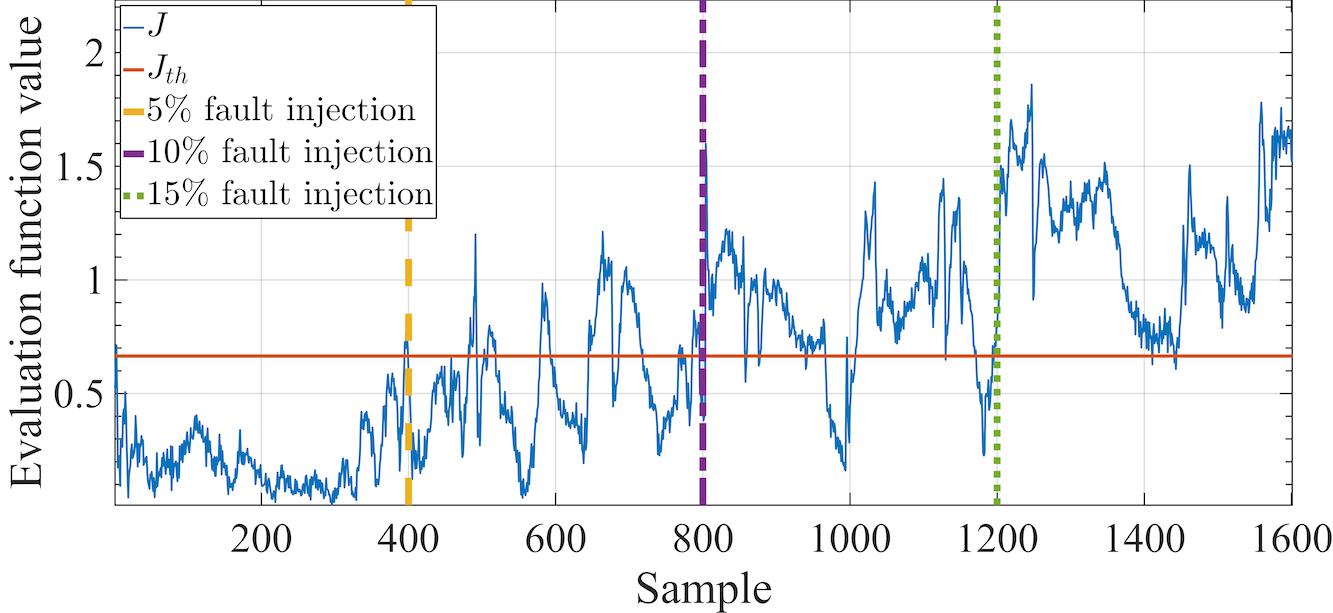}}
\caption{Detection of sensor gain fault in the water level sensor mounted on
tank 2 with stepwise changes from $5\%$ to $10\%$ and $15\%$}
\label{fig_res}
\end{figure}

The loss values of AE with $\mathcal{L}_{1}\sim \mathcal{L}_{4}$ are small
but not stable enough to fully converge, indicating that there is a
contradiction between $\mathcal{L}_{4}$ and the other loss terms. To assess
the fault detection performance and to compare with the other two AEs, the
same tests on detecting the sensor gain fault in the water level sensor
mounted on tank 2 with stepwise changes from $5\%$ to $10\%$ and $15\%$ are
conducted. A test example is shown in Fig. \ref{fig_res}, while the
detection performance evaluation results are summarized in Table \ref%
{tab_fd2}. It can be clearly seen that the introduction of $\mathcal{L}_{4}$
in the loss function results in an AE-based detection system with degraded
fault detectability in comparison with I-AE, the AE trained with a loss
function consisting of the regularized terms $\mathcal{L}_{2}$ and $\mathcal{%
L}_{3}.$ This result impressively demonstrates the role of a latent variable
as a minimal sufficient statistic.

\begin{table}[t]
\caption{Evaluation of fault detection performance}
\label{tab_fd2}\vspace{0.1in} \center
\renewcommand\arraystretch{1.3} 
\begin{tabular}{m{35pt}m{50pt}m{50pt}m{50pt}}
\hline
& \textbf{AE with $\mathcal{L}_1$} & \textbf{I-AE} & \textbf{AE with $%
\mathcal{L}_1$ $\sim\mathcal{L}_4$} \\ \hline
FAR & 0.050$\pm$0.016 & 0.051$\pm$0.012 & 0.052$\pm$0.021 \\ 
MDR & 0.355$\pm$0.031 & 0.036$\pm$0.014 & 0.091$\pm$0.022 \\ 
Accuracy & 0.593$\pm$0.039 & 0.913$\pm$0.020 & 0.856$\pm$0.037 \\ 
F1-score & 0.542$\pm$0.053 & 0.927$\pm$0.016 & 0.875$\pm$0.030 \\ \hline
\end{tabular}%
\end{table}

\subsection{Analysis and summary}

As a summary of the experimental results, the following conclusions can be
drawn:

\begin{itemize}
\item Although the training process of the standard AE shows a higher
convergence rate and lower loss function value in comparison with the
training process of the I-AE, the I-AE demonstrates a remarkably higher
fault detectability, while the requirement on FAR is satisfied.

\item These experimental results verify our theoretical results: the fault
detection system, consisting of (\ref{eq3-17}) as the estimator (residual
generator), (\ref{eq3-18}) as the residual evaluator and (\ref{eq3-19}) as
the threshold, delivers the solution to the optimal fault detection problem
for nonlinear dynamic systems, formulated as maximizing fault detectability
while satisfying the FAR requirement. In this context, the control theoretic
results serve as a meaningful guideline for training the I-AE.

\item The reason behind these impressive results is that the I-AE is an
inner system, which ensures lossless information compression via the latent
variable, and consequently leads to maximal fault detectability, and

\item this is achieved by means of introducing the regularized terms $%
\mathcal{L}_{2}$ and $\mathcal{L}_{3}$ into the loss function, which enables
to train the AE being inner and lossless information compression.

\item It is noteworthy that the role of the latent variable in approaching
the optimal fault detection solution is of indispensable importance. The
latent variable can be interpreted as a minimal sufficient statistic in the
information theoretic framework. Moreover,

\item the I-AE based optimal fault detection system is realized in the
data-driven fashion without process models and analytical solutions of
(partial) differential equations.
\end{itemize}

\noindent We would like to mention that in the experiments, possible impacts
of NN types and structures on AEs and fault detection systems have not been
in the focus of our investigation. It can be expected that further
improvement of fault detection performance could be achieved by targeted
selection of NN types and structures, in particular when complex nonlinear
dynamic systems are under consideration. Furthermore, in our study, the
threshold setting has been realized in a simple and straightforward way.
Also in this regard, improvement of fault detection performance can be
expected, for instance, applying the probabilistic threshold setting methods 
\citep{Xue2020}.

\section{Conclusions\label{sub6}}

In this paper, an AE-based solution of optimal fault detection in nonlinear
dynamic systems has been studied and validated. As often demanded in
engineering applications, optimal fault detection is hereby formulated as
maximal fault detectability subject to FAR requirement. Dynamic systems
considered in this work differ from those objects and processes addressed by
the existing AE-based fault (anomaly) detection and classification methods
generally in their complex dynamics and corrupted hybrid uncertainties. In
control theory, there exist, on the one hand, well-established methods to
deal with such issues. On the other hand, their application to system design
and optimization requires the existence of analytical process models and
solutions of (partial) differential equations. The basic idea of this work
is to fuse the learning capability of NNs and AEs and rich control theoretic
knowledge to approach optimal fault detection in nonlinear dynamic systems
in the data-driven and learning fashion.

In the first part of our work, the coprime factorization technique has been
applied to the establishment of a fault detection framework, including SIR
and image subspace for LTI models, orthogonal projection onto system image
subspace and optimal fault detection system design. In this regard, the
concepts of latent variable for LTI systems and lossless information
compression have been introduced. On the basis of Hamiltonian system theory,
these results and concepts have been extended to nonlinear dynamic systems.
It has been proven that a nonlinear SIR system results in lossless
information compression and delivers an optimal fault detection solution if
it is inner.

Guided by the aforementioned control theoretic results, an AE-based optimal
fault detection system has been realized and validated. The core of this
part of our work is the control theoretically explained training of the AE.
To be specific, two regularized terms have been added in the loss function,
which results in the so-called I-AE, an AE that is inner. It has been
validated that the latent variable in the I-AE is lossless information
compression. A comprehensive data experimental study on the laboratory
TTS20-simulator has impressively demonstrated that

\begin{itemize}
\item the experimental results conform with the theoretic results, namely,
the I-AE-based fault detection system increases fault detectability
considerably in comparison with a standard AE-based detection system, while
satisfying the FAR requirement,

\item by means of the I-AE, the optimal fault detection system is realized
by learning in the data-driven fashion without process models and analytical
solutions of (partial) differential equations,

\item learning the latent variable being lossless information compression is
of indispensable importance to approach the optimal fault detection solution.
\end{itemize}

\noindent In parallel to the aforementioned investigations, information
theoretic aspects of the proposed I-AE have been studied as well. Inspired
by the discussion on information bottleneck in the context of feature
learning, in particular the concept of minimal sufficient statistic as an
optimal latent variable, information relations between the process data and
latent variable have been analyzed by means of mutual information rate. It
has been proven that, on assumption of the defined system model setting, the
mutual information rate of process data and the latent variable is zero.
This result implies that (i) the latent variable is a minimal sufficient
statistic and contains no uncertainties corrupted in the collected process
data, and thus (ii) the maximal fault detectability can be achieved thanks
to the maximal sensitivity of the evaluation function to the faulty
operations.

Our near future work will be devoted to the application study on complex
nonlinear dynamic systems and the extension to fault isolation and
classification. In course of this work, also research efforts will be made
on (i) impact analysis of NN types and structures on fault diagnosis
performance, and (ii) threshold setting in the probabilistic framework.

\bigskip

\textbf{Acknowledgement}: The authors are grateful to Dr. D. Zhao for the intensive and valuable discussions.

\bibliographystyle{model5-names}
\bibliography{ieeepesp}

\end{document}